\newcommand{\Paragraph}[1]{\noindent{\textbf{#1}}}
\newtheorem{theorem}{Theorem}
\newtheorem{lemma}[theorem]{Lemma}
\newtheorem{example}[theorem]{Example}
\newtheorem{definition}[theorem]{Definition}
\newtheorem{openquestion}[theorem]{Open question}
\newcommand{\PSPACE}{\textsc{PSpace}{}}
\newcommand{\EXPSPACE}{\textsc{ExpSpace}{}}
\newcommand{\NLOGSPACE}{\textsc{NLogSpace}{}}
\newcommand{\NP}{\textsc{NP}}
\newcommand{\N}{\mathbb{N}}
\newcommand{\Z}{\mathbb{Z}}
\newcommand{\R}{\mathbb{R}}
\newcommand{\Q}{\mathbb{Q}}
\newcommand{\tuple}[1]{\langle #1 \rangle}
\newcommand{\VASS}{\textsc{VASS}}
\newcommand{\VASSes}{\textbf{VASS}}
\newcommand{\buchi}{B\"{u}chi}
\newcommand{\flimavg}{\textsc{LimAvg}}
\newcommand{\vass}{\mathcal{A}}
\newcommand{\comp}{\pi}
\newcommand{\compCycle}{\pi^C}
\newcommand{\Path}{\rho}
\newcommand{\PathTemp}{\tau}
\newcommand{\gain}{\textsc{Gain}}
\newcommand{\select}{\textsc{Vals}}
\newcommand{\connect}{\alpha}
\newcommand{\cycle}{\beta}
\newcommand{\M}{\mathcal{M}}
\newcommand{\vectorA}{\vec{a}}
\newcommand{\vectorB}{\vec{c}}
\newcommand{\vectorC}{\vec{d}}
\newcommand{\matrixA}{\mathbf{A}}
\newcommand{\matrixB}{\mathbf{B}}
\newcommand{\matrixC}{\mathbf{C}}
\newcommand{\constA}{d}
\newcommand{\constB}{e}
\newcommand{\constC}{h}
\newcommand{\template}{\textsc{Tpl}}
\newcommand{\trans}{\textsc{Trans}}
\newcommand{\OneVector}[1]{\mathbf{1}_{#1}}
\newcommand{\myTwoDimVector}[2]{
    \bigl(\begin{smallmatrix}
#1 \\ #2
\end{smallmatrix} \bigr)
}
\title{Long-Run Average Behavior of Vector Addition Systems with States}
\author{Krishnendu Chatterjee \and Thomas A. Henzinger \and Jan Otop}
\begin{document}
\maketitle
\thispagestyle{plain}
\pagestyle{plain}

\begin{abstract}
	A vector addition system with states (VASS) consists of a finite set of states and counters. 
	A configuration is a state and a value for each counter; 
	a transition changes the state and each counter is incremented, decremented, or left unchanged.
	While qualitative properties such as state and configuration reachability have been studied for VASS, 
	we consider the long-run average cost of infinite computations of VASS. 
	The cost of a configuration is for each state, a linear combination of the counter values. 
	In the special case of uniform cost functions, the linear combination is the same for all states.
	The (regular) long-run emptiness problem is, given a VASS, a cost function, and a threshold value,
	if there is a (lasso-shaped) computation such that the long-run average value of the cost function does not exceed the threshold.
	For uniform cost functions, we show that the regular long-run emptiness problem is 
	(a)~decidable in polynomial time for integer-valued VASS, and 
	(b)~decidable but nonelementarily hard for natural-valued VASS (i.e., nonnegative counters).
	For general cost functions, we show that the problem is 
	(c)~NP-complete for integer-valued VASS, and 
	(d)~undecidable for natural-valued VASS.
	Our most interesting result is for (c) integer-valued VASS with general cost functions, 
	where we establish a connection between the regular long-run emptiness problem and quadratic Diophantine inequalities.
	The general (nonregular) long-run emptiness problem is equally hard as the regular problem in all cases except (c), 
	where it remains open.
\end{abstract}

\section{Introduction}\label{sec-intro}

\noindent{\em Vector Addition System with States (VASS). }
{\em Vector Addition Systems (VASs)}~\cite{KM69} are equivalent to Petri Nets,
and they provide a fundamental framework for formal analysis of parallel processes~\cite{EN94}.
The extension of VASs with a finite-state transition structure gives {\em Vector Addition Systems with States (VASS)}.
Informally, a VASS consists of a finite set of control states and transitions
between them, and a set of $k$ counters, where at every transition between the control states 
each counter is either incremented, decremented, or remains unchanged.
A {\em configuration} consists of a control state and a valuation of each counter, and 
the transitions of the VASS describe the transitions between the configurations.
Thus a VASS represents a finite description of an infinite-state transition system between the configurations.
If the counters can hold all possible integer values, then we call them integer-valued VASS;
and if the counters can hold only non-negative values, then we call them natural-valued VASS.

\smallskip\noindent{\em Modelling power.}
VASS are a fundamental model for concurrent processes~\cite{EN94}, and often used in performance 
analysis of concurrent processes~\cite{DKO13:conc-verification-vass,GM12:asynchronous-verification-TOPLAS,KKW10:dynamic-cutoff-detection,KKW12:coverability-proof-minim}.
Moreover, VASS have been used  
(a)~in analysis of parametrized systems~\cite{Bloem16},
(b)~as abstract models for programs for bounds and amortized analysis~\cite{SZV14}, 
(c)~in interactions between components of an API in component-based 
synthesis~\cite{FMWDR17:component-based-synthesis}.
Thus they provide a rich framework for a variety of problems in verification as well as program analysis.

\smallskip\noindent{\em Previous results for VASS.}
A computation (or a run) in a VASS is a sequence of configurations. 
The well-studied problems for VASS are as follows:
(a)~{\em control-state reachability} where given a set of target control states a computation is successful if 
one of the target states is reached;
(b)~{\em configuration reachability} where given a set of target configurations a computation is successful if 
one of the target configurations is reached. 
For natural-valued VASS,
(a)~the control-state reachability problem is \EXPSPACE-complete: the 
\EXPSPACE-hardness is shown in~\cite{Lipton:PN-Reachability,Esparza:PN}
and the upper bound follows from~\cite{RACKOFF1978223}; and
(b)~the configuration reachability problem is decidable~\cite{DBLP:conf/stoc/Mayr81,Kosaraju82,DBLP:journals/tcs/Lambert92,leroux2012vector}, and a recent breakthrough result shows that
the problem is non-elementary hard~\cite{DBLP:journals/corr/abs-1809-07115}.
For integer-valued VASS,
(a)~the control-state reachability problem is \NLOGSPACE-complete: the counters can be abstracted away and we have to solve 
the reachability problem for graphs; and 
(b)~the configuration reachability problem is \NP-complete: this is a folklore result obtained via reduction to linear 
Diophantine inequalities.

\smallskip\noindent{\em Long-run average property.}
The classical problems for VASS are qualitative (or Boolean) properties where each computation 
is either successful or not.
In this work we consider long-run average property that assigns a real value to each computation. 
A cost is associated to each configuration and the value of a computation is the long-run average
of the costs of the configurations of the computation.
For cost assignment to configuration we consider linear combination with natural coefficients 
of the values of the counters. 
In general the linear combination for the cost depends on the control state of the 
configuration, and in the special case of uniform cost functions the linear combination 
is the same across all states.

\smallskip\noindent{\em Motivating examples.}
We present some motivating examples for the problems we consider.
First, consider a VASS where the counters represent different queue lengths, and each queue consumes 
resource (e.g., energy) proportional to its length, however, for different queues the constant of the 
proportionality might differ. 
The computation of the long-run average resource consumption is modeled as a uniform cost function,
where the linear combination for the cost function is obtained from the constants of proportionality.
Second, consider a system that uses two different batteries, and the counters represent the charge levels.
At different states, different batteries are used, and we are interested in the long-run average 
charge of the used battery. 
This is modeled as a general cost function, where depending on the control state the cost function is 
the value of the counter representing the battery used in the state.

\smallskip\noindent{\em Our contributions.}
We consider the following decision problem:
given a VASS and a cost function decide whether there is a regular (or periodic) computation such 
that the long-run average value is at most a given threshold. 
Our main contributions are as follows:
\begin{enumerate}

	\item For uniform cost functions, we show that the problem is (a) decidable
	      in polynomial time for integer-valued VASS, and (b) decidable but
	      non-elementary hard for natural-valued VASS. In (b) we assume that the cost function depends 
	      on all counters.

	\item For general cost functions, we show that the problem is (a)
	      NP-complete for integer-valued VASS, and (b) undecidable for
	      natural-valued VASS.

\end{enumerate}
Our most interesting result is for general cost functions and
integer-valued VASS, where we establish an interesting connection
between the problem we consider and quadratic Diophantine
inequalities.
Finally, instead of regular computations, if we consider existence of an arbitrary computation, 
then all the above results hold, other than the NP-completeness result for 
general cost functions and integer-valued VASS (which remains open).

\smallskip\noindent{\em Related works.}
Long-run average behavior have been considered for probabilistic VASS~\cite{DBLP:conf/lics/BrazdilKKN15},
and other infinite-state models such as pushdown automata and games~\cite{CV17a,CV17b,AAHMKT14}.
In these works the costs are associated with transitions of the underlying finite-state representation. 
In contrast, in our work the costs depend on the counter values and thus on the configurations.
Costs based on configurations, specifically the content of the stack in pushdown automata, have been considered in~\cite{DBLP:conf/fsttcs/MichaliszynO17}.
Quantitative asymptotic bounds for polynomial-time termination in VASS have also been studied~\cite{BCKNVZ18,Ler18},
however, these works do not consider long-run average property.
Finally, a related model of automata with monitor counters with long-run average property have been considered
in~\cite{CHO16a,CHO16b}.
However, there are some crucial differences: in automata with monitor counters, the cost always depends on 
one counter, and counters are reset once the value is used. 
Moreover, the complexity results for automata with monitor counters are quite different from the results
we establish.   

The full proofs of presented theorems and lemmas have been relegated to the appendix.

\section{Preliminaries}

For a sequence $w$, we define $w[i]$ as the $(i+1)$-th element of $w$ (we start with $0$)
and $w[i,j]$ as the subsequence $w[i] w[i+1] \ldots w[j]$.
We allow $j$ to be $\infty$ for infinite sequences.
For a finite sequence $w$, we denote by $|w|$ its length; and for an
infinite sequence the length is $\infty$.

We use the same notation for vectors.
For a vector $\vec{x} \in \R^k$ (resp., $\Q^k$, $\Z^k$ or $\N^k$), we define $x[i]$ as the $i$-th component of $\vec{x}$.
We define the support of $\vec{x}$, denoted by $supp(\vec{x})$ as the set of components of $\vec{x}$ with non-zero values.
For vectors $\vec{x}, \vec{y}$ of equal dimension, we denote by $\vec{x} \cdot \vec{y}$, the dot-product of $\vec{x}$ and $\vec{y}$.

\subsection{Vector addition systems with states (VASS)}

A $k$-dimensional vector addition system with states (VASS) over $\Z$ (resp., over $\N$), referred to as $\VASS(\Z,k)$ (resp., $\VASS(\N,k)$), is
a tuple $\vass = \tuple{Q, Q_0, \delta}$, where
(1)~$Q$ is a finite set of states,
(2)~$Q_0 \subseteq Q$ is a set of initial states, and
(3)~$\delta \subseteq Q \times Q \times \Z^k$.
We denote by $\VASSes(\Z,k)$ (resp., $\VASSes(\N,k)$) the class of $k$-dimensional VASS over $\Z$ (resp., $\N$).
We often omit the dimension in VASS and write $\VASS(\Z),\VASS(\N), \VASSes(\N),\VASSes(\Z)$ if a definition or an argument is uniform w.r.t. the dimension.

\Paragraph{Configurations and computations}.
A \emph{configuration} of a $\VASS(\Z,k)$ $\vass$ is a pair from $Q \times \Z^k$, which consists of a state and a valuation of the counters.
A \emph{computation} of $\vass$ is an infinite sequence $\comp$ of configurations such that
(a)~$\comp[0] \in Q_0 \times \{\vec{0}\}$\footnote{Without loss of generality, we assume that the initial counter valuation is $\vec{0}$. We can encode any initial configuration in the VASS itself.}, and
(b)~for every $i \geq 0$, there exists $(q,q',\vec{y}) \in \delta$ such that
$\comp[i] = (q,\vec{x})$ and $\comp[i+1] = (q',\vec{x}+\vec{y})$.
A computation of a $\VASS(\N,k)$ $\vass$ is a computation $\comp$ of $\vass$ considered as a $\VASS(\Z,k)$ such that the values of all counters are natural, i.e.,
for all $i >0$ we have $\comp[i] \in Q \times \N^k$.

\Paragraph{Paths and cycles}.
A path $\Path = (q_0,q'_0,\vec{y}_0), (q_1,q_1',\vec{y}_1), \ldots$ in a $\VASS(\Z)$ (resp., $\VASS(\N)$) $\vass$ is a  (finite or infinite) sequence of transitions (from $\delta$)
such that for all $0 \leq i \leq |\Path|$ we have $q'_i = q_{i+1}$.
A finite path $\Path$ is a cycle if $\Path = (q_0,q'_0,\vec{y}_0), \ldots, (q_m,q'_m,\vec{y}_m)$ and $q_0 = q'_m$.
Every computation in a $\VASS(\Z)$ (resp., $\VASS(\N)$) defines a unique infinite path.
Conversely, every infinite path in a $\VASS(\Z)$ $\vass$ starting with $q_0 \in Q_0$ defines a computation in $\vass$.
However, if $\vass$ is a $\VASS(\N,k)$, some paths do not have corresponding computations due to non-negativity restriction posed on the counters.

\Paragraph{Regular computations}. We say that a computation $\comp$ of a $\VASS(\Z)$ (resp., $\VASS(\N)$) is \emph{regular} if it corresponds to a path which is ultimately periodic,
i.e., it is of the form $\alpha \beta^{\omega}$ where $\alpha, \beta$ are finite paths.

\subsection{Decision problems}

We present the decision problems that we study in the paper.

\Paragraph{Cost functions}.
Consider a $\VASS(\Z,k)$ (resp., $\VASS(\N,k)$) $\vass = (Q,Q_0,\delta)$.
A cost function $f$ for $\vass$ is a function $f \colon Q \times \Z^k \to \Z$, which
is linear with natural coefficients, i.e., for every $q$
there exists $\vec{a} \in \N^k$ such that $f(q, \vec{z}) = \vec{a} \cdot \vec{z}$ for all $\vec{z} \in \Z^k$.
We extend cost functions to computations as follows.
For a computation $\comp$ of $\vass$, we define $f(\comp)$ as the sequence $f(\comp[0]), f(\comp[1]), \ldots$.
Every cost function $f$ is given by a labeling $l \colon Q \to \N^k$ by $f(q, \vec{z}) = l(q) \cdot \vec{z}$.
We define the size of $f$, denoted by $|f|$,
as the size of binary representation of $l$ considered as a sequence of natural numbers of the length $|Q|\cdot k$.

\Paragraph{Uniform cost functions}. We say that a cost function $f$ is \emph{uniform}, if it is given by a constant function $l$, i.e.,
for all states $q,q'$ and $\vec{z} \in \Z^k$ we have $f(q, \vec{z}) =f(q', \vec{z})$.
Uniform cost functions are given by a single vector $\vec{a} \in \N^k$.

\Paragraph{The long-run average}.
We are interested in the long-run average of the values returned by the cost function, which is formalized as follows.
Consider an infinite sequence of real numbers $w$.
We define \[
	\flimavg(w) = \liminf_{k \to \infty} \frac{1}{k+1} \sum_{i=0}^k w[i].
\]

\begin{definition}[The average-value problems]
	Given a $\VASS(\Z,k)$ (resp., $\VASS(\N,k)$) $\vass$, a cost function $f$ for $\vass$, and a threshold $\lambda \in \Q$,
	\begin{itemize}
		\item the \textbf{average-value} problem (resp., the \textbf{regular average-value} problem) asks whether $\vass$ has a computation (resp., a regular computation) $\comp$ such that $\flimavg(f(\comp)) \leq \lambda$, and
		\item the \textbf{finite-value} problem (resp., the \textbf{regular finite-value} problem) asks whether $\vass$ has a computation (resp., a regular computation) $\comp$ such that $\flimavg(f(\comp)) < \infty$.
	\end{itemize}
\end{definition}

The following example illustrates the regular average-value problem for $\VASSes(\Z)$.

\begin{figure}
	\begin{center}
		\begin{tikzpicture}
			\tikzstyle{state}=[circle,draw, minimum size=0.6cm,inner sep=0cm,scale=0.85]

			\node[state] (A) at (0,0) {A};
			\node[state] (B) at (2,0) {B};
			\node[state] (C) at (4,0) {C};

			\draw[->,bend right] (B) to node[above] {$e_1,\myTwoDimVector{1}{0}$} (A);
			\draw[->,bend right] (A) to node[below] {$e_2,\myTwoDimVector{0}{-1}$} (B);
			\draw[->,bend left] (B) to node[above] {$e_3,\myTwoDimVector{0}{3}$} (C);
			\draw[->,bend left] (C) to node[below] {$e_4,\myTwoDimVector{-2}{0}$} (B);

			\draw[->] (2,0.5) to (B);

			\begin{scope}[xshift=7cm,yshift=0.5cm]
				\node[] at (0,0) {$A \mapsto \myTwoDimVector{4}{0}$};
				\node[] at (0,-0.5) {$B \mapsto \myTwoDimVector{1}{1}$};
				\node[] at (0,-1) {$C \mapsto \myTwoDimVector{0}{1}$};
			\end{scope}

		\end{tikzpicture}
	\end{center}
	\caption{The VASS $\vass_e$ and its labeling}
	\label{fig:example}
\end{figure}
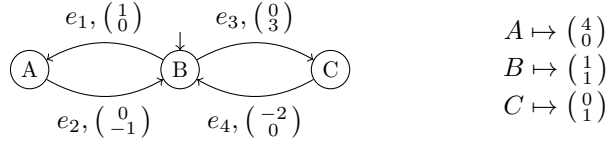

\begin{example}
	\label{ex:runningOne}
	Consider the $\VASS(\Z,2)$ $\vass_e = \tuple{\{A,B,C\}, \{B\}, \{e_1, e_2,e_3,e_4\}}$ depicted in Figure~\ref{fig:example} and
	a (non-uniform) cost function $f$ is given by the labeling from Figure~\ref{fig:example}.
	$A \mapsto \myTwoDimVector{4}{0}$,

	Consider an infinite path  $(e_1 e_2 e_3 e_4)^{\omega}$ that defines the regular computation $\comp_0$.
	This computation can be divided into blocks of 4 consecutive configurations. The $(i+1)$-th block has the following form
	\[
		\ldots (B,\myTwoDimVector{-i}{2i}) (A,\myTwoDimVector{-i+1}{2i}) (B,\myTwoDimVector{-i+1}{2i-1}) (C,\myTwoDimVector{-i+1}{2i+2})
		\ldots
	\]
	and the corresponding values of $f(\comp)$ are:
	\[
		\ldots\quad i\quad -4i+4\quad\quad i\quad\quad 2i+2\quad \ldots
	\]
	Therefore, the sum of values over each  block is $6$ and hence $\flimavg(f(\comp_0)) = \frac{3}{2}$.

	The answer to the (regular) average-value problem with any threshold is YES. Consider $j \in \N$ and a path
	$(e_1e_2)^j(e_1 e_2 e_3 e_4)^{\omega}$.
	Observe that it defines a regular computation $\comp_j$, which after the block $(e_1 e_2)^j$ coincides with $\comp_0$ with all counter values shifted by $\myTwoDimVector{-2j}{3j}$.
	Note that in each block $e_1 e_2 e_3 e_4$, the value $f$ on the vector $\myTwoDimVector{-2j}{3j}$ are $j,-8j,j,3j$.
	Therefore, $\flimavg(f(\comp_j)) = \flimavg(f(\comp_0)) + \frac{-3j}{4} = \frac{-3j-6}{4}$.
	It follows that for every threshold $\lambda \in \Q$, there exists a regular computation $\comp_j$ with $\flimavg(f(\comp_j)) \leq \lambda$.
\end{example}

\Paragraph{Organization}.
In this paper, we study the (regular) average-value problem for $\VASSes(\Z,k)$ and $\VASSes(\N,k)$.
First, we study the average-value problem for uniform cost functions (Section~\ref{s:uniform}).
Next, we consider the non-uniform case, where we focus on $\VASSes(\Z)$ (Section~\ref{s:integer}).
We start with solving the (regular) finite-value problem, and then we move to the regular average-value problem.
We show that both problems are $\NP$-complete.
Next, we discuss the non-uniform case for $\VASSes(\N)$ and we show that the regular finite-value problem is decidable (and non-elementary), while
the (regular) average-value problem is undecidable.

\section{Uniform cost functions}
\label{s:uniform}
In this section we study the average-value problem for $\VASSes(\Z)$ and $\VASSes(\N)$ for the uniform cost functions.

\subsection{Integer-valued VASS: $\VASSes(\Z)$}

Consider a $\VASS(\Z,k)$ $\vass$ and a uniform cost function $f$ for $\vass$.
This function is defined by a single vector of coefficients $\vec{a} \in \N^k$.
First, observe that we can reduce the number of the counters to one, which tracks the current cost.
This counter $c$ stores the value of $f(\pi[i])= \vec{a} \cdot \vec{z}$, where $\vec{z}$ are the values of counters.
Initially, the counter $c$ is $0 = \vec{a} \cdot \vec{0}$.
Next, in each step $i>0$, counters $\vec{z}$ are updated with some values $\vec{y}$ and we update $c$ with $\vec{a} \cdot \vec{y}$.
Note that the value of $f(\comp[i+1]) = \vec{a} \cdot (\vec{z} + \vec{y})$ equals $f(\comp[i]) + \vec{a} \cdot \vec{y}$.

Second, observe that the average-value problem for $\VASSes(\Z,1)$ and uniform cost functions is equivalent to
single-player average energy games (with no bounds on energy levels), which are solvable in polynomial time~\cite{DBLP:journals/acta/BouyerMRLL18}.
Moreover, average-energy games (with no bounds on energy levels) admit memoryless winning strategies and hence
the average-value and the regular average-value problems coincide.
In consequence we have:

\begin{theorem}
	The average-value and the regular average-value problems for $\VASSes(\Z)$ and uniform cost functions are decidable in polynomial time.
\end{theorem}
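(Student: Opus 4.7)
The plan is to follow the two-step reduction sketched in the paragraph preceding the theorem: first collapse the $k$-counter instance to a $1$-counter instance, then invoke the polynomial-time algorithm for single-player average-energy games.

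For the first step, given a $\VASS(\Z,k)$ $\vass = \tuple{Q,Q_0,\delta}$ and a uniform cost vector $\vec{a}\in\N^k$, I would build a $\VASS(\Z,1)$ $\vass'$ on the same state space by replacing each transition $(q,q',\vec{y})\in\delta$ with $(q,q',\vec{a}\cdot\vec{y})$, and take the new uniform cost function to be the identity (coefficient $1$). The construction is polynomial in $|\vass|+|f|$ because $\vec{a}\cdot\vec{y}$ is computable in polynomial time from the binary encodings. The key correctness claim is that if the single counter of $\vass'$ is denoted $c$, then along any path shared by $\vass$ and $\vass'$ one has $c[i] = \vec{a}\cdot\vec{z}[i] = f(\pi[i])$ by induction on $i$ (using $c[0]=0=\vec{a}\cdot\vec{0}$ and the linearity of the update). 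Thus the $\flimavg$ values of $f$ over $\vass$ and of the identity cost over $\vass'$ agree, transition-path by transition-path, and the (regular) average-value problem on $(\vass,f)$ has exactly the same YES/NO answer as on $\vass'$ with its identity cost.

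For the second step, I would phrase the resulting problem as a one-player average-energy problem: on the finite graph underlying $\vass'$, each edge carries an integer weight (the counter update), and one asks whether there is an infinite path from $Q_0$ whose long-run average partial sum is at most $\lambda$. This is exactly the single-player version of the average-energy objective $\mathsf{AE}$ studied by Bouyer, Markey, Randour, Larsen and Laursen in \cite{DBLP:journals/acta/BouyerMRLL18}, in the variant with no lower/upper energy bound on the running sum. I would cite their result that this single-player problem is solvable in polynomial time and that optimal strategies can be chosen memoryless, i.e., they follow a lasso-shaped path.

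The third step is to bridge the two notions of computation. Because memoryless strategies suffice, any optimal computation in the average-energy game is realized by a finite prefix followed by a cycle, which is a regular computation of $\vass'$ and hence of $\vass$. So the infimum of $\flimavg$ over all computations equals the infimum over regular computations, and both the average-value and the regular average-value problem reduce to the same polynomial-time check against threshold $\lambda$. The main thing to be careful about is to make sure the reduction preserves the threshold exactly (no rescaling is needed because both cost functions return the same number on the same configuration) and that the initial-configuration convention (counters start at $\vec{0}$, so $c$ starts at $0$) matches the average-energy game setup; there is no genuine obstacle beyond this bookkeeping.
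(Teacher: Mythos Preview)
Your proposal is correct and follows essentially the same approach as the paper: the paper also collapses the $k$ counters to a single counter tracking $\vec{a}\cdot\vec{z}$, identifies the resulting $\VASS(\Z,1)$ problem with single-player average-energy games without energy bounds, and invokes \cite{DBLP:journals/acta/BouyerMRLL18} both for the polynomial-time bound and for the memoryless-strategy fact that makes the regular and non-regular problems coincide. Your write-up simply spells out the bookkeeping (the inductive invariant $c[i]=\vec{a}\cdot\vec{z}[i]$ and the threshold preservation) a bit more explicitly than the paper does.
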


\subsection{Natural-valued VASS: $\VASSes(\N)$}

For natural-valued VASS, we cannot reduce the number of counters to one; we need to track all counters to make sure that all of them have non-negative values.
Moreover, we show that the average-value problem for (single counter) $\VASSes(\N,1)$ is in $\PSPACE$ while the average-value problem for $\VASSes(\N)$ is nonelementary.

First, the average-value problem for (single counter) $\VASSes(\N,1)$ in the uniform case is equivalent to single-player average energy games with non-negativity constraint on the energy values.
The latter problem is in $\PSPACE$ and it is $\NP$-hard~\cite{DBLP:journals/acta/BouyerMRLL18}.

In the multi-counter case, we show that the average-value problem is mutually reducible to the configuration reachability problem for VASS, which has
recently been shown nonelementary hard~\cite{DBLP:journals/corr/abs-1809-07115}.
We additionally assume that the cost function depends on all its arguments, i.e., all coefficients are non-negative. This assumption allows us show that if there is a computation of the average value below some $\lambda$, then there is one, which is lasso-shaped and the cycle in the lasso has exponential size in the size of the VASS and $\lambda$.
Therefore, we can non-deterministically pick such a cycle and check whether it is reachable from the initial configuration.

\begin{theorem}
	\label{th:decidable-average-N}
	The average-value and the regular average-value problems for $\VASSes(\N)$ and uniform cost functions with non-zero coefficients are decidable and mutually reducible to the configuration reachability problem for $\VASSes(\N)$.
\end{theorem}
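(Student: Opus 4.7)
The plan is to establish both directions of the mutual reduction. For the forward direction from the (regular) average-value problem to configuration reachability, I first prove a small-model property: any witness $\pi$ with $\flimavg(f(\pi)) \le \lambda$ can be replaced by a lasso $\alpha\beta^\omega$ where the cycle $\beta$ has size exponential in $|\vass|$ and the encoding of $\lambda$, and has zero net effect on every counter. The non-zero coefficient assumption is crucial here: since every $a_i > 0$, the cost $f(q,\vec z)=\vec a\cdot\vec z$ dominates $\|\vec z\|_\infty$ up to a multiplicative constant, so bounded average cost forces bounded average counter values. Consequently, any cycle that recurs infinitely often in the witness must have net counter effect exactly $\vec 0$: strictly positive effect in some coordinate would drive the cost to infinity at a linear rate, while strictly negative effect would eventually violate $\N$-nonnegativity.

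Equipped with the small-model lemma, the algorithm nondeterministically guesses a candidate cycle $\beta$ of bounded size together with a starting configuration $(q,\vec z)$; verifies the local conditions that $\beta$ has zero net effect, that all intermediate counter values along $\beta$ stay non-negative, and that the total cost along $\beta$ is at most $\lambda\cdot|\beta|$; and finally queries the configuration reachability oracle for $\VASSes(\N)$ to decide whether $(q,\vec z)$ is reachable from an initial configuration. Since $\beta$ has zero effect, iterating it from $(q,\vec z)$ produces an infinite computation whose long-run average equals the average cost along $\beta$, which yields a lasso witness. This also shows that the regular and general variants of the average-value problem coincide in this setting.

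For the converse reduction from configuration reachability to the average-value problem, given $\vass\in\VASSes(\N,k)$ and target configuration $(q_t,\vec v)$, I construct $\vass'$ with $k+1$ counters: the original $k$ counters together with a step counter $c$ incremented by every simulated transition of $\vass$. From $q_t$ I add a transition decrementing the original counters by $\vec v$ and leading to a fresh state $s$, at which I place two self-loops, one decrementing $c$ by one and one with zero effect. Take the uniform cost function with all coefficients equal to $1$ and threshold $\lambda=0$. Any infinite computation that never enters the $q_t$-gadget has $c$ growing linearly and hence infinite average cost; any computation entering the gadget with some counter strictly above $\vec v$ leaves a positive residue that keeps the cost bounded away from zero forever; and only a computation that reaches exactly $(q_t,\vec v)$ in $\vass$, drains $c$ to zero at $s$, and then loops with zero cost achieves $\flimavg=0$.

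The main obstacle is proving the small-model bound on $\beta$. One must simultaneously shorten the cycle to exponential size, preserve the zero net effect property, maintain non-negativity of all intermediate counter values along the path, and control the sum of costs. The non-negativity constraint is what makes the usual Euler-tour decomposition of zero-effect cycles delicate: trimming an elementary subcycle can push later counter values below zero unless subcycles are reordered carefully, and this reordering together with the pumping-style shortening is the technically loaded step of the proof.
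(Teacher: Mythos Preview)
Your proposal is correct and follows essentially the same two reductions as the paper, including the same use of an extra counter for the hardness direction and a small-cycle witness plus a reachability oracle for decidability. The one substantive difference is in how the small-model bound is obtained, and here the paper's argument is considerably simpler than what you anticipate as the ``main obstacle.''

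The paper works throughout with \emph{configuration cycles}, i.e., segments of a computation that start and end in the same pair $(q,\vec z)$, rather than with state-level cycles of zero net effect. This makes the shortening step trivial: if a minimal-length configuration cycle of average at most $M=\lambda+\epsilon$ contained a repeated configuration, the induced sub-segment would itself be a configuration cycle; either its average is $\le M$ (contradicting minimality) or it is $>M$ and can be excised, and excising a configuration cycle leaves every remaining configuration literally unchanged, so non-negativity is preserved automatically. No reordering of elementary subcycles or Euler-tour reasoning is needed. The length bound then follows by counting: positions with cost $\le 2M$ have every counter bounded by $2M$ (here the non-zero-coefficient hypothesis is used), so there are at most $|Q|\cdot(2M)^k$ such configurations; positions with cost $>2M$ are at most as many as those with cost $<M$; hence the minimal cycle has length at most $3|Q|\cdot(2M)^k$. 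Your plan reaches the same conclusion, but the difficulty you flag---maintaining non-negativity while trimming---evaporates once you phrase everything at the configuration level.
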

\begin{proof}
	We present both directions of the reduction.\smallskip

	\noindent\emph{Hardness}.
	Consider a $\vass_0 \in \VASSes(\N,k)$ and two configurations $(q,\vec{v}), (q',\vec{v}')$.
	Without loss of generality, we assume that $\vec{v} = \vec{v}' = \vec{0}$.
	We construct a $\VASS(\N,k+1)$ $\vass_1$ build as an extension of $\vass_0$ with two additional states $q_S, q_F$ and an additional counter $c_{k+1}$.
	The state $q_S$ is the only initial state with a single outgoing transition, which goes to $q$, where only the counter $c_{k+1}$ is incremented, and other counters are unchanged.
	Then, the counter $c_{k+1}$ is not changed except for the state $q_F$.
	The state $q_F$ is a sink with an incoming transition $(q',q_F,\vec{0})$ and two loops over $q_F$.
	One loop does not change the counters and the other that decrements $c_{k+1}$ (and does not change the other counters).
	The cost function returns the sum of counters.
	Observe that every computation of the average-value $0$ has to reach $q_F$ with first $k$ counters being $0$.
	Such a computation exists  if and only if $(q',\vec{0})$ is reachable in $\vass_0$ from $(q,\vec{0})$.
	\smallskip

	\noindent\emph{Decidability}.
	We can solve the average-value problem for $\VASSes(\N)$ and uniform cost functions with non-zero coefficients
	in non-deterministic exponential time with a single query to the VASS-reachability oracle.

	Consider a $\VASS(\N,k)$ $\vass$ and assume it has a computation $\comp$ of the average-value not exceeding $\lambda$.
	Then, for every $\epsilon >0$, the computation $\comp$ contains a cyclic computation  (a cycle w.r.t. the state and the valuation of counters)
	of the average value $M = \lambda+\epsilon$.
	Let $\compCycle$ be a cyclic computation of the minimal length among computations of the average value $M$.
	Consider three sets of positions in $\compCycle$:
	\begin{enumerate}[label=(\alph*)]
		\item $N_0$, where the value of $f$ is less than $M$,
		\item $N_1$, where the value of $f$ belongs to $[M,2M]$, and
		\item $N_2$, where the value of $f$ exceeds $2M$.
	\end{enumerate}
	Note that $|N_2| \leq |N_0|$, as otherwise the average value exceeds $M$.
	Since $\compCycle$ has the minimal-length, it does not contain a subcycle computation. A subcycle computation has either the average less-or-equal $M$, and hence it is shorter than $\compCycle$, or it has the average greater than $M$, and hence it can be removed.
	Therefore, all positions in $\compCycle$ are distinct. Finally, since
	all coefficients are non-zero, the value of each counter is bounded by $M$,  and hence $|N_0|, |N_1| \leq |Q|\cdot (2M)^k$.
	In consequence, the length of $\compCycle$ is bounded by $3|Q|\cdot (2M)^k$.

	This has two consequences. First, there are finitely many such cyclic computations and hence there exists a cyclic computation of the value at most $\lambda$ and length $3|Q|\cdot (2\lambda)^k$.
	Second, to decide the average-value problem for an $\VASS(\N,k)$, we can non-deterministically generate a cyclic computation of length bounded by $3|Q|\cdot (2\lambda)^k$,
	verify that (a)~its average does not exceed $\lambda$, and (b)~it is reachable from some initial configuration.

	Observe that the above procedure implies that the average-value and the regular average-value problems coincide for $\VASSes(\N)$ under
	uniform cost functions with non-zero coefficients.
\end{proof}

We have used in the proof the fact that $f$ depends on all counters. We conjecture that this assumption can be lifted:

\begin{openquestion}
	Is the average-value problem for $\VASSes(\N)$ and uniform cost functions decidable?
	\label{open:natural}
\end{openquestion}

\section{General cost functions and $\VASSes(\Z)$}
\label{s:integer}

\newcommand{\matrixIN}{\mathbf{B}}

First, we consider (an extension of) the regular finite-value problem for $\VASSes(\Z)$ (Section~\ref{sec:finite-average}).
We show that one can decide in non-deterministic polynomial time
whether a given VASS has a regular computation
(a)~of the value $-\infty$, or
(b)~of some finite value.
To achieve this, we introduce path summarizations, which allow us to state conditions that entail (a) and respectively (b).
We show that these conditions can be checked in $\NP$.

We apply the results from Section~\ref{sec:finite-average} to solve the regular average-value problem (Section~\ref{sec:regular-value}).
We consider only VASS that have some computation of a finite value, and no computation of the value $-\infty$, i.e., the answer to (a) is NO and the answer to (b) is YES.
In other cases we can easily answer to the regular average-value problem.
If the answer to (a) is YES, then for any threshold we answer YES.
If the answers to (a) and (b) are NO, then for any threshold we answer NO.
Finally, we show $\NP$-hardness of the regular finite-value and the regular average-value problems (Section~\ref{sec:np-hardness}).

The main result of this section is the following theorem:

\begin{theorem}
	\label{th:IntegerMain}
	The regular finite-value and the regular average-value problems for $\VASSes(\Z)$ with (general) cost functions are $\NP$-complete.
\end{theorem}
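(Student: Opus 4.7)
The plan is to prove the theorem in four stages: first derive an explicit formula for the long-run average along a lasso, next reformulate the two problems as satisfiability of Diophantine systems over path Parikh images, then argue NP membership via a small-witness argument for these systems, and finally establish NP-hardness by reduction from a standard NP-complete problem.

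\textbf{Lasso analysis.} Fix a lasso $\alpha\beta^\omega$ and let $\vec{s}=\sum_{t\in\alpha}\vec{y}_t$ be the counter valuation on entering the cycle. Writing $\beta$ as transitions $(r_0,r_1,\vec{y}^\beta_1),\ldots,(r_{m-1},r_0,\vec{y}^\beta_m)$, the cost contributed by the $k$-th iteration of $\beta$ (with $k\ge 0$) sums to $L\cdot\vec{s}+k\,L\cdot\vec{y}_\beta+C$, where $L=\sum_{j=0}^{m-1}l(r_j)$, $\vec{y}_\beta=\sum_{i=1}^{m}\vec{y}^\beta_i$, and $C=\sum_{j=0}^{m-1}l(r_j)\cdot\vec{u}_j$ with $\vec{u}_j=\sum_{i=1}^{j}\vec{y}^\beta_i$. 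Accumulating over $K$ iterations and dividing by $Km$ yields a quadratic-in-$K$ expression, from which a trichotomy follows: $\flimavg(f(\alpha\beta^\omega))=+\infty$ iff $L\cdot\vec{y}_\beta>0$, $=-\infty$ iff $L\cdot\vec{y}_\beta<0$, and $=(L\cdot\vec{s}+C)/m$ iff $L\cdot\vec{y}_\beta=0$. Thus the regular finite-value problem asks for a lasso with $L\cdot\vec{y}_\beta\le 0$, while the regular average-value problem with threshold $\lambda$ asks for a lasso with either $L\cdot\vec{y}_\beta<0$, or $L\cdot\vec{y}_\beta=0$ and $L\cdot\vec{s}+C\le\lambda m$.

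\textbf{NP membership.} I would encode a candidate lasso by Parikh images $\phi_\alpha,\phi_\beta\in\N^{|\delta|}$ together with a polynomial-size witness of the flow-plus-connectivity conditions guaranteeing that $\phi_\alpha$ realises an initial path ending at some $q^\ast$ and $\phi_\beta$ realises a closed walk through $q^\ast$; these conditions are standard and expressible as a linear Diophantine system once a spanning substructure is guessed. Under this encoding $\vec{s}$, $\vec{y}_\beta$ and $L$ are linear in the Parikh images, while $L\cdot\vec{y}_\beta$ and $L\cdot\vec{s}$ are quadratic. To bring the order-dependent term $C$ into the picture, I would decompose $\beta$ into a fixed base closed walk through its support plus multiplicities of simple sub-cycles inserted at chosen states of the base; each insertion contributes affinely to $\vec{y}_\beta$, to $L$, and (per insertion location) to $C$, so the full constraint system becomes a quadratic Diophantine (in)equality in the sub-cycle multiplicities and Parikh counts, of the restricted bilinear shape ``label vector $\times$ effect vector''. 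Invoking a small-solution bound tailored to this shape then produces a polynomial-size NP certificate consisting of the guessed support, base walks, and multiplicities, which is verified in \PTIME.

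\textbf{NP-hardness.} For the lower bound I would reduce from a standard NP-hard problem such as \textsc{SubsetSum}, building a $\VASS(\Z)$ with a diamond gadget per input item whose two branches correspond to including or excluding the item and arrange counter increments so that choosing a subset summing to the given target is equivalent to the existence of a closing cycle with $L\cdot\vec{y}_\beta=0$ and $L\cdot\vec{s}+C$ below a threshold. By tuning the threshold, the same construction yields hardness for both the regular finite-value and the regular average-value problems.

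\textbf{Main obstacle.} The hard step is NP membership in the second paragraph, specifically controlling the order-dependent term $C$ through sub-cycle insertion and establishing that the resulting quadratic Diophantine system admits a polynomial-size witness. Quadratic Diophantine feasibility is in general far harder than NP, so I must exploit the bilinear ``label $\times$ effect'' structure that the VASS semantics forces on the cross terms; this is precisely the connection to quadratic Diophantine inequalities advertised in the abstract, and it is where the bulk of the technical work in Section~\ref{sec:finite-average} and Section~\ref{sec:regular-value} should go.
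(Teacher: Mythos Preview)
Your lasso analysis is correct and matches the paper's Lemma~\ref{l:cycles} and Lemma~\ref{l:conditions-avg}. The hardness sketch from \textsc{SubsetSum} is different from the paper's 3-SAT reduction (Lemma~\ref{l:np-hardness}) but equally plausible in principle.

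The genuine gap is in NP membership for the regular average-value problem. You correctly reduce to a Diophantine system, but then appeal to an unspecified ``small-solution bound tailored to this shape''. The difficulty is that your system carries \emph{two} quadratic constraints simultaneously: the balancedness equation $L\cdot\vec{y}_\beta=0$ and the cost inequality $L\cdot\vec{s}+C\le\lambda m$. Integer quadratic programming with a \emph{single} quadratic inequality and linear side constraints is in \NP{} by~\cite{del2017mixed}, and this is exactly what the paper uses for the finite-value problem (Lemma~\ref{l:MatrixNegative}). But systems with two quadratic constraints are not known to be in \NP{}; the paper flags this explicitly, and no generic small-witness bound of the kind you invoke exists. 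You identify this as the main obstacle, but the proposal contains no mechanism to overcome it.

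The paper's way out is a structural dichotomy that your outline lacks: for any template $\template$, either (i) there exist multiplicities $\vec{n}_1,\vec{n}_2$ making $\template(t\vec{n}_1+\vec{n}_2)$ balanced for all $t$ while $\vec{n}_1^T\matrixB_\template\vec{n}_1<0$ (a \emph{negative} template, giving an immediate YES by Lemma~\ref{l:negative-is-good}), or (ii) the quadratic form $\vec{n}^T\matrixB_\template\vec{n}$ collapses to a linear expression on the balanced set (a \emph{linear} template). The proof of this dichotomy (Lemma~\ref{l:negative-or-linear}) hinges on a reversal trick: the balancedness matrix $\matrixA$ decomposes as $\matrixB_\template+\matrixB_{\template^R}$ up to index reversal, so if $\template$ is positive then the template with reversed cycle order is negative. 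Once all templates are linear, the average-value condition becomes a single quadratic inequality with linear side constraints, and~\cite{del2017mixed} applies again (Lemma~\ref{l:linear-NP}). A second issue you do not address is that the number of distinct simple sub-cycles---hence the dimension of your multiplicity vector---can be exponential in $|\vass|$; the paper handles this by a separate support-reduction argument (Lemmas~\ref{l:short-vectors} and~\ref{l:short-negative-template}) that iteratively shrinks templates to polynomial size before invoking the quadratic-programming oracle.
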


We fix a $\VASS(\Z,k)$ $\vass = \tuple{Q, Q_0, \delta}$, with the set of states $Q$ and the set of transitions $\delta$, and a cost function $f$, which we refer to throughout this section.
We exclude the complexity statements, where the asymptotic behavior applies to all $\vass$ and $f$.

\subsection{The finite-value problem}
\label{sec:finite-average}

For a path $\Path$, we define characteristics $\gain(\Path), \select(\Path)$, which summarize the impact of $\Path$ on the values of counters ($\gain$) and
the value of the (partial) average of costs ($\select$).

Let $\Path$ be of the form $(q_1, q_2,\vec{y}_1) \ldots (q_m, q_{m+1}, \vec{y}_m)$.
We define $\gain(\Path)$ as the sum of updates along $\Path$, i.e., $\gain(\Path) = \sum_{i=1}^m \vec{y}_i$.
The vector $\gain(\Path)$ is the update of counters upon the whole path $\Path$.
Observe that
\[
	\gain(\Path_1 \Path_2) = \gain(\Path_1) + \gain(\Path_2).
\]

Let $l \colon Q \to \N^k$ be the function representing $f$, i.e., for every $\vec{z} \in \Z^k$ we have
$f(q, \vec{z}) = l(q) \cdot \vec{z}$.
We define $\select(\Path)$ as the sum of vectors $l(q_i)$ along $\Path$, i.e., $\select(\Path) = \sum_{i=1}^m l(q_i)$.
Note that we exclude the last state $q_{m+1}$, and hence we have
\[
	\select(\Path_1 \Path_2) = \select(\Path_1) + \select(\Path_2).
\]
The vector $\select(\Path)$ describes the coefficients with which each counter contributes to the average along the path $\Path$.

Consider a regular computation $\comp$ and a path $\Path_1 (\Path_2)^{\omega}$ that corresponds to it.
Let $\comp'$ be the (regular) computation obtained from $\comp$ by contracting $|\Path_2|$ to a single transition $(q_f,q_f, \gain(\Path_2))$, which is a loop over a fresh state $q_f$.
The counters over this transition are updated by $\gain(\Path_2)$ and
the cost function in $q_f$ is defined as $f(q_f, \vec{z}) = \frac{1}{|\Path_2|}\select(\Path_2) \cdot \vec{z}$.

The values of $\comp$ and $\comp'$ may be different, but they differ only by some finite value.
Indeed, the difference over a single iteration of $\Path_2$ updates and computation of the partial averages are interleaved along $\Path_2$, while in $(q_f,q_f, \gain(\Path_2))$  we first compute the partial average and then update the counters.
Therefore, that difference is a finite value $N$ that does not depend on the initial values of counters.
It follows that the difference between the average values of (the computations corresponding to) $\Path_2$ and $(q_f,q_f, \gain(\Path_2))$ is bounded by $N$.

Finally, observe that the value of $\comp'$ can be easily estimated.
Observe that the partial average of the first $n+1$ values of $f$ in $\comp'$
equals
\[
	\frac{1}{n+1}\Big(\vec{0}\cdot \select(\Path_2) +\gain(\Path_2)\cdot \select(\Path_2) + \ldots +
	n\gain(\Path_2)\cdot \select(\Path_2)\Big) = \frac{n}{2}\gain(\Path_2)\cdot \select(\Path_2).
\]
In consequence, we have the following:

\begin{restatable}{lemma}{CycleCharacteristics}
	\label{l:cycles}
	Let $\comp$ be a regular computation corresponding to a path $\Path_1 (\Path_2)^{\omega}$. Then, one of the following holds:
	\begin{enumerate}
		\item $\gain(\Path_2) \cdot \select(\Path_2) < 0$ and $\flimavg(f(\comp)) = -\infty$, or
		\item $\gain(\Path_2) \cdot \select(\Path_2) = 0$ and $\flimavg(f(\comp))$ is finite, or
		\item $\gain(\Path_2) \cdot \select(\Path_2) > 0$ and $\flimavg(f(\comp)) = \infty$.
	\end{enumerate}
\end{restatable}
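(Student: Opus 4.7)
The plan is to make precise the sketch already outlined before the lemma statement, where the loop $\Path_2$ is contracted into a single self-transition and the resulting computation is analyzed directly.

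First I would formalize the reduction to $\comp'$. Starting from a counter vector $\vec{z}$, one full iteration of $\Path_2 = (q_1,q_2,\vec{y}_1) \cdots (q_m,q_{m+1},\vec{y}_m)$ contributes to the cost sum the amount
\[
\sum_{j=1}^{m} l(q_j) \cdot \Bigl(\vec{z} + \sum_{i=1}^{j-1} \vec{y}_i\Bigr) \;=\; \select(\Path_2) \cdot \vec{z} \;+\; C,
\]
where $C = \sum_{j=1}^{m} l(q_j) \cdot \sum_{i=1}^{j-1} \vec{y}_i$ depends only on $\Path_2$, not on $\vec{z}$. The corresponding single position in $\comp'$ contributes $\select(\Path_2) \cdot \vec{z}$ (using the scaled cost $\tfrac{1}{|\Path_2|}\select(\Path_2) \cdot \vec{z}$ and the fact that each $\comp'$-position stands for $|\Path_2|$ positions of $\comp$). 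Thus the per-iteration discrepancy is the fixed constant $C$, independent of the counter state.

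Next I would compute $\flimavg(f(\comp'))$ explicitly. Writing $\vec{z}_0 = \gain(\Path_1)$ for the counter value on entering the loop and $\vec{g} = \gain(\Path_2)$, $\vec{s} = \select(\Path_2)$, the counter at the $i$-th visit of $q_f$ is $\vec{z}_0 + i\vec{g}$, so the cost at that position is an affine function of $i$ with slope proportional to $\vec{g}\cdot\vec{s}$. A direct computation (as sketched in the excerpt) shows that the partial average of the first $n+1$ costs equals $\tfrac{n}{2|\Path_2|}\vec{g}\cdot\vec{s}$ plus a term bounded in $n$. The sign trichotomy of $\vec{g}\cdot\vec{s}$ therefore yields the three cases directly for $\comp'$: divergence to $-\infty$, a finite limit, or divergence to $+\infty$.

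Finally I would transfer the conclusion from $\comp'$ back to $\comp$. Truncating $\comp$ after $n$ complete loop iterations gives $|\Path_1| + n|\Path_2|$ positions, and the sum of costs equals $(\text{sum for } \comp')$ scaled up by $|\Path_2|$, plus a correction of absolute value at most $n|C| + O(1)$. Dividing by the number of positions, the correction contributes at most $|C|/|\Path_2| + o(1)$ to the average, a bounded quantity. Hence $\flimavg(f(\comp))$ and $|\Path_2| \cdot \flimavg(f(\comp'))$ differ by a bounded constant, and the three cases of the lemma follow. Some care is needed to handle partial (incomplete) iterations at the $\liminf$ and to verify that a bounded additive shift neither spoils divergence to $\pm\infty$ nor promotes a finite limit to an infinite one; this is the main technical obstacle, but it reduces to the standard fact that taking $\liminf$ of $\tfrac{1}{N}\sum$ is insensitive to additive shifts of size $O(N)$ divided by $N$.
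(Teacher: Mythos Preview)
Your proposal is correct and follows essentially the same approach as the paper. The paper's formal proof (in the appendix) bypasses the intermediate contracted computation $\comp'$ and directly derives the per-iteration identity
\[
\sum_{i=1}^{m} f(\comp[|\Path_1|+\ell m + i]) \;=\; \gain(\Path_1)\cdot\select(\Path_2) \;+\; \ell\,\gain(\Path_2)\cdot\select(\Path_2) \;+\; \sum_{i=1}^{m}\Bigl(\sum_{j=1}^{i-1}\vec{y}_j\Bigr)\cdot l(q_i),
\]
whose last term is exactly your constant $C$; the trichotomy then follows immediately from the sign of the coefficient of $\ell$. Your route via $\comp'$ is the informal sketch given in the main text made rigorous, so the two are the same argument in slightly different clothing.
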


\begin{example}
	\label{ex:runningTwo}
	Consider the VASS $\vass_e$ and the cost function $f$ from Example~\ref{ex:runningOne}.
	We have shown that the computation defined by the path $(e_1 e_2 e_3 e_4)^{\omega}$ has finite average value.
	This can be algorithmically computed using Lemma~\ref{l:cycles}.
	We compute
	\[
		\begin{split}
			\gain(e_1 e_2 e_3 e_4) & = \myTwoDimVector{1}{0} + \myTwoDimVector{0}{-1} + \myTwoDimVector{0}{3} +\myTwoDimVector{-2}{0} = \myTwoDimVector{-1}{2} \\
			\select(e_1 e_2 e_3 e_4) & = \myTwoDimVector{1}{1} + \myTwoDimVector{4}{0} + \myTwoDimVector{1}{1} +  \myTwoDimVector{0}{1} = \myTwoDimVector{6}{3}
		\end{split}
	\]
	Therefore,
	$\gain(e_1 e_2 e_3 e_4) \cdot \select(e_1 e_2 e_3 e_4) = 0$.
\end{example}

\Paragraph{Reduction to integer quadratic programming}.
Lemma~\ref{l:cycles} reduces the regular finite-value problem to finding a cycle with $\gain(\Path) \cdot \select(\Path) < 0$ (resp., $\gain(\Path) \cdot \select(\Path) = 0$).
We show that existence of such a cycle can be stated as a polynomial-size instance of \emph{integer quadratic programming}, which can be decided in $\NP$~\cite{del2017mixed}.
An instance of integer quadratic programming consists of a symmetric matrix
$\matrixA \in \Q^{n \times n}, \vectorA \in \Q^n, \constA \in \Q, \matrixB \in \Q^{n \times m}, \vectorB \in \Q^m$ and it asks whether there exists
a vector $\vec{x} \in \Z^n$ satisfying the following system:
\[
	\begin{split}
		\vec{x}^T \matrixA \vec{x} + \vectorA \vec{x} + \constA &\leq 0 \\
		\matrixIN \vec{x}  &\leq \vectorB
	\end{split}
\]

Let $\Path$ be a cycle.
Observe that both $\gain(\Path), \select(\Path)$ depend only on the multiplicity of transitions occurring in $\Path$; the order of transitions is irrelevant.
Consider a vector $\vec{x} = (x[1], \ldots, x[m])$ of multiplicities of transitions $e_1, \ldots, e_m$ in $\Path$.
Then,
\[
	\gain(\Path) \cdot \select(\Path) = \sum_{1 \leq i,j \leq m} x[i] x[j] \gain(e_i) \cdot \select(e_j)
\]
Therefore, we have
\begin{equation}
	2 \cdot \gain(\Path) \cdot \select(\Path) = \vec{x}^T \matrixA \vec{x}
	\label{eq:matrix-balanced}
\end{equation}
for a symmetric matrix $\matrixA \in \Z^{m\times m}$ defined for all $i,j$ as \[
	\matrixA[i,j] = \gain(e_i) \cdot \select(e_j) + \gain(e_j) \cdot \select(e_i).
\]
The left hand side of \eqref{eq:matrix-balanced} is multiplied by $2$ to avoid division by $2$.
It follows that if $\gain(\Path) \cdot \select(\Path) < 0$ (resp., $\gain(\Path) \cdot \select(\Path) = 0$), then the inequality
$\vec{x}^T \matrixA \vec{x} -1 \leq 0$ (resp., $\vec{x}^T \matrixA \vec{x} \leq 0$) has a solution.

Note that the above inequality can have a solution, which does not correspond to a cycle.
We can encode with a system of linear inequalities that $\vec{x}$ corresponds to a cycle.
Consider $S \subseteq Q$, which corresponds to all states visited by $\Path$.
It suffices to ensure that
(a)~for all $s \in S$, the number of incoming transitions to $s$, which is the sum of multiplicities of transitions leading to $s$, equals the number of transitions outgoing from $s$,
(b)~for every $s \in S$, the number of outgoing transitions is greater or equal to $1$, and
(c)~for $s \in Q \setminus S$, the number of incoming and outgoing transitions is $0$.
Finally, we state that all multiplicities are non-negative.
We can encode such equations and inequalities as $\matrixIN_{S} \vec{x} \leq \vectorB_S$.
Observe that every vector of multiplicities $\vec{x} \in \Z$ satisfies $\vec{x}^T \matrixA \vec{x} -1 \leq 0$ (resp., $\vec{x}^T \matrixA \vec{x} \leq 0$) and
$\matrixIN_S \vec{x} \leq \vectorB_S$ defines a cycle $\Path$ with $\gain(\Path), \select(\Path) < 0$ (resp., $\gain(\Path), \select(\Path) =0$).
The matrices $\matrixA, \matrixIN_S$ and the vector $\vectorB_S$ are polynomial in $|\vass|$. The set $S$ can be picked non-deterministically.
In consequence, we have the following:

\begin{restatable}{lemma}{MatrixNegative}
	\label{l:MatrixNegative}
	The problem: given a $\VASS(\Z,k)$ $\vass$ and a cost function $f$,
	decide whether $\vass$ has a regular computation of the value $-\infty$ (resp., less than $+\infty$) is in $\NP$.
\end{restatable}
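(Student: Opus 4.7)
The plan is to combine Lemma~\ref{l:cycles} with the encoding sketched just before the lemma statement, and to package the resulting search as a non-deterministic polynomial-time verifier.

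By Lemma~\ref{l:cycles}, $\vass$ has a regular computation of value $-\infty$ (resp.\ strictly less than $+\infty$) iff it admits a lasso $\alpha \beta^\omega$ whose cycle $\beta$ satisfies $\gain(\beta) \cdot \select(\beta) < 0$ (resp.\ $\leq 0$) and whose prefix $\alpha$ starts in some $q_0 \in Q_0$. The NP procedure therefore guesses, in polynomial size: (i)~a set $S \subseteq Q$ meant to be the set of states visited by $\beta$, (ii)~an anchor $q^\star \in S$ together with a simple prefix $\alpha$ from some $q_0 \in Q_0$ to $q^\star$ of length at most $|Q|$, and (iii)~a multiplicity vector $\vec{x} \in \N^{|\delta|}$ recording how often each transition appears in $\beta$.

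The verifier then checks the linear constraints $\matrixIN_S \vec{x} \leq \vectorB_S$ formalizing flow balance on $S$, at least one outgoing transition per state in $S$, and zero multiplicity outside $S$, together with the quadratic constraint $\vec{x}^T \matrixA \vec{x} \leq -1$ (resp.\ $\leq 0$) obtained from equation~\eqref{eq:matrix-balanced}. This is an instance of integer quadratic programming of size polynomial in $|\vass| + |f|$, and by~\cite{del2017mixed} any feasible such instance admits a solution of polynomial bit-length, so guessing $\vec{x}$ and verifying these constraints runs in polynomial time. Finally, the verifier checks in polynomial time that the sub-multigraph of transitions $e_i$ with $x[i] > 0$ is connected and contains $q^\star$; Euler's theorem then supplies the cycle $\beta$, and $\alpha \beta^\omega$ defines the witnessing regular computation.

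The main obstacle I anticipate is precisely the connectivity check: the linear flow system by itself does not distinguish a single cycle from a disjoint union of cycles on $S$, so a priori the quadratic program could be solved on a disconnected support. Expressing connectedness as additional linear or quadratic constraints would be awkward, but since $\vec{x}$ is part of the NP certificate it suffices to verify connectedness by a polynomial-time graph traversal of its support, which slots cleanly into the NP verifier and completes the argument.
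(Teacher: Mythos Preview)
Your approach is essentially the paper's: reduce via Lemma~\ref{l:cycles} to finding a reachable cycle with $\gain\cdot\select<0$ (resp.\ $\le 0$), encode the multiplicity vector as an integer quadratic program with the symmetric matrix $\matrixA$ and the Eulerian flow constraints $\matrixIN_S\vec{x}\le\vectorB_S$, and invoke~\cite{del2017mixed}. You go beyond the paper's sketch by explicitly flagging the connectivity issue, which the paper glosses over.

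There is, however, a small remaining gap in your completeness argument. You justify that $\vec{x}$ can be guessed with polynomial bit-length by appealing to~\cite{del2017mixed} for the IQP consisting of $\matrixIN_S\vec{x}\le\vectorB_S$ and the quadratic constraint; only \emph{afterwards} do you check that the support of $\vec{x}$ is connected. But the polynomial-size solution guaranteed by~\cite{del2017mixed} for that IQP need not be connected, so you have not shown that a polynomial-size \emph{connected} witness exists. The fix is one line: guess the transition support $T\subseteq\delta$ as part of the certificate, verify that $T$ is strongly connected and contains $q^\star$, and add the linear constraints $x[e]\ge 1$ for $e\in T$ and $x[e]=0$ for $e\notin T$ to the IQP \emph{before} invoking~\cite{del2017mixed}. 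Any small solution of this augmented IQP then has support exactly $T$, hence is connected, and completeness follows by taking $T$ to be the support of the true cycle $\beta$.
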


\begin{example}
	\label{ex:runningThree}
	Consider the VASS $\vass_e$ and the cost function $f$ from Example~\ref{ex:runningOne}.
	We consider some cycle $\Path$ starting in $B$ over all states of $\vass_e$. Let $\vec{x} \in \N^4$ be the multiplicities of transitions $e_1, e_2, e_3, e_4$ respectively in $\Path$
	Then, we have
	\[
		\matrixA = \begin{bmatrix*}[r]
			2 & 3 & 4 & -2 \\
			3 & 0 & -1 & -9 \\
			4 & -1 & 6 & 1 \\
			-2 & -9 & 1 & 0 \\
		\end{bmatrix*}
	\]
	Since $\Path$ is a cycle starting from $B$, we have $x_1 = x_2$ and $x_3 = x_4$. Therefore, we can eliminate $x_2, x_4$ and the matrix $\matrixA$ can be simplified to $\matrixA' \in \Z^{2\times 2}$
	defined as
	\[
		\matrixA' = \begin{bmatrix*}[r]
			8 & -8 \\
			-8 & 8
		\end{bmatrix*}
	\]
	where $x_1$ (resp., $x_3$) is the multiplicity of the cycle $e_1 e_2$ (resp., $e_3 e_4$).
	Let $\vec{x}' = (x_1, x_3)$.
	Note that \[
		(\vec{x}')^T \matrixA' \vec{x} = (x_1 - x_3)^2
	\]
	It follows that a cycle $\Path$ defines the computation of some finite value if and only if it contains the multiplicities of transitions $e_1, e_2, e_3, e_4$ are equal.
\end{example}

\subsection{The regular average-value problem}
\label{sec:regular-value}

\newcommand{\VAL}{\mathsf{Sum}}

In this section, we study the regular average-value problem for $\VASSes(\Z)$.
We assume that $\vass$ has a regular computation of finite value and does not have a regular computation of the value $-\infty$.
Finally, we consider the case of the threshold $\lambda = 0$.
The case of an arbitrary threshold $\lambda \in \Q$ can be easily reduced to the case $\lambda = 0$
(see~\cite{CHO16a,CHO16b} for intuitions and techniques in mean-payoff games a.k.a. limit-average games).

Consider a regular computation $\comp$ and let $\Path_1 (\Path_2)^\omega$ be a path corresponding to $\comp$.
We derive the necessary and sufficient conditions on $\Path_1, \Path_2$ to have $\flimavg(f(\comp)) \leq 0$.

Assume that $\flimavg(f(\comp)) \leq 0$.
Due to Lemma~\ref{l:cycles}, we have $\gain(\Path_2)\cdot \select(\Path_2) = 0$, which implies that every iteration of $\Path_2$
has the same average. Therefore, $\flimavg(f(\comp))$ is the average value of $\Path_2$ (starting with counter values defined by $\Path_1$).
The latter value is at most $0$ if and only if the sum of values along $\Path_2$ is at most $0$. We define this sum below.

\Paragraph{The sum of values $\VAL_{\vec{g}}(\Path)$}.
Let $l$ be the labeling defining the cost function $f$.
Consider a path $\Path$ of length $m$ such that
$\Path = (q_1,q_2, \vec{y}_1) \ldots (q_m,q_{m+1}, \vec{y}_m)$.
Let $\comp$ be the computation corresponding to $\Path$ with the initial counter values being $\vec{g}$.
Then, the value of counters at the position~$i$ in $\comp$ is $\vec{g} +  \sum_{j=1}^{i-1} \vec{y}_j$.
Therefore, the sum of values over $\Path$ starting with counter values $\vec{g} \in \Z^k$,
denoted by $\VAL_{\vec{g}}(\Path)$,  is given by the following formula:
\[
	\VAL_{\vec{g}}(\Path) = \sum_{i=1}^{m} \Big(\vec{g} +  \sum_{j=1}^{i-1} \vec{y}_j\Big)\cdot l(q_i)
\]
We have the following:

\begin{restatable}{lemma}{ConditionsAvg}
	There exists a regular computation $\pi$ with $\flimavg(f(\comp)) \leq 0$  if and only if there exist paths $\Path_1, \Path_2$ such that
	\begin{enumerate}[label=(C\arabic*),leftmargin=0.8cm,topsep=0pt,itemsep=-1ex,partopsep=1ex,parsep=1ex]
		\item $\Path_1 (\Path_2)^\omega$ is an infinite path and $\Path_2$ is a cycle,
		\item $\gain(\Path_2) \cdot \select(\Path_2) = 0$, and
		\item $\VAL_{\gain(\Path_1)}(\Path_2) \leq 0$.
	\end{enumerate}
	\label{l:conditions-avg}
\end{restatable}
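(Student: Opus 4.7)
The plan is to analyze $\flimavg(f(\comp))$ for any path of the form $\Path_1(\Path_2)^{\omega}$ by an explicit computation of partial sums, using the following linearity identity for $\VAL$:
\[
\VAL_{\vec g+\vec h}(\Path)=\VAL_{\vec g}(\Path)+\vec h\cdot \select(\Path),
\]
which follows by pulling $\vec h$ out of the defining sum of $\VAL$ and recognizing the remaining coefficient sum as $\select(\Path)$.

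For the $(\Leftarrow)$ direction, assume $\Path_1,\Path_2$ satisfy (C1)--(C3) and let $\comp$ be the computation induced by $\Path_1(\Path_2)^{\omega}$. At the start of the $n$-th iteration of $\Path_2$ the counter values are $\gain(\Path_1)+n\,\gain(\Path_2)$, so by the identity and (C2) the cost sum contributed by that iteration is $\VAL_{\gain(\Path_1)}(\Path_2)+n\,\gain(\Path_2)\cdot \select(\Path_2)=\VAL_{\gain(\Path_1)}(\Path_2)$, independent of $n$. Summing over $N$ iterations and dividing by the step count $|\Path_1|+N|\Path_2|$ gives the limit $\VAL_{\gain(\Path_1)}(\Path_2)/|\Path_2|\leq 0$ by (C3), hence $\flimavg(f(\comp))\leq 0$.

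For the $(\Rightarrow)$ direction, let $\comp$ correspond to some $\Path_1(\Path_2)^{\omega}$ with $\flimavg(f(\comp))\leq 0$. Then (C1) is immediate, and since the standing assumption excludes computations of value $-\infty$, the value $\flimavg(f(\comp))$ is finite, so Lemma~\ref{l:cycles} forces $\gain(\Path_2)\cdot \select(\Path_2)=0$, i.e.\ (C2). To establish (C3) I would exploit the freedom to rotate the loop. Let $m=|\Path_2|$ and $\sigma_k$ denote the prefix of $\Path_2$ of length $k$. Evaluating partial averages at positions $|\Path_1|+Nm+k$ as $N\to\infty$ with $k\in\{0,\dots,m-1\}$ fixed yields
\[
\flimavg(f(\comp))=\min_{0\leq k<m} A_k,\qquad A_k:=\tfrac{1}{m}\bigl(\VAL_{\gain(\Path_1)}(\Path_2)+\gain(\Path_2)\cdot \select(\sigma_k)\bigr).
\]
Let $k^\star$ be a minimizer and set $\Path_1':=\Path_1\sigma_{k^\star}$ and $\Path_2':=$ the cyclic rotation of $\Path_2$ by $k^\star$ positions. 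Since $\gain$ and $\select$ are sums over the transitions/states of a path, both are invariant under cyclic rotation, so $\gain(\Path_2')=\gain(\Path_2)$ and $\select(\Path_2')=\select(\Path_2)$, and (C2) persists. Expanding $\VAL_{\gain(\Path_1')}(\Path_2')$ via the concatenation rule and the linearity identity yields
\[
\VAL_{\gain(\Path_1')}(\Path_2')=\VAL_{\gain(\Path_1)}(\Path_2)+\gain(\Path_2)\cdot \select(\sigma_{k^\star})=m\cdot A_{k^\star}\leq 0,
\]
which is (C3) for the rotated decomposition.

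The main obstacle is recognizing that $\flimavg$, being a $\liminf$, is sensitive to the intra-cycle offset at which partial averages are sampled: naively one might hope $\flimavg(f(\comp))=\VAL_{\gain(\Path_1)}(\Path_2)/m$, but this equals only $A_0$ and need not be the minimum. The fix---rotating the loop decomposition to expose the minimizing offset---hinges on the bookkeeping identity $\VAL_{\gain(\Path_1')}(\Path_2')=m\cdot A_{k^\star}$, whose derivation via concatenation and linearity of $\VAL$ is the technical heart of the argument.
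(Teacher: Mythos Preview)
Your proposal is correct and follows essentially the same approach as the paper: both arguments rest on the linearity identity $\VAL_{\vec g+\vec h}(\Path)=\VAL_{\vec g}(\Path)+\vec h\cdot\select(\Path)$, obtain (C2) from Lemma~\ref{l:cycles} (equivalently, the standing assumption excluding value $-\infty$), and handle the intra-cycle offset of the $\liminf$ by rotating the decomposition $\Path_1,\Path_2$. The paper compresses your explicit minimization over the offsets $A_k$ into a pigeonhole-and-rotate ``without loss of generality'' step, but the content is the same.
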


The proof of Lemma~\ref{l:conditions-avg} has been relegated to the appendix.

Due to results from the previous section, we can check in $\NP$ the existence of $\Path_1, \Path_2$ satisfying (C1) and (C2).
We call a cycle $\Path_2$ \emph{balanced} if $\gain(\Path_2) \cdot \select(\Path_2) = 0$.
In the remaining part we focus on condition (C3), while keeping in mind (C1) and (C2).

\Paragraph{The plan of the proof}.
The proof has three key ingredients which are as follows: \emph{factorizations},  \emph{quadratic factor elimination}, and
\emph{the linear case}.
\begin{itemize}[leftmargin=10pt,itemindent=15pt]
	\item {\emph{Factorizations}}.
	      We show that we can consider only paths $\Path_2$ of the form
	      \[
		      \connect_0 \cycle_1^{n[1]} \connect_1 \cycle_2^{n[2]} \ldots \cycle_p^{n[p]} \connect_p,
	      \] where
	      $|\alpha_i|,|\beta_i| \leq |\vass|$ and each $\beta_i$ is a distinct simple cycle (Lemma~\ref{l:minimal-factorization}).
	      It follows that $p$ is exponentially bounded in $|\vass|$.
	      Next, we show that for $\Path_2$ in such a form we have
	      \[
		      \VAL_{\gain(\Path_1)}(\Path_2) = \vec{n}^T \matrixB  \vec{n} + \vectorB \cdot \vec{n} + \constB
	      \]
	      where $\matrixB \in \Z^{p\times p}, \vectorB \in \Z^p$ (Lemma~\ref{l:template-value}).
	      Therefore, $\VAL_{\gain(\Path_1)}(\Path_2) \leq 0$ can be presented as an instance of integer quadratic programming,
	      where the variables correspond to multiplicities of simple cycles.

	      However, there are two problems to overcome.
	      First, $\Path_2$ has to be balanced, i.e., it has to satisfy $\gain(\Path_2) \cdot \select(\Path_2) = 0$, which introduces another quadratic equation.
	      Solving a system of two quadratic equations over integers is considerably more difficult (see~\cite{del2017mixed} for references).
	      Second, $p$ is (bounded by) the number of distinct simple cycles and hence it can be exponential in $|\vass|$.
	      Therefore, $\matrixB$ and $\vectorB$ may have exponential size (of the binary representation) in $|\vass|$.
	      \smallskip

	\item {\emph{Quadratic factor elimination}}.
	      To solve these problems, we fix a sequence $\template = (\alpha_0,\beta_1, \alpha_1, \ldots, \beta_p, \alpha_p)$  and consider
	      $\matrixB_{\template}, \vectorB_{\template}, \constB_{\template}$ for $\template$.
	      We show that one of the following holds (Lemma~\ref{l:negative-is-good} and Lemma~\ref{l:negative-or-linear}):
	      \begin{itemize}[leftmargin=10pt,itemindent=15pt]
		      \item $\vec{n}^T \matrixB_{\template}  \vec{n} + \vectorB_{\template} \cdot \vec{n} + \constB_{\template} \leq 0$ has a simple solution, or
		      \item there exist $\vectorC_{\template} \in \Z^p, \constC_{\template} \in \Z$ such that for
		            all vectors $\vec{n}$, if the cycle $\Path_2 = \connect_0 \cycle_1^{n[1]} \ldots \cycle_p^{n[p]} \connect_p$ is balanced (C2),
		            then
		            \[ \vec{n}^T \matrixB_{\template}  \vec{n} = \vectorC_{\template}\cdot \vec{n} + \constC_{\template}.
		            \]
	      \end{itemize}
	      We can decide in non-deterministic polynomial time whether the first condition holds (Lemma~\ref{l:linear-NP}).
	      \smallskip

	\item {\emph{The linear case}}.
	      Assuming that the second condition holds, we reduce the problem of solving the quadratic inequality $\vec{n}^T \matrixB_{\template}  \vec{n} + \vectorB_{\template} \cdot \vec{n} + \constB_{\template} \leq 0$
	      to solving the linear inequality
	      \[
		      (\vectorB_{\template} + \vectorC_{\template}) \cdot \vec{n} + (\constB_{\template} + \constC_{\template}) \leq 0.
		      \tag{lin}
		      \label{eq:linear-short}
	      \]
	      Moreover, we can compute $\vectorC_{\template}, \constC_{\template}$ from the sequence $\template$.
	      At this point, we can solve the problem in non-deterministic exponential time.
	      Next, we argue that we do not have to compute the whole system.

	      We show that if $(\vectorB_{\template} + \vectorC_{\template}) \cdot \vec{n} +  \constB_{\template} + \constC_{\template} \leq 0$ has a solution, then
	      it has a solution for a vector $\vec{n}_0$ with $m = O(|\vass|)$ non-zero components.
	      Therefore, we can remove cycles corresponding to $0$ coefficients of $\vec{n}_0$.
	      Still, $\sum_{i=0}^{p} |\connect_i|$  can be exponential in $|\vass|$, but this operation shortens the size of the template, i.e., the value
	      $\sum_{i=0}^{p} |\connect_i| + \sum_{i=1}^p |\cycle_i|$, and hence by iterating it we get a polynomial size template, which
	      yields a polynomial-size system of inequalities. These inequalities can be solved in $\NP$.
\end{itemize}

\noindent We now present the details of each ingredient.

\subsubsection{Factorizations}

The regular finite-value problem has been solved via reduction to solving (quadratic and linear) inequalities.
In this section, we show a reduction of the regular average-value problem to linear and quadratic inequalities as well. First, we establish that we can consider only
cycles $\Path$, which have a compact representation using \emph{templates} parametrized by multiplicities of cycles.
The value $\VAL_{\vec{g}}(\Path)$ for a cycle represented by a template is given by a quadratic function in the multiplicities of cycles.

\Paragraph{Templates and multiplicities}.
A \emph{template} $\template$ is a sequence of paths $(\connect_0, \cycle_1, \connect_1, \ldots, \cycle_p,\connect_p)$ such that
all $\cycle_1, \ldots, \cycle_p$ are cycles and $\connect_0 \cycle_1 \connect_1 \ldots \cycle_p \connect_p$ is a cycle.
A template is \emph{minimal} if for all $i \in \{0,\ldots, p\}$ we have $|\alpha_i| < |Q|$ and
all $\beta_i$ are pairwise distinct simple cycles.
For every vector $\vec{n} \in \N^p$, called \emph{multiplicities}, we define
$\template(\vec{n})$ as a cycle
\[
	\connect_0 \cycle_1^{n[1]} \connect_1 \cycle_2^{n[2]} \ldots \cycle_p^{n[p]} \connect_p.
	\tag{$*$}
	\label{factors}
\]
A cycle $\Path$ has a (minimal) \emph{factorization} if there exists a (minimal) template and multiplicities $\vec{n}$ such that
$\Path = \template(\vec{n})$.

Observe that every cycle $\Path$ has a factorization such that for all $i$ we have $|\alpha_i| < |Q|$ and each $\beta_i$ is a simple cycle.
However, the sequence $\beta_i$'s can have repetitions.
The following lemma states that if a cycle $\cycle$ occurs twice in $\Path$, then we can group them together.

\begin{lemma}[Cycle grouping]
	\label{l:cycle-grouping}
	Consider $\vec{g} \in \Z^k$ and a cycle $\connect_0 \cycle \connect_1 \cycle \connect_2$. Then, one of the following holds:
	\begin{align*}
		\VAL_{\vec{g}}(\connect_0 \cycle^2 \connect_1 \connect_2) \leq \VAL_{\vec{g}}(\connect_0 \cycle \connect_1 \cycle \connect_2) \\
		\VAL_{\vec{g}}(\connect_0 \connect_1 \cycle^2 \connect_2) \leq \VAL_{\vec{g}}(\connect_0 \cycle \connect_1 \cycle \connect_2)
	\end{align*}
\end{lemma}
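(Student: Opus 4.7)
The plan is to show that the two candidate regroupings each differ from the original path by equal and opposite amounts, so whichever sign of the difference is nonpositive yields the desired inequality. First, I would note that since $\cycle$ is a cycle over some state $r$ and the word $\cycle \connect_1 \cycle$ appears inside the given path, $\connect_1$ must itself be a closed walk at $r$; hence all three expressions $\connect_0 \cycle^2 \connect_1 \connect_2$, $\connect_0 \cycle \connect_1 \cycle \connect_2$, and $\connect_0 \connect_1 \cycle^2 \connect_2$ are legitimate paths, and they share the same multiset of transitions (so the same $\gain$ and the same $\select$).

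The key auxiliary identity is a \emph{swap formula}: for any two paths $u, v$ that compose both ways at the same initial counter $\vec{h}$,
\[
\VAL_{\vec{h}}(uv) - \VAL_{\vec{h}}(vu) = \gain(u) \cdot \select(v) - \gain(v) \cdot \select(u).
\]
I would derive this from the decomposition $\VAL_{\vec{h}}(uv) = \vec{h} \cdot (\select(u) + \select(v)) + V(u) + V(v) + \gain(u) \cdot \select(v)$, where $V(w) := \VAL_{\vec{0}}(w)$. The symmetric expansion of $\VAL_{\vec{h}}(vu)$ cancels the $\vec{h}$-terms and the $V$-terms, leaving the stated cross-difference.

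Next, I would apply this identity after peeling off common prefixes and suffixes. Set $\vec{g}' = \vec{g} + \gain(\connect_0)$ and $\vec{g}'' = \vec{g}' + \gain(\cycle)$. The initial $\connect_0$ contributes identically to all three paths; peeling off the following $\cycle$ shared by $P_2 := \connect_0 \cycle \connect_1 \cycle \connect_2$ and $P_1 := \connect_0 \cycle^2 \connect_1 \connect_2$, and observing that the trailing $\connect_2$ begins at the same counter value $\vec{g}'' + \gain(\cycle) + \gain(\connect_1)$ in both paths (because all three share the same $\gain$), one obtains
\[
\VAL_{\vec{g}}(P_2) - \VAL_{\vec{g}}(P_1) \;=\; \VAL_{\vec{g}''}(\connect_1 \cycle) - \VAL_{\vec{g}''}(\cycle \connect_1) \;=\; D,
\]
with $D := \gain(\connect_1) \cdot \select(\cycle) - \gain(\cycle) \cdot \select(\connect_1)$. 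A symmetric computation for $P_3 := \connect_0 \connect_1 \cycle^2 \connect_2$, peeling $\connect_0$ off the front and $\cycle \connect_2$ off the back, yields $\VAL_{\vec{g}}(P_2) - \VAL_{\vec{g}}(P_3) = -D$.

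The conclusion is then immediate: if $D \geq 0$, then $\VAL_{\vec{g}}(P_1) \leq \VAL_{\vec{g}}(P_2)$, and if $D \leq 0$, then $\VAL_{\vec{g}}(P_3) \leq \VAL_{\vec{g}}(P_2)$, so in either case one of the two desired inequalities holds. The only delicate part of this plan is the bookkeeping of starting-counter values when peeling prefixes and suffixes; this is precisely what makes the equal-and-opposite cancellation work, and it relies on the fact that the three paths, differing only by rearrangement, produce the same cumulative counter update at every matching suffix.
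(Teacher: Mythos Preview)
Your proof is correct and is essentially the paper's argument in slightly different clothing. The paper's sketch shows the single identity
\[
\VAL_{\vec{g}}(\connect_0 \cycle \connect_1 \cycle \connect_2) = \tfrac{1}{2}\bigl(\VAL_{\vec{g}}(\connect_0 \cycle^2 \connect_1 \connect_2) + \VAL_{\vec{g}}(\connect_0 \connect_1 \cycle^2 \connect_2)\bigr),
\]
which is exactly your pair of equal-and-opposite differences $P_2-P_1=D$ and $P_2-P_3=-D$ added together; your explicit swap formula $\VAL_{\vec{h}}(uv)-\VAL_{\vec{h}}(vu)=\gain(u)\cdot\select(v)-\gain(v)\cdot\select(u)$ and the concrete value $D=\gain(\connect_1)\cdot\select(\cycle)-\gain(\cycle)\cdot\select(\connect_1)$ simply make the paper's averaging identity quantitative.
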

\begin{proof}[Proof sketch]
	To show the lemma observe that
	\[
		\begin{split}
			\VAL_{\vec{g}}(\connect_0 \cycle \connect_1  \cycle \connect_2) = \frac{1}{2} \big(&\VAL_{\vec{g}}(\connect_0 \cycle^2 \connect_1 \connect_2) + \\
			&\VAL_{\vec{g}}(\connect_0  \connect_1 \cycle^2 \connect_2)\big)
		\end{split}
	\]
\end{proof}

Careful repeated application of Lemma~\ref{l:cycle-grouping} implies that we can look for a cycle $\Path$ satisfying (C1), (C2) and (C3) among cycles
that have a minimal factorization:

\begin{lemma}
	\label{l:minimal-factorization}
	For every cycle $\Path$ and $\vec{g} \in \Z^k$, there exists a cycle $\Path'$ that has a minimal factorization~\eqref{factors}
	such that
	$\gain(\Path) = \gain(\Path')$,
	$\select(\Path) = \select(\Path')$ and
	$\VAL_{\vec{g}}(\Path) \geq \VAL_{\vec{g}}(\Path')$.
\end{lemma}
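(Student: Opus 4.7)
The plan is to produce $\Path'$ by rearranging the transitions of $\Path$ so that repeated simple cycles become contiguous, while never increasing $\VAL_{\vec{g}}$; Lemma \ref{l:cycle-grouping} is the engine of the rearrangement, and termination is handled by a minimality argument over a finite set rather than by direct iteration.

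First, observe that every cycle admits at least one factorization $\alpha_0 \beta_1 \alpha_1 \cdots \beta_p \alpha_p$ with each $\beta_i$ a simple cycle and $|\alpha_i|<|Q|$: traversing the cycle and peeling off the simple loop between two visits of the same state whenever such a repetition occurs leaves residual connectors that are acyclic and therefore shorter than $|Q|$. Call such a factorization \emph{canonical}. If every simple-cycle type appears in at most one $\beta_i$ slot (equivalently, all occurrences of the same $\beta$ are already contiguous), bundling the adjacent identical $\beta_i$'s into template slots with appropriate multiplicities yields a minimal template and we are done with $\Path'=\Path$.

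Otherwise, I would pass to the finite set $\mathcal{R}(\Path)$ of cycles obtainable by permuting the transitions of $\Path$; every member inherits the same $\gain$ and $\select$. Among the minimizers of $\VAL_{\vec{g}}$ on $\mathcal{R}(\Path)$, pick $\Path^*$ that further minimizes $\phi(\Path')$, defined as the minimum over canonical factorizations of $\Path'$ of the number of maximal monochromatic runs in the sequence $(\beta_1,\ldots,\beta_p)$. I claim $\Path^*$ already admits a minimal factorization. Assume not: some canonical factorization of $\Path^*$ has a simple cycle $\beta$ occurring in two distinct runs, yielding indices $i<j$ with $\beta_i=\beta_j=\beta$ and $\beta_k\ne\beta$ for some $i<k<j$. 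Decompose $\Path^*$ as $A\cdot\beta\cdot X\cdot\beta\cdot B$, with $X=\alpha_i\beta_{i+1}\alpha_{i+1}\cdots\beta_{j-1}\alpha_{j-1}$; Lemma \ref{l:cycle-grouping} produces a cycle $\Path^{**}\in\mathcal{R}(\Path)$ satisfying $\VAL_{\vec{g}}(\Path^{**})\le\VAL_{\vec{g}}(\Path^*)$. Minimality of $\VAL_{\vec{g}}$ forces equality, and in a canonical factorization of $\Path^{**}$ the two copies of $\beta$ are placed adjacently while the other cycle types appear in the same relative order, so $\phi(\Path^{**})<\phi(\Path^*)$ -- contradicting the choice of $\Path^*$.

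The main obstacle is making the drop in $\phi$ rigorous: after the Lemma \ref{l:cycle-grouping} move, the concatenation $\alpha_{j-1}\alpha_j$ may exceed $|Q|$ in length, forcing re-extraction of fresh simple loops whose presence could, a priori, split or re-entangle runs of other cycle types. Defining $\phi$ as the minimum run-count over canonical factorizations (rather than tying it to one fixed factorization) sidesteps this: it suffices to exhibit a single canonical factorization of $\Path^{**}$ in which the two formerly separated copies of $\beta$ are adjacent and the other runs are at least as merged as in $\Path^*$. Such a factorization is produced by following the adjacent-pair layout of Lemma \ref{l:cycle-grouping} and applying greedy extraction only inside $\alpha_{j-1}\alpha_j$; this reorders cycle types within that local segment but leaves the global run structure intact, so the join of the two $\beta$-runs yields a net strict decrease in $\phi$.
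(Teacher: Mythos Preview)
Your minimality framework over the finite set $\mathcal{R}(\Path)$ is a reasonable alternative to the paper's explicit iteration, but the crucial claim---that one application of Lemma~\ref{l:cycle-grouping} followed by local re-extraction yields $\phi(\Path^{**})<\phi(\Path^*)$---is not established, and in general fails. There are two problems. First, placing the two chosen occurrences $\beta_i,\beta_j$ adjacently does not merge the two $\beta$-\emph{runs}: if the run containing $j$ has more than one element it merely shrinks by one and survives, leaving the run count unchanged. Lemma~\ref{l:cycle-grouping} moves a single copy of $\beta$, not an entire run. Second, even when a merge does occur, greedy extraction inside the concatenated connector $\alpha_{j-1}\alpha_j$ can produce a fresh simple cycle---possibly another $\beta$, or an entirely new type---creating a new run that restores the count. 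Your last paragraph asserts that local extraction ``leaves the global run structure intact,'' but two individually acyclic connectors can concatenate to a path containing an arbitrary simple cycle, so this requires an argument you have not supplied.

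The paper sidesteps both obstacles by tracking a different progress measure: the pair (total connector length, number of groups), ordered lexicographically, over factorizations that are allowed to have long connectors at intermediate stages. A grouping step keeps connector length fixed and strictly decreases the group count; an extraction step strictly decreases connector length (while possibly increasing the group count). Neither individual step is asked to decrease the run count of a \emph{canonical} factorization, which is precisely where your $\phi$ gets stuck. As a partial repair of your first issue: since $\Path^*$ already minimizes $\VAL_{\vec g}$ over $\mathcal{R}(\Path)$, the averaging identity behind Lemma~\ref{l:cycle-grouping} forces \emph{both} rearrangements $A\beta^2XB$ and $AX\beta^2B$ to share its value, so at the minimum you may migrate $\beta$'s freely in either direction and hence move an entire run at once; but this still does not control the new runs that re-extraction may introduce.
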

\begin{proof}[Proof sketch]
	Note that repeated application of Lemma~\ref{l:cycle-grouping} allows us to group simple cycles together but it may create connecting paths of length greater than $|Q|$.
	In such a case, we extract simple cycles from long connecting paths and group them together again using Lemma~\ref{l:cycle-grouping}.
	Such process terminates as in each iteration it either
	decreases the sum of lengths' of connecting paths or does not change this sum but decreases
	the number of (non-grouped) cycles. Note that the initial cycle $\Path$ and the resulting cycle $\Path'$ have the same multisets of transitions and hence
	$\gain(\Path) = \gain(\Path')$ and $\select(\Path) = \select(\Path')$. It follows that this process preserves condition (C2).
\end{proof}

Consider a template $\template = (\connect_0, \cycle_1, \connect_1, \ldots, \cycle_p,\connect_p)$.
We present $\VAL_{\vec{g}}(\template(\vec{n}))$ as a function from multiplicities of simple cycles $\vec{n} \in \N$ into $\Z$.
First, observe that
\[
	\VAL_{\vec{g}}(\template(\vec{n})) = \VAL_{\vec{0}}(\template(\vec{n})) + \vec{g} \cdot \select(\template(\vec{n})).
\]
The expression $\select(\template(\vec{n}))$ is a linear expression in $\vec{n}$ with natural coefficients, and
the expression $\VAL_{\vec{0}}(\template(\vec{n}))$ is a quadratic function in each of its arguments $\vec{n}$.

\begin{restatable}{lemma}{TemplateValue}
	\label{l:template-value}
	Given a template $\template$ we can compute in polynomial time in $|\template|+|\vass|+|f|$, a symmetric matrix
	$\matrixB_{\template} \in \Z^{p\times p},  \vectorB_{\template} \in \Z^p$ and $\constB_{\template} \in \Z$ such that the following holds:
	\begin{equation}
		2\cdot \VAL_{\vec{0}}(\template(\vec{n})) = \vec{n}^T \matrixB_{\template} \vec{n} + \vectorB_{\template} \vec{n} + \constB_{\template}
		\label{cycle-value}
	\end{equation}
	Moreover, for all $i,j \in \{1, \ldots, p\}$ we have
	\begin{equation}
		\matrixB_{\template}[i,j] = \gain(\cycle_{min(i,j)}) \cdot \select(\cycle_{max(i,j)}).
		\label{eq:matrixBForm}
	\end{equation}
\end{restatable}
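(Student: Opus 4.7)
The plan is to derive a single \emph{concatenation identity} for $\VAL_{\vec{0}}$, iterate it over the block decomposition induced by $\template$, and then read off coefficients. Unpacking the definition of $\VAL_{\vec{0}}$, for any two composable paths $\Path_1, \Path_2$ the positions inside $\Path_2$ contribute their own $\VAL_{\vec{0}}(\Path_2)$ plus a global shift, because their counter values are offset by the total gain accumulated along $\Path_1$. This offset multiplies the label-sum of $\Path_2$, giving
\[
\VAL_{\vec{0}}(\Path_1 \Path_2) = \VAL_{\vec{0}}(\Path_1) + \VAL_{\vec{0}}(\Path_2) + \gain(\Path_1) \cdot \select(\Path_2).
\]
A direct induction on $N$ extends this to
\[
\VAL_{\vec{0}}(X_1 \cdots X_N) = \sum_{k=1}^N \VAL_{\vec{0}}(X_k) + \sum_{1 \leq k < l \leq N} \gain(X_k) \cdot \select(X_l).
\]

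Next, I would apply the iterated identity to the block sequence $X_0 = \connect_0,\, X_1 = \cycle_1^{n[1]},\, X_2 = \connect_1,\, \ldots,\, X_{2p} = \cycle_p^{n[p]},\, X_{2p+1} = \connect_p$. Each $\connect_k$ contributes only a constant to $\gain$, $\select$, and $\VAL_{\vec{0}}$. For each repeated block, $\gain(\cycle_i^{n[i]}) = n[i]\gain(\cycle_i)$ and $\select(\cycle_i^{n[i]}) = n[i]\select(\cycle_i)$ are immediate, while a short induction (reusing the concatenation identity on $\cycle_i^{n[i]} = \cycle_i^{n[i]-1} \cdot \cycle_i$) yields
\[
\VAL_{\vec{0}}(\cycle_i^{n[i]}) = n[i]\,\VAL_{\vec{0}}(\cycle_i) + \binom{n[i]}{2} \gain(\cycle_i) \cdot \select(\cycle_i).
\]
Substituting these back, the whole expression $2\VAL_{\vec{0}}(\template(\vec{n}))$ becomes a polynomial of degree at most $2$ in $\vec{n}$.

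The quadratic coefficients then come from two sources only. The $n[i]^2$ coefficient can arise solely from the $\binom{n[i]}{2}$ term inside $\VAL_{\vec{0}}(\cycle_i^{n[i]})$, and multiplying by $2$ yields exactly $\gain(\cycle_i) \cdot \select(\cycle_i) = \matrixB_\template[i,i]$. The $n[i]n[j]$ coefficient for $i \neq j$ can arise only from a cross term $\gain(X_{k_i}) \cdot \select(X_{k_j})$ in the iterated identity, where $k_i, k_j$ are the block positions of $\cycle_i^{n[i]}, \cycle_j^{n[j]}$. Since $i < j$ iff $k_i < k_j$, exactly one such cross term appears, contributing (after multiplying by $2$) a total of $2 n[i] n[j] \gain(\cycle_{\min(i,j)}) \cdot \select(\cycle_{\max(i,j)})$; matching this against the quadratic form $\vec{n}^T \matrixB_\template \vec{n}$, whose off-diagonal contribution is $(\matrixB_\template[i,j] + \matrixB_\template[j,i]) n[i] n[j]$, and requiring symmetry forces precisely \eqref{eq:matrixBForm}. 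All remaining terms — the $n[i] \VAL_{\vec{0}}(\cycle_i)$ summands, the $-\tfrac{1}{2} n[i]\gain(\cycle_i)\cdot\select(\cycle_i)$ remnants of the $\binom{n[i]}{2}$ expansion, and the cross products involving at least one $\connect_k$ — are linear or constant in $\vec{n}$ and can be absorbed into $\vectorB_\template$ and $\constB_\template$.

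Polynomial-time computability is then immediate: $\matrixB_\template$ has $p^2$ entries, each a single dot product between vectors attached to two simple cycles, and both $\vectorB_\template$ and $\constB_\template$ are sums of $O(p^2)$ analogous dot products and $\VAL_{\vec{0}}(\connect_k)$, $\VAL_{\vec{0}}(\cycle_i)$ terms, all of bit-size polynomial in $|\template| + |\vass| + |f|$. I do not expect a substantive obstacle here; the only real work is keeping the bookkeeping of quadratic-versus-linear contributions straight, and the concatenation identity organizes this mechanically.
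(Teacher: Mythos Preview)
Your proposal is correct and follows essentially the same route as the paper: both derive the concatenation identity $\VAL_{\vec{0}}(\Path_1\Path_2)=\VAL_{\vec{0}}(\Path_1)+\VAL_{\vec{0}}(\Path_2)+\gain(\Path_1)\cdot\select(\Path_2)$, iterate it over the block decomposition of $\template(\vec{n})$, use the formulas $\gain(\cycle_i^{n[i]})=n[i]\gain(\cycle_i)$, $\select(\cycle_i^{n[i]})=n[i]\select(\cycle_i)$, and $2\VAL_{\vec{0}}(\cycle_i^{n[i]})=2n[i]\VAL_{\vec{0}}(\cycle_i)+n[i](n[i]-1)\gain(\cycle_i)\cdot\select(\cycle_i)$, and then read off the quadratic coefficients. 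Your explicit iterated $N$-block form and the bookkeeping of where each quadratic term originates make the identification of $\matrixB_\template[i,j]$ slightly cleaner, but the argument is the same.
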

The proof of Lemma~\ref{l:template-value} has been relegated to the appendix.

Observe that $\matrixB_{\template}$ is similar to the matrix $\matrixA$ from~\eqref{eq:matrix-balanced}.
We exploit this similarity in the following section to eliminate the term $\vec{n}^T \matrixB_{\template} \vec{n}$ if possible.

\subsubsection{Elimination of the quadratic factor}

We show how to simplify the expression~\eqref{cycle-value} of Lemma~\ref{l:template-value} for $\VAL_{\vec{0}}(\template(\vec{n}))$.
We show that either the inequality
\[
	\VAL_{\vec{0}}(\template(\vec{n})) + \vec{g} \cdot \select(\template(\vec{n})) \leq 0
\]
has a simple solution for every $\vec{g}$, or
the quadratic term in~\eqref{cycle-value} of Lemma~\ref{l:template-value} can be substituted with a linear term.

\Paragraph{Negative and linear templates}.
Consider a template $\template$.
A template $\template$ is \emph{positive} (resp., \emph{negative}) if there exist multiplicities $\vec{n}_1, \vec{n}_2 \in \N^p$ such that
\begin{enumerate}[leftmargin=0.8cm,topsep=0pt,itemsep=-1ex,partopsep=1ex,parsep=1ex]
	\item $\vec{n}_1^T \matrixB_{\template} \vec{n}_1 > 0$ (resp., $\vec{n}_1^T \matrixB_{\template} \vec{n}_1 < 0$), and
	\item for every $t \in \N^+$, we have
	      \[
		      \gain( \template(t \vec{n}_1 + \vec{n}_2)) \cdot \select( \template(t \vec{n}_1 + \vec{n}_2)) =0.
	      \]
\end{enumerate}
A template $\template$ is \emph{linear} if there exist $\vectorC_{\template} \in \Z^p$ and $\constC_{\template} \in \Z$ such that
for all $\vec{n}$, if $\gain(\template(\vec{n})) \cdot \select(\template(\vec{n})) =0$,
then $\vec{n}^T \matrixB_{\template} \vec{n} = \vectorC_{\template} \cdot \vec{n} + \constC_{\template}$.
\medskip

We observe that the existence of a negative cycle $\template$ implies that $\VAL_{\vec{g}}(\template(\vec{n})) \leq 0$ has a solution for every $\vec{g} \in \N^k$, which in turn
implies that the answer to the average-value problem is YES. Basically, for
$\Path^t$ defined as $\template(t \vec{n}_1 + \vec{n}_2)$ and
$t$ big enough we can make
$\VAL_{\vec{g}}(\Path^t)$ arbitrarily small.

\begin{lemma}
	\label{l:negative-is-good}
	If there exist a negative template, whose any state  is reachable from some initial state,
	then the answer to the regular average-value problem with threshold $0$ is YES.
\end{lemma}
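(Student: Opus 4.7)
The plan is to build an explicit witness regular computation using the characterization in Lemma~\ref{l:conditions-avg}. Fix a negative template $\template = (\connect_0, \cycle_1, \connect_1, \ldots, \cycle_p, \connect_p)$ with witnesses $\vec{n}_1, \vec{n}_2 \in \N^p$ satisfying the two conditions of the definition. I propose to set $\Path_2 := \template(t \vec{n}_1 + \vec{n}_2)$ for a sufficiently large integer $t$, and to let $\Path_1$ be any finite path from some initial state to the first state of $\template$ (which exists by the reachability hypothesis on $\template$). Since we are in $\VASS(\Z)$, every infinite path from an initial state lifts to a computation, so $\Path_1 (\Path_2)^\omega$ defines a regular computation $\comp$. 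It then suffices to verify (C1)--(C3) of Lemma~\ref{l:conditions-avg}.

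Conditions (C1) and (C2) are almost immediate. (C1) holds because $\Path_2 = \template(t\vec{n}_1 + \vec{n}_2)$ is by construction a cycle formed by concatenating $\connect_0$ with iterations of the $\cycle_i$'s interleaved with the $\connect_i$'s, and $\Path_1$ is chosen to reach its starting state. Condition (C2), i.e., $\gain(\Path_2) \cdot \select(\Path_2) = 0$, is exactly the second defining property of a negative template.

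For (C3), I will analyze $\VAL_{\gain(\Path_1)}(\template(t\vec{n}_1 + \vec{n}_2))$ as a polynomial in $t$. Using the identity
\[
\VAL_{\vec{g}}(\template(\vec{n})) = \VAL_{\vec{0}}(\template(\vec{n})) + \vec{g} \cdot \select(\template(\vec{n}))
\]
with $\vec{g} = \gain(\Path_1)$ and $\vec{n} = t\vec{n}_1 + \vec{n}_2$, and applying Lemma~\ref{l:template-value},
\[
2\,\VAL_{\vec{0}}(\template(\vec{n})) = \vec{n}^T \matrixB_\template \vec{n} + \vectorB_\template \vec{n} + \constB_\template.
\]
Since $\select(\template(\vec{n}))$ depends linearly on $\vec{n}$ (directly from its definition as a sum over transitions with multiplicities $\vec{n}$), the term $\vec{g} \cdot \select(\template(\vec{n}))$ contributes an affine function of $\vec{n}$, and substituting $\vec{n} = t\vec{n}_1 + \vec{n}_2$ turns the whole expression into a quadratic polynomial in $t$ whose leading coefficient equals $\vec{n}_1^T \matrixB_\template \vec{n}_1$. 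By the first defining property of a negative template, this leading coefficient is strictly negative, so $\VAL_{\gain(\Path_1)}(\Path_2) \to -\infty$ as $t \to \infty$. Choosing $t$ large enough makes it $\leq 0$, establishing (C3).

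The only genuine subtlety is confirming that $\select(\template(\vec{n}))$ is linear (and not quadratic) in $\vec{n}$, so that indeed the $t^2$ term of the polynomial comes only from $\vec{n}^T \matrixB_\template \vec{n}$ and its sign is governed by condition (1) of the negative template definition; once this is in hand, the rest is a direct leading-term dominance argument. No delicate combinatorics beyond Lemmas~\ref{l:conditions-avg} and~\ref{l:template-value} is required.
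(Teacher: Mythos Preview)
Your proposal is correct and follows essentially the same approach as the paper's proof sketch: set $\Path_2 = \template(t\vec{n}_1+\vec{n}_2)$, take $\Path_1$ to be any path from an initial state to the first state of $\template$, and verify (C1)--(C3) of Lemma~\ref{l:conditions-avg} by observing that $\VAL_{\gain(\Path_1)}(\Path_2)$ is a quadratic polynomial in $t$ with negative leading coefficient $\vec{n}_1^T\matrixB_\template\vec{n}_1$ (the linearity of $\select(\template(\vec{n}))$ in $\vec{n}$ being the only point to check). The paper's sketch is slightly more terse but the argument is the same.
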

\begin{proof}[Proof sketch]
	Assume that $\template$ is negative and consider $\Path^t$ defined as $\template(t \vec{n}_1 + \vec{n}_2)$.
	Note that $2\cdot \VAL_{\vec{0}}(\Path^t) = (t \vec{n}_1 + \vec{n}_2)^T \matrixB_{\template}  (t \vec{n}_1 + \vec{n}_2) + \vectorB_{\template}  (t \vec{n}_1 + \vec{n}_2) + \constB_{\template}$ is
	a quadratic polynomial in $t$ with the leading coefficient $\vec{n}_1^T\matrixB_{\Path_1} \vec{n_1}$ negative, whereas $\vec{g} \cdot \select(\Path^t)$ is linear in $t$.
	Therefore, for $t$ big enough (which depends on $\vec{g}$) $\VAL_{\vec{g}}(\Path^t) = \VAL_{\vec{0}}(\Path^t) + \vec{g} \cdot \select(\Path^t)$ is negative and $\Path^t$ is balanced.
	It follows that $\Path_1$ leading from an initial state to the first state of $\template$ and $\Path_2 = \Path^t$ for $t$ big enough satisfy
	conditions (C1), (C2), (C3) of Lemma~\ref{l:conditions-avg} and hence the answer to the average-value problem is YES.
\end{proof}

We show that either there is a template, which is negative or all templates are linear (Lemma~\ref{l:negative-or-linear}).
Next, we show that we can check in non-deterministic polynomial time whether there exists a negative template (Lemma~\ref{l:linear-NP}).
We start with an example presenting one of the main ideas.

\begin{example}
	\label{ex:runningFour}
	Consider the VASS $\vass_e$ and the cost function $f$ from Example~\ref{ex:runningOne}.
	We consider balanced cycles $\Path$ starting in $B$. We know from Example~\ref{ex:runningThree} that in balanced cycles all transitions have the same number of occurrences.
	Therefore, $\Path$ is factorized into one of two minimal templates:
	$\template_1 = (\epsilon,e_1 e_2, \epsilon, e_3 e_4, \epsilon)$ or
	$\template_2 = (\epsilon,e_3 e_4, \epsilon, e_1 e_2, \epsilon)$.
	\[
		\matrixB_{\template_1} = \begin{bmatrix*}[r]
			4 & -1 \\
			-1 & 4
		\end{bmatrix*} \quad
		\matrixB_{\template_2} = \begin{bmatrix*}[r]
			4 & -7 \\
			-7 & 4
		\end{bmatrix*}
	\]
	Since transitions have the same number of occurrences, the multiplicities of $e_1 e_2$ and $e_3 e_4$ are equal and hence
	we consider $\vec{n}$ such that $n[1] = n[2]$. For such vectors $\vec{n}$ we have
	$\vec{n} \matrixB_{\template_1} \vec{n} = 6(n[1])^2$ and
	$\vec{n} \matrixB_{\template_2} \vec{n} = -6(n[1])^2$. Therefore, $\template_1$ is positive and $\template_2$ is negative.

	Interestingly, $\vec{n}^T \matrixB_{\template_1} \vec{n} = - \vec{n}^T \matrixB_{\template_2} \vec{n}$. This is not a coincidence.
	Recall the matrix $\matrixA'$ from Example~\ref{ex:runningThree} and observe that
	$\matrixA' = \matrixB_{\template_1} + \matrixB_{\template_2}$. As we consider
	$\vec{v}$ such that $\vec{n}^T \matrixA' \vec{n} = 0$ and hence the above equality follows.
\end{example}

\begin{restatable}{lemma}{Dichotomy}
	\label{l:negative-or-linear}
	(1)~There exists a negative template or all templates are linear.
	(2)~If there exists a positive template $\template$, then there exists a negative one of the size bounded by $|\template|^2$.
	(3)~If a template $\template$ is linear, we can compute $\vectorC_{\template}, \constC_{\template}$ in polynomial time in $|\template| + |\vass| + |f|$.
\end{restatable}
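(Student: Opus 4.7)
The plan is to work with a single construction, the \emph{reverse template} $\template^R = (\connect_0^R, \cycle_p, \connect_1^R, \cycle_{p-1}, \ldots, \cycle_1, \connect_p^R)$ obtained from $\template = (\connect_0, \cycle_1, \ldots, \cycle_p, \connect_p)$ by reversing the order of the simple cycles; I construct each fresh connecting path $\connect_i^R$ as a detour along the underlying cycle $C = \connect_0 \cycle_1 \cdots \cycle_p \connect_p$ of $\template$. Since each such detour uses at most $|C|$ transitions, $|\connect_i^R| \leq |\template|$ and hence $|\template^R| = O(p \cdot |\template|) = O(|\template|^2)$, which will settle the size bound in claim~(2). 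Writing $A_i = \gain(\cycle_i)$ and $B_i = \select(\cycle_i)$, I define the symmetric matrix $\matrixA_{\template}$ by $\matrixA_{\template}[i,j] = A_i \cdot B_j + A_j \cdot B_i$; this is precisely (twice) the matrix of the quadratic part of the balancedness polynomial $P(\vec{n}) = \gain(\template(\vec{n})) \cdot \select(\template(\vec{n}))$. Applying formula~\eqref{eq:matrixBForm} to both $\template$ and $\template^R$ (after reindexing so that $\cycle_i$ carries the same label $i$ in both) yields the key identity $\matrixB_{\template} + \matrixB_{\template^R} = \matrixA_{\template}$, as illustrated in Example~\ref{ex:runningFour}.

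For claim~(2), I take a positive $\template$ with witnesses $(\vec{n}_1, \vec{n}_2)$. Extracting the $t^2$-coefficient of the identity $P(t \vec{n}_1 + \vec{n}_2) \equiv 0$ in $t$ gives $\vec{n}_1^T \matrixA_{\template} \vec{n}_1 = 0$, and combined with the key identity this forces $\vec{n}_1^T \matrixB_{\template^R} \vec{n}_1 = -\vec{n}_1^T \matrixB_{\template} \vec{n}_1 < 0$. The remaining step is to exhibit a balanced-ray witness for $\template^R$ with the same leading component $\vec{n}_1$: since the linear and constant parts of $P_{\template^R}$ differ from those of $P$ only by contributions from the $\connect_i^R$, I solve the linear $t^1$-equation together with the quadratic $t^0$-equation for a replacement $\vec{n}_2^R \in \N^p$ (feasible after scaling $\vec{n}_1$ if needed), yielding a negative $\template^R$ of the claimed size.

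For claim~(1), the contrapositive of~(2) also rules out positive templates under the hypothesis, so every template $\template$ satisfies $\vec{n}_1^T \matrixB_{\template} \vec{n}_1 = 0$ on the leading components of balanced rays. A Nullstellensatz-style argument on the variety $\{P = 0\}$ then forces $\vec{n}^T \matrixB_{\template} \vec{n}$ to reduce modulo $P$ to an affine form, which is precisely linearity of $\template$; I handle the degenerate cases ($P$ reducible over $\Q$, or $\matrixA_{\template} = 0$ making balancedness itself linear) by direct inspection. For claim~(3), a linear template satisfies $\vec{n}^T \matrixB_{\template} \vec{n} - \vectorC_{\template} \cdot \vec{n} - \constC_{\template} \equiv 0$ on $\{P = 0\}$; by a degree argument this polynomial equals $c \cdot P(\vec{n})$ for a scalar $c \in \Q$, so matching the quadratic parts gives $\matrixB_{\template} = \frac{c}{2} \matrixA_{\template}$ and $c$ is recovered from any non-zero entry of $\matrixA_{\template}$. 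Then $\vectorC_{\template}$ and $\constC_{\template}$ are read off from the linear and constant parts of $-c P$, all computable in polynomial time in $|\template| + |\vass| + |f|$.

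The main obstacle will be the Nullstellensatz-style step in claim~(1): promoting the statement ``$\vec{n}^T \matrixB_{\template} \vec{n}$ vanishes on leading components of balanced rays'' to ``$\vec{n}^T \matrixB_{\template} \vec{n}$ is affine on all of $\{P = 0\}$'' requires careful analysis of the geometry of the balancedness variety in $\Q^p$, especially when $P$ factors over $\Q$ or $\matrixA_{\template}$ is singular.
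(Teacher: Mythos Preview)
Your core identity $\matrixB_{\template} + \matrixB_{\template^R} = \matrixA_{\template}$ is exactly the mechanism the paper uses, and your construction of a reversed template of size $O(|\template|^2)$ matches theirs. But the paper avoids every one of the obstacles you flag by a single device you are missing: it \emph{homogenizes}. It appends the extra cycle $\cycle_{p+1} = \template(\vec{0}) = \connect_0\connect_1\cdots\connect_p$ to form an extended template $\template_1$ in dimension $p+1$, and works with $\vec{m} = (\vec{n},1)$. The point is that $\template_1\bigl(t\,(\vec{n},1) + (\vec{n},0)\bigr)$ has exactly the multiset of transitions of $(t{+}1)$ copies of $\template(\vec{n})$, so balancedness along the whole ray is automatic whenever $\template(\vec{n})$ is balanced. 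This instantly supplies the witness pair $(\vec{n}_1,\vec{n}_2)$ for both the negative and positive cases (the analogue of your $\vec{n}_2^R$ is built into the construction, not solved for), and for the reversed template $\template_2$ the connecting paths are arranged so that $\template_2(\vec{0}) = \cycle_{p+1}^{\,p}$, giving the same multiset argument. Your step ``solve the $t^1$- and $t^0$-equations for $\vec{n}_2^R \in \N^p$'' is not obviously feasible without this trick.

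The homogenization also dissolves your ``main obstacle'' for claim~(1) and replaces your flawed argument for claim~(3). In dimension $p{+}1$ the balancedness condition is the purely homogeneous $\vec{m}^T\matrixC\vec{m}=0$, and the dichotomy becomes: for a given balanced $\vec{n}$, either $\vec{m}^T\matrixB_{\template_1}\vec{m}\neq 0$ (producing a negative template as above) or $\vec{m}^T\matrixB_{\template_1}\vec{m}=0$. Under the hypothesis ``no negative template'', the second alternative holds for \emph{every} balanced $\vec{n}$; expanding $\vec{m}^T\matrixB_{\template_1}\vec{m}$ by stripping off the $(p{+}1)$-st row and column gives, identically,
\[
\vec{n}^T\matrixB_{\template}\vec{n} \;=\; \vec{m}^T\matrixB_{\template_1}\vec{m} \;-\; \sum_{i=1}^p n[i]\bigl(\gain(\cycle_i)\cdot\select(\cycle_{p+1}) + \gain(\cycle_{p+1})\cdot\select(\cycle_i)\bigr) \;-\; \gain(\cycle_{p+1})\cdot\select(\cycle_{p+1}),
\]
so $\vectorC_{\template}$ and $\constC_{\template}$ are read off \emph{explicitly} from $\cycle_{p+1}$, with no Nullstellensatz and no appeal to the geometry of $\{P=0\}$. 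Your degree argument for~(3), asserting $\matrixB_{\template} = \tfrac{c}{2}\matrixA_{\template}$, is false in general: already in Example~\ref{ex:runningFour} the matrix $\matrixB_{\template_1} = \bigl(\begin{smallmatrix}4&-1\\-1&4\end{smallmatrix}\bigr)$ is not a scalar multiple of $\matrixA' = \bigl(\begin{smallmatrix}8&-8\\-8&8\end{smallmatrix}\bigr)$. The correct $\vectorC_{\template},\constC_{\template}$ come from the last row and column of $\matrixB_{\template_1}$, i.e.\ from the interaction of the cycles with $\cycle_{p+1}=\template(\vec{0})$, not from a scalar relating $\matrixB_{\template}$ to $\matrixA_{\template}$.
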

\begin{proof}[{Proof ideas}] Consider a template $\template$ with all connecting paths being empty, i.e., $\template = (\epsilon, \cycle_1, \ldots, \cycle_p, \epsilon)$.
	Let $\vec{n}$ be a vector of multiplicities such that a cycle $\template(\vec{n})$ is balanced, i.e.,
	$\gain(\template(\vec{n})) \cdot \select(\template(\vec{n})) = 0$. Then, $\template(t\vec{n})$ is balanced for all $t \in \N$.
	We consider three cases:

	\noindent\emph{The case $\vec{n}^T  \matrixB_{\template} \vec{n} < 0$}. Then, $\template$ is negative. Vectors $\vec{n}_1 = \vec{n}$ and $\vec{n}_2 = \vec{0}$ witness negativity.

	\noindent\emph{The case $\vec{n}^T  \matrixB_{\template} \vec{n} > 0$}. Then, $\template$ is positive and we show that the template with reversed cycles $\template^R = (\epsilon, \cycle_p, \ldots, \cycle_1, \epsilon)$ is negative.

	First, observe that $\gain(\template(\vec{n})) \cdot \select(\template(\vec{n})) = 0$ can be stated as a matrix equation $\vec{n}^T \matrixA \vec{n} =0$, where
	for all $i,j$ we have
	\[
		\matrixA[i,j] = \gain(\cycle_i) \cdot \select(\cycle_j) + \gain(\cycle_j) \cdot \select(\cycle_i).
	\]
	This can be derived in virtually the same way as in Section~\ref{sec:finite-average}.
	Next, recall from Lemma~\ref{l:template-value} that $\matrixB_{\template}$ is similar to $\matrixA$. For all $i,j$ we have
	\[
		\matrixB_{\template}[i,j] =  \gain(\cycle_{min(i,j)}) \cdot \select(\cycle_{max(i,j)})
	\]
	A similar identity holds for $\template^R$, but $i,j$ are substituted by respectively $p-i+1$ and $ p-j+1$ and $\min$ is swapped with $\max$. In consequence, we have
	\[
		\matrixB_{\template^R}[p+1-i,p+1-j] =  \gain(\cycle_{max(i,j)}) \cdot \select(\cycle_{min(i,j)})
	\]
	Therefore, for all $i,j$ we have
	\[
		\matrixA[i,j] = \matrixB_{\template}[i,j] + \matrixB_{\template^R}[p-i+1,p-j+1]
	\]
	It follows that
	\[
		0 = \vec{n}^T \matrixA \vec{n} =  \vec{n}^T \matrixB_{\template} \vec{n} +  \vec{n_R}^T \matrixB_{\template^R} \vec{n_R}
	\]
	where $\vec{n}_R$ is the reversed vector $\vec{n}$. Note that $\template^R(\vec{n}_R)$ has the same multiset of transitions as  $\template^R(\vec{n}_R)$, and hence
	$\template^R(\vec{n}_R)$ is balanced, and  $\template^R(t\vec{n}_R)$ is balanced for all  $t\in \N$.
	It follows that $\template^R$ is negative.
	\smallskip

	\noindent\emph{The above two cases fail}.
	Then for all $\vec{n}$, if $\gain(\template(\vec{n})) \cdot \select(\template(\vec{n})) = 0$, then
	$\vec{n}^T  \matrixB_{\template} \vec{n} = 0$. Therefore, $\template$ is linear with $\vectorC_{\template}, \constC_{\template}$ being $\vec{0}$ and $0$ respectively.
	In the general case, if the connecting paths are non-empty, then $\vectorC_{\template}, \constC_{\template}$ need not be $0$.
	Intuitively, $\gain(\template(\vec{n}))  \cdot  \select(\template(\vec{n}))$ depends on the whole template whereas $\matrixB_{\template}$ depends only on
	the cycles of $\template$. The expression $\vectorC_{\template} \cdot \vec{n} + \constC_{\template}$ accounts for that difference.
	\smallskip

	Essentially the same line of reasoning can be applied if the connecting paths in $\template$ are non-empty. However, the argument becomes more technical. We present it in full detail in the appendix.
\end{proof}

Now, we discuss how to check whether there exists a negative template.
One of the conditions of negativity is that
$\gain(\template(t \vec{n}_1 + \vec{n}_2)) \cdot \select(\template(t \vec{n}_1 + \vec{n}_2)) =0$.
Recall that $\gain$ and $\select$ depend on the multiset of transitions, but not on the order of transitions.
Therefore, for a given template $\template$,
we define $\trans_{\template}(\vec{n}) \in \N^m$, where $m  = |\delta|$, as the vector of multiplicities of transitions in $\template(\vec{n})$.
We will write $\trans(\vec{n})$ if $\template$ is clear from the context.

\newcommand{\vectorM}{\vec{z}}

\begin{restatable}{lemma}{ShortVectors}
	\label{l:short-vectors}
	Let $\template$ be a template and
	let $\vec{n} \in \N^p$ be a vector of multiplicities.
	There exist $r_1, \ldots, r_{\ell} \in \Q^+$ and $\vectorM_1, \ldots, \vectorM_{\ell} \in \N^p$
	such that
	\begin{enumerate}[label=(\arabic*)]
		\item $supp(\vectorM_i) \leq m$ (the number of transitions of $|\vass|$),
		\item there exists $t \in \N^+$ such that $\trans(\vectorM_i) = t \cdot \trans(\vec{n})$, and
		\item $\vec{n} = \sum_{i=1}^{\ell} r_i \vectorM_i$.
	\end{enumerate}
\end{restatable}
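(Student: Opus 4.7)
I would apply Carath\'eodory's theorem to a suitable rational polyhedral cone. Let $B \in \N^{m \times p}$ be the matrix whose $j$-th column is $\trans(\cycle_j)$, and let $\vec{a} = \trans(\connect_0 \connect_1 \cdots \connect_p) \in \N^m$, so that $\trans(\vectorM) = \vec{a} + B\vectorM$ for every multiplicity vector $\vectorM \in \N^p$. Condition~(2) then reads as the linear equation $\vec{a} + B\vectorM_i = t \cdot (\vec{a} + B\vec{n})$. Introducing a homogenising variable $u$, consider the rational polyhedral cone
\[
\hat{C} = \bigl\{ (\vectorM, t, u) \in \Q^{p+2}_{\geq 0} : u\vec{a} + B\vectorM = t \cdot \trans(\vec{n}) \bigr\},
\]
which contains $(\vec{n}, 1, 1)$ because $\vec{a} + B\vec{n} = \trans(\vec{n})$.

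First, I invoke Carath\'eodory's theorem for rational polyhedral cones to decompose $(\vec{n}, 1, 1) = \sum_i \rho_i (\vectorM_i, t_i, u_i)$ as a non-negative rational combination of primitive integer extreme rays of $\hat{C}$. Since $\hat{C}$ is defined by $m$ linear equations in $p+2$ non-negative variables, each extreme ray has at most $m+1$ non-zero coordinates. For an extreme ray with $u_i = 1$ and $t_i \geq 1$, the cone equation yields $\trans(\vectorM_i) = \vec{a} + B\vectorM_i = t_i \cdot \trans(\vec{n})$, establishing condition~(2) with $t = t_i \in \N^+$; moreover $|supp(\vectorM_i)| \leq m - 1 \leq m$ because the positive entries $u_i = 1$ and $t_i \geq 1$ already occupy two slots of the $m+1$ support budget. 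Projecting the decomposition onto the $\vectorM$-components yields $\vec{n} = \sum_i \rho_i \vectorM_i$, which is exactly condition~(3).

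The main obstacle is to handle extreme rays whose primitive integer representative has $u_i = 0$ (pure recession rays, satisfying $B\vectorM_i = t_i \cdot \trans(\vec{n})$) or $u_i \geq 2$ (which would require a fractional rescaling to the slice $u = 1$), since in neither case does the ray directly provide an integer vector satisfying condition~(2). I plan to resolve this by a pairing argument: using the equations $\sum_i \rho_i u_i = 1$ and $\sum_i \rho_i t_i = 1$ extracted from the $u$- and $t$-projections of the decomposition, each problematic ray is absorbed into a $u_j = 1$ ray in such a way that the combined vector is integer, still satisfies condition~(2), and has support at most $m$. The bound $m+1$ on each extreme ray's support together with the linear-dependence structure of $B$ ensure that the support after combination does not exceed $m$; carrying out this pairing carefully is the crux of the argument.
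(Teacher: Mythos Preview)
Your approach via Carath\'eodory on a homogenised cone is different from the paper's, and the gap you yourself flag as ``the crux of the argument'' is genuine and not easily closed along the lines you sketch.

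The concrete problems are these. First, there is no reason any primitive integer extreme ray of $\hat{C}$ has $u_i = 1$; it is entirely possible that every extreme ray has $u_i = 0$ or $u_i \geq 2$, so your ``main case'' may be vacuous. Second, your pairing argument is not worked out: if you combine a $u=0$ ray with a $u\geq 1$ ray so that the $u$-coordinate becomes $1$, the $\vectorM$-part of the combination can have support up to roughly $2m$, not $m$. Your one-line appeal to ``the linear-dependence structure of $B$'' to bring this back down to $m$ is exactly the nontrivial step; Carath\'eodory gives you nothing further here, and you would have to run a separate support-reduction argument on the combined vector. That argument is essentially what the paper does directly, so the detour through $\hat{C}$ does not buy anything.

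The paper avoids homogenisation altogether. It first proves an auxiliary lemma: for any $\vec{n}$, there exists a single $\vec{n}_0 \in \N^p$ with $supp(\vec{n}_0) \subseteq supp(\vec{n})$, $|supp(\vec{n}_0)| \leq m$, and $\trans(\vec{n}_0) = t\,\trans(\vec{n})$ for some $t \in \N^+$. The proof is an elementary exchange argument on the linear part $B$: if $|supp(\vec{n})| > m$, then $m+1$ of the columns $B\OneVector{i}$ with $i \in supp(\vec{n})$ are linearly dependent, giving a direction $\vec{y}$ with $B\vec{y} = 0$; move along $\pm\vec{y}$ until a coordinate hits zero, and iterate. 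Finally one scales the resulting rational vector to an integer vector. With this lemma in hand, the main statement follows by induction on $|supp(\vec{n})|$: take the maximal $r \in \Q^+$ with $\vec{n} - r\vec{n}_0 \geq 0$, clear denominators to get an integer remainder $\vec{k}$ with strictly smaller support (since $supp(\vec{n}_0) \subseteq supp(\vec{n})$), and recurse on $\vec{k}$. The containment $supp(\vec{n}_0) \subseteq supp(\vec{n})$ is what makes the induction go through, and it is precisely the ingredient your Carath\'eodory decomposition does not provide.
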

\noindent\emph{Remark}. The condition $\trans(\vectorM_i) = t \cdot \trans(\vec{n})$ implies that if the cycle $\template(\vec{n})$ is balanced, then
all the cycles $\template(\vectorM_1), \ldots, \template(\vectorM_{\ell})$ are balanced as well.
\smallskip
\begin{proof}[{Proof ideas}]
	First, observe that $\trans$ is a linear function transforming vectors from $\N^p$ into vectors from $\N^m$.
	The value $p$ can be exponential w.r.t. $m = |\delta|$ and hence we show that each vector from $\N^p$ can be presented as a  linear combination over $\Q^+$ of
	vectors with polynomially-bounded supports.

	The full proof has been relegated to the appendix.
\end{proof}

Next, we show that if there exists a negative template, then there exists one of polynomial size.

\begin{restatable}{lemma}{ShortNegativeTemplate}
	\label{l:short-negative-template}
	If there exists a negative template, then there exists one of polynomial size in $|\template| + |\vass| + |f|$.
\end{restatable}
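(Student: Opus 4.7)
The plan is to take an arbitrary negative template $\template = (\alpha_0, \beta_1, \alpha_1, \ldots, \beta_p, \alpha_p)$ together with witnesses $\vec{n}_1, \vec{n}_2 \in \N^p$ and construct from it a new negative template $\template'$ of polynomially bounded size. Two things need to be controlled: the number of cycles (which I will shrink via support arguments on the multiplicity vectors) and the lengths of the connecting paths (which I will shorten via shortest paths in $\vass$ and compensate for by adjusting the witness $\vec{n}_2'$). The enabling structural fact is equation~\eqref{eq:matrixBForm}: $\matrixB_{\template}$ depends only on the ordered list of cycles, so keeping only the cycles indexed by a set $I \subseteq \{1,\ldots,p\}$ in their original order yields a template whose matrix is exactly the principal submatrix of $\matrixB_{\template}$ on $I$, and any vector supported on $I$ inherits the quadratic form $\vec{n}^T \matrixB_{\template} \vec{n}$.

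For the cycle count, I would apply Lemma~\ref{l:short-vectors} to $\vec{n}_1$ and to $\vec{n}_2$, writing $\vec{n}_k = \sum_i r_i^{(k)} \vec{z}_i^{(k)}$ with $|supp(\vec{z}_i^{(k)})| \le m = |\delta|$ and $\trans_{\template}(\vec{z}_i^{(k)}) = t_i^{(k)} \cdot \trans_{\template}(\vec{n}_k)$. Expanding $0 > \vec{n}_1^T \matrixB_{\template} \vec{n}_1 = \sum_{i,j} r_i^{(1)} r_j^{(1)}\, \vec{z}_i^{(1) T} \matrixB_{\template} \vec{z}_j^{(1)}$ yields a pair $(i_0, j_0)$ with strictly negative bilinear contribution. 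I then distinguish two cases. If some $\vec{z}_i^{(1)}$ already satisfies $\vec{z}_i^{(1) T} \matrixB_{\template} \vec{z}_i^{(1)} < 0$, take $\vec{n}_1' = \vec{z}_i^{(1)}$. Otherwise, the $2 \times 2$ sub-template spanned by $\mathrm{supp}(\vec{z}_{i_0}^{(1)}) \cup \mathrm{supp}(\vec{z}_{j_0}^{(1)})$ has nonnegative diagonal but negative off-diagonal $\matrixB$-entries, so it is positive, and Lemma~\ref{l:negative-or-linear}(2) produces from it a negative template of size $O(m^2)$. A short replacement $\vec{n}_2'$, obtained analogously from the decomposition of $\vec{n}_2$ and leveraging the fact that $\trans_{\template}$-proportional vectors preserve balance by the remark after Lemma~\ref{l:short-vectors}, gives a polynomial-size index set $I = \mathrm{supp}(\vec{n}_1') \cup \mathrm{supp}(\vec{n}_2')$.

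The remaining issue is that the naive sub-template on $I$ has connecting paths $\alpha_i'$ obtained by concatenating possibly exponentially many original $\alpha$'s. I would replace each $\alpha_i'$ by a shortest path in $\vass$ between its endpoints (of length at most $|Q|$) and absorb the resulting shift in the base vectors $\vec{G}_0' = \sum_i \gain(\alpha_i')$ and $\vec{S}_0' = \sum_i \select(\alpha_i')$ into a further adjustment of $\vec{n}_2'$. This ensures that condition (2) of negativity, which expands as the three polynomial identities in $t$ obtained from the coefficients of $t^2, t,$ and $1$ in $\gain(\template'(t \vec{n}_1' + \vec{n}_2')) \cdot \select(\template'(t \vec{n}_1' + \vec{n}_2'))$, still holds. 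The main obstacle is this last compatibility step: shortening the connecting paths changes the $t^1$- and $t^0$-coefficient identities, and one has to show that a short-support integer $\vec{n}_2'$ always exists satisfying both the linear constraint and the quadratic constant-term constraint. I expect this to follow by another application of Lemma~\ref{l:short-vectors} to an auxiliary Hilbert basis of solutions to the linear part, combined with the observation that the quadratic constraint is already forced on the affine lattice cut out by the $t^2$- and $t$-coefficient equations.
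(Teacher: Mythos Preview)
Your cycle-count reduction is close to the paper's, but the case split is not exhaustive. You dichotomise into ``some $\vec{z}_i^{(1)}$ has negative $\matrixB$-form'' versus ``the sub-template on $\mathrm{supp}(\vec{z}_{i_0}^{(1)}) \cup \mathrm{supp}(\vec{z}_{j_0}^{(1)})$ is positive'', but if some $\vec{z}_i^{(1)T}\matrixB_{\template}\vec{z}_i^{(1)} > 0$ while others are $\geq 0$, neither branch applies. The paper instead argues: since $\sum_{i,j} r_i r_j\,\vec{z}_i^T\matrixB_{\template}\vec{z}_j \neq 0$, either some single $\vec{z}_i$ or some pair $r_i\vec{z}_i + r_j\vec{z}_j$ has \emph{nonzero} $\matrixB$-form; take that as the new $\vec{n}_1'$ and allow the resulting short template to be positive \emph{or} negative. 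Only after the whole iteration terminates at polynomial size is Lemma~\ref{l:negative-or-linear}(2) invoked once to flip a positive template to a negative one.

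The real gap is your treatment of the connecting paths. Replacing each long $\alpha_i'$ by an arbitrary shortest path changes $G_0 = \gain(\template'(\vec{0}))$ and $S_0 = \select(\template'(\vec{0}))$ to some unrelated $G_0'', S_0''$. Condition~(2) of negativity then demands that the three coefficients of $t$ in $(G_0'' + tG_1 + G_2)\cdot(S_0'' + tS_1 + S_2)$ all vanish. Adjusting $\vec{n}_2'$ can only move $(G_2,S_2)$ inside the $\N$-cone generated by $\{(\gain(\beta_i),\select(\beta_i)) : i \in I\}$, and there is no reason this cone contains the correction needed to offset an arbitrary shortest-path replacement; nothing about Lemma~\ref{l:short-vectors} or a Hilbert basis of the linear part forces the quadratic $t^0$-constraint, and your ``already forced on the affine lattice'' claim is exactly the linear-template property, which is precisely what fails when a negative template exists.

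The paper sidesteps this entirely. After deleting zero-multiplicity cycles, a long connecting path is not replaced but \emph{decomposed}: its simple cycles are extracted and added back as fresh template cycles (as in Lemma~\ref{l:minimal-factorization}), with the new components of $\vec{k}_1$ set to $0$ and those of $\vec{k}_2$ set to $1$. This keeps the multiset of transitions of $\template'(t\vec{k}_1+\vec{k}_2)$ identical to that of $\template(t\vec{n}_1+\vec{n}_2)$, so balance is preserved for free, and by \eqref{eq:matrixBForm} the quadratic form on $\vec{k}_1$ is unchanged. The new template may again have too many cycles, but it is strictly shorter, so iterating the support-reduction terminates at polynomial size.
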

\begin{proof}[{Proof ideas}]
	Assume that a template $\template$ is positive (resp., negative).
	Let $\vec{n}_1, \vec{n}_2 \in \N^p$ be the vectors witnessing positivity (resp., negativity).
	Then, using Lemma~\ref{l:short-vectors}, we show that there exist $\vec{n}_1^0, \vec{n}_2^0$ that witness template $\template$ being positive or negative and the support of both $\vec{n}_1^0, \vec{n}_2^0$ is bounded by $|\delta|$.
	We remove from $\template$ cycles corresponding to coefficient $0$ in both $\vec{n}_1^0$ and $\vec{n}_2^0$.
	We get a template with polynomially many cycles.
	If all connecting paths are still bounded by $|Q|$ we terminate with $\template$ of polynomial size.
	However, as we remove some cycles, some connecting path are concatenated and in the result we get connecting paths longer than $|Q|$.
	For such connecting path, we extract simple cycles and group them (as in Lemma~\ref{l:minimal-factorization}).
	We get another minimal template $\template'$ shorter than $\template$, which is positive or negative.
	By iterating this process, we get a polynomial-size template $\template^{\neq 0}$, which is positive or negative.
	Then, by Lemma~\ref{l:negative-or-linear} there exists a negative template of polynomial size.

	The full proof has been relegated to the appendix.
\end{proof}

Still, to check whether there exists a negative template we have to solve a system consisting of
a quadratic inequality $\vec{n}_1^T \matrixB_{\template} \vec{n}_1 < 0$ and
a quadratic equation, which corresponds to $\gain(\template(t\vec{n}_1 + \vec{n}_2)) \cdot \select(\template(t\vec{n}_1+ \vec{n}_2)) = 0$.
We show that this quadratic equation can be
transformed into a system of linear inequalities.

\begin{restatable}{lemma}{BalancedLinearInequalities}
	\label{l:balanced-inequalities}
	Let $\template$ be a template.
	There exist systems of linear equations and inequalities $S_1, \ldots, S_l$ such that
	(1)~each $S_i$ has polynomial size in $|\template| + |\vass| + |f|$,
	(2)~for all $\vec{n} \in \N^p$ we have $\gain(\template(\vec{n})) \cdot \select(\template(\vec{n})) = 0$ if and only if
	for some $i$ the vector $\vec{n}$ satisfies $S_i$.
\end{restatable}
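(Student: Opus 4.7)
My plan is to expand the balance equation into a single quadratic form in $\vec{n}$ and then transform it into a disjunction of polynomial-size linear systems by combinatorial decomposition. Writing $\gain(\template(\vec{n})) = \vec{u}_g + G\vec{n}$ and $\select(\template(\vec{n})) = \vec{v}_s + S\vec{n}$, where the columns of $G,S \in \Z^{k\times p}$ are $\gain(\cycle_i),\select(\cycle_i)$ respectively and $\vec{u}_g,\vec{v}_s$ collect the contributions of the connecting paths, the balance condition becomes
\[
\vec{n}^T(G^TS)\vec{n} + (\vec{u}_g^T S + \vec{v}_s^T G)\vec{n} + \vec{u}_g\cdot\vec{v}_s \;=\; 0.
\]

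Next I would invoke Lemma~\ref{l:short-vectors} together with its remark: any balanced $\vec{n}$ decomposes as $\vec{n} = \sum_j r_j\vectorM_j$ with $r_j\in\Q^+$, $|\mathrm{supp}(\vectorM_j)|\leq m=|\delta|$, and each $\trans_\template(\vectorM_j)$ proportional to $\trans_\template(\vec{n})$, so that each $\template(\vectorM_j)$ is itself balanced. This covers the balanced set by $\Q^+$-cones generated by short-support balanced atoms that share a common primitive transition direction. I would index each system $S_i$ by a pair $(J,\vec{d}_J)$, where $J\subseteq\{1,\ldots,p\}$ has $|J|\leq m$ (the common support of the atoms) and $\vec{d}_J\in\N^{|J|}$ is a primitive balanced direction of transition multiplicities inside the sub-template $\template|_J$. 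The system $S_i$ then linearly encodes (a) $n[i]=0$ for $i\notin J$, (b) $\trans_\template(\vec{n})|_J = \mu\,\vec{d}_J$ for a fresh auxiliary integer variable $\mu\geq 0$, and (c) a single linear equation matching the connecting-path constants $\vec{u}_g,\vec{v}_s$ with the scalar $\mu$. Each such $S_i$ has polynomial size in $|\template|+|\vass|+|f|$.

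The main obstacle is controlling the quadratic balance already at the level of each small sub-template $\template|_J$: even restricted to a polynomially bounded support, the balance is a genuine quadratic form that need not cut out a semi-linear set in $\vec{n}$. I would handle this recursively, exploiting the bounded rank of $H|_J = (G|_J)^T(S|_J)$ (at most $k$). For $|J|=1$ the balance is a univariate quadratic with at most two integer roots or identically zero, giving a linear description directly. For larger $|J|$, I would iterate the cycle-grouping argument of Lemma~\ref{l:minimal-factorization} inside $\template|_J$ to extract a primitive balanced direction, reducing to a strictly smaller sub-template; termination in at most $m$ steps follows from the same decreasing measure (sum of connecting-path lengths plus number of cycles) that controls Lemma~\ref{l:short-negative-template}. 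Enumerating $(J,\vec{d}_J)$ over all admissible supports and primitive balanced directions then yields the family $S_1,\ldots,S_l$, with $l$ potentially exponential in $|\template|$ but each $S_i$ of polynomial size, which is precisely what the downstream $\NP$ upper bound requires.
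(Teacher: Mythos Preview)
Your proposal misses the key structural hypothesis that drives the paper's argument and, without it, the plan cannot succeed. The balance condition $\gain(\template(\vec{n}))\cdot\select(\template(\vec{n}))=0$ is, as you correctly write, the vanishing of a quadratic form $\vec{n}_1^T\matrixA\vec{n}_1$ with $\vec{n}_1=(n[1],\dots,n[p],1)$. In general the zero set of such a form over $\N^p$ is \emph{not} a finite union of solution sets of polynomial-size linear systems (think of $n_1 n_2 - n_3^2 = 0$). What makes the lemma true here is the standing assumption of Section~\ref{sec:regular-value} that $\vass$ has no regular computation of value $-\infty$; by Lemma~\ref{l:cycles} this forces $\vec{m}^T\matrixA\vec{m}\geq 0$ for all $\vec{m}\in\N^{p+1}$. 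The paper exploits this semidefiniteness: running the quadratic formula on each variable, either the discriminant must be zero (giving the linear double-root equation $2\matrixA[i,i]\,n[i] = -B_i$) or the variable must be $0$, because any other root would yield a nearby vector with $\vec{m}^T\matrixA\vec{m}<0$. The systems $S_P$ are then indexed only by the subset $P$ of variables set to zero.

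Your route, by contrast, tries to cover the balanced set combinatorially via Lemma~\ref{l:short-vectors}. But that lemma decomposes a \emph{single} $\vec{n}$ into short-support atoms whose transition vector is proportional to $\trans(\vec{n})$; the decomposition, and in particular the ``primitive balanced direction'' $\vec{d}_J$, depends on $\vec{n}$. Enumerating pairs $(J,\vec{d}_J)$ would require ranging over infinitely many primitive $\vec{d}_J$, so the family $S_1,\dots,S_l$ is not finite. Restricting to a fixed support $J$ does not help: the balance on $\template|_J$ is still a genuine quadratic, and neither bounded rank of $H|_J$ nor cycle-grouping turns its zero set into a finite union of linear sets absent the semidefiniteness assumption. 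The recursive step you sketch (``extract a primitive balanced direction, reduce to a strictly smaller sub-template'') has no mechanism that eliminates the quadratic term; it is here that the argument breaks.
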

\begin{proof}[{Proof ideas}]
	We construct a matrix $\matrixA \in \Z^{(p+1)\times (p+1)}$ such that
	for every $\vec{n} \in \N^p$ we have
	$\gain(\template(\vec{n})) \cdot \select(\template(\vec{n})) = 0$ if and only if for
	$\vec{n}_1 = (n[1], \ldots, n[p],1)$ we have
	$\vec{n}_1^T \matrixA \vec{n}_ = 0$.
	Moreover, we know that for all $\vec{n}_1 \in \N^{p+1}$
	we have  $\vec{n}_1^T \matrixA \vec{n}_1 \geq 0$.
	We carry out standard elimination of quadratic terms using the quadratic formula on the successive variables in $\vec{n}_1^T \matrixA \vec{n}_1$.
	Such a procedure involves computing squares in each step ($\Delta = b^2 -4ac$) and hence it can lead to coefficients of double-exponential order in $p+1$ (the number of dimensions).
	However, due to condition $\vec{n}_1^T \matrixA \vec{n_1} \geq 0$
	for all $\vec{n}_1 \in \N^{p+1}$, we can show that for every $i$
	either $x[i] = 0$ or $x[i]$ has to be a double root ($\Delta = 0$) and hence
	it can be described with a simple equation, which is of polynomial size.

	The full proof has been relegated to the appendix.
\end{proof}

Finally, Lemma~\ref{l:short-negative-template} and Lemma~\ref{l:balanced-inequalities} imply that
existence of a negative template can be solved in $\NP$ via reduction to integer quadratic programming.

\begin{lemma}
	\label{l:negativity-np}
	We can verify in $\NP$ whether a given $\VASS(\Z,k)$ $\vass$ has a negative template.
\end{lemma}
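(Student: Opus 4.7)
The plan is to combine Lemma~\ref{l:short-negative-template} and Lemma~\ref{l:balanced-inequalities} into a single polynomial-size instance of integer quadratic programming, whose feasibility is in \NP{} by~\cite{del2017mixed}. By Lemma~\ref{l:short-negative-template}, if $\vass$ admits a negative template at all then it admits one of polynomial size, so the algorithm first nondeterministically guesses such a template $\template = (\connect_0, \cycle_1, \ldots, \cycle_p, \connect_p)$ and uses Lemma~\ref{l:template-value} to compute the symmetric matrix $\matrixB_{\template}$ in polynomial time. The quadratic clause of negativity then becomes the single inequality $\vec{n}_1^T \matrixB_{\template} \vec{n}_1 < 0$ in $\vec{n}_1 \in \N^p$.

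The main technical obstacle is the universal quantifier on $t$ in the second clause of the definition, which demands $\gain(\template(t\vec{n}_1 + \vec{n}_2)) \cdot \select(\template(t\vec{n}_1 + \vec{n}_2)) = 0$ for every $t \in \N^+$. Lemma~\ref{l:balanced-inequalities} rephrases the balanced condition on a single vector as membership in some system $S_i$ from a finite family $S_1, \ldots, S_\ell$ of polynomial-size systems of linear equalities and inequalities. Since one $S_i$ must catch infinitely many values of $t$ by pigeonhole, and such a system is affine, the entire ray $\{\, t\vec{n}_1 + \vec{n}_2 : t \in \N^+\,\}$ must then lie in $S_i$. Concretely, for each equation $Ax = b$ of $S_i$ one derives $A\vec{n}_1 = 0$ and $A\vec{n}_2 = b$ (from two distinct values of $t$); and for each inequality $Cx \leq d$, letting $t \to \infty$ forces $C\vec{n}_1 \leq 0$, and then the $t = 1$ case $C(\vec{n}_1 + \vec{n}_2) \leq d$ already implies the inequality for every larger $t$. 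The algorithm therefore additionally guesses an index $i$ (which, since each $S_i$ has polynomial size, is representable in polynomial bits) and installs these linear constraints on $\vec{n}_1, \vec{n}_2 \in \N^p$.

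Assembling the pieces, the search for a negative template reduces to guessing $\template$ and $S_i$, both of polynomial size, and then finding $\vec{n}_1, \vec{n}_2 \in \N^p$ satisfying a polynomial-size system of linear constraints together with the single strict quadratic inequality $\vec{n}_1^T \matrixB_{\template} \vec{n}_1 \leq -1$ (equivalent over $\Z$ to $\vec{n}_1^T \matrixB_{\template} \vec{n}_1 < 0$). This is a polynomial-size instance of integer quadratic programming, whose feasibility is in \NP{} by~\cite{del2017mixed}, yielding an \NP{} procedure that nondeterministically guesses $\template$, $i$, and a witness pair $(\vec{n}_1, \vec{n}_2)$ and verifies all conditions in polynomial time.
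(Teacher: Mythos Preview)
Your proposal is correct and follows essentially the same approach as the paper: guess a polynomial-size template via Lemma~\ref{l:short-negative-template}, guess one of the linear systems $S_i$ from Lemma~\ref{l:balanced-inequalities}, and reduce to a single integer quadratic programming instance with the quadratic constraint $\vec{n}_1^T\matrixB_{\template}\vec{n}_1 \leq -1$. The paper's sketch simply instantiates $S_i$ at $\vec{n}_2$ and at $\vec{n}_1+\vec{n}_2$ and asserts that this forces $t\vec{n}_1+\vec{n}_2$ into $S_i$ for all $t$; your version unpacks this into the explicit constraints $A\vec{n}_1=0$, $A\vec{n}_2=b$, $C\vec{n}_1\leq 0$, $C(\vec{n}_1+\vec{n}_2)\leq d$, which is slightly more careful on the inequality side (in the paper's appendix the $S_i$ turn out to contain only equations, so the point is moot). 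One small wrinkle in your completeness argument: pigeonhole gives you an $S_i$ hit by infinitely many $t$, but not necessarily by $t=1$, so the clause ``the $t=1$ case $C(\vec{n}_1+\vec{n}_2)\leq d$'' is not immediate; this is harmless, since replacing $\vec{n}_2$ by $t_0\vec{n}_1+\vec{n}_2$ for the least such $t_0$ yields a witness pair satisfying all your constraints.
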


\begin{proof}[Proof sketch]
	We non-deterministically pick a template $\template$ of polynomial size (Lemma~\ref{l:short-negative-template}).
	Then, we non-deterministically pick a system $S_i$ for template (Lemma~\ref{l:balanced-inequalities}). We make two copies of $S_i$: $S_i^1$ and $S_i^2$; in one we
	substitute $\vec{n}$ with $\vec{n}_2$ and in the other with
	$\vec{n}_1 + \vec{n}_2$.
	It follows that for all $t \in \R$ the vector $t\vec{n}_1 + \vec{n}_2$ satisfies $S_i$
	(substituted for $\vec{n}$).
	Finally, we solve an instance of integer quadratic programming consisting of  $\vec{n}_1 \matrixB_{\template} \vec{n}_1 - 1 \leq 0$
	and linear equations and inequalities $S_i^1$ and $S_i^2$, which is an instance of integer quadratic programming and hence can be solved in $\NP$~\cite{del2017mixed}.
\end{proof}

\subsubsection{The linear case}

We consider the final case, where all templates are linear.
The decision procedure described in the following lemma answers YES
(in at least one of non-deterministic computations)
whenever the answer to the regular average-value problem with threshold $0$ is YES and
all templates are linear. Thus, it is complete.

Our main algorithm  assumes that all templates are linear if it fails to find a negative template.
The failure can be due to a wrong non-deterministic pick.
Having that in mind, we make sure that the decision procedure from the following lemma is sound regardless of the linearity of templates, i.e.,
if it answers YES, then the answer to the regular average-value problem with threshold $0$ is YES.

\begin{lemma}
	\label{l:linear-NP}
	Assume that all templates are linear.
	Then, we can solve the regular average-value problem with threshold $0$ in non-deterministic polynomial time.
	Moreover, the procedure is sound irrespectively of the linearity assumption.
\end{lemma}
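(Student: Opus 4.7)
The plan is to exhibit a non-deterministic polynomial-time procedure that guesses polynomial-size witnesses for the three conditions of Lemma~\ref{l:conditions-avg} and checks them directly; soundness is thereby automatic, and completeness (under the linearity hypothesis) will come from collapsing everything to a polynomial-size system of linear inequalities via a Carath\'eodory-type sparse-solution argument combined with repeated template minimalization.

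\textbf{The algorithm and soundness.} Non-deterministically guess (i)~a minimal template $\template^* = (\alpha_0^*, \beta_1^*, \ldots, \beta_{p^*}^*, \alpha_{p^*}^*)$ of polynomial size in $|\vass|+|f|$, (ii)~a polynomial-size representation of a path $\Path_1$ from some initial state to the first state of $\template^*$, and (iii)~a multiplicity vector $\vec{n}^* \in \N^{p^*}$ encoded in binary with a polynomial number of bits. Then verify in polynomial time that the three conditions (C1)--(C3) of Lemma~\ref{l:conditions-avg} hold for $\Path_1$ and $\Path_2 = \template^*(\vec{n}^*)$, computing the quadratic forms $\gain(\template^*(\vec{n}^*)) \cdot \select(\template^*(\vec{n}^*))$ and $\VAL_{\gain(\Path_1)}(\template^*(\vec{n}^*))$ in closed form via Lemma~\ref{l:template-value}. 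Since the checks literally enforce (C1)--(C3), every accepting guess produces a genuine witness to the regular average-value problem, so the procedure is sound irrespective of whether templates are in fact linear.

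\textbf{Completeness under linearity.} Assume the answer is YES. By Lemma~\ref{l:conditions-avg} there exist $\Path_1, \Path_2$ witnessing YES, and by Lemma~\ref{l:minimal-factorization} we may take $\Path_2 = \template(\vec{n})$ for a minimal template $\template$ and balanced $\vec{n} \in \N^p$. Linearity of $\template$ lets us substitute $\vec{n}^T \matrixB_\template \vec{n} = \vectorC_\template \cdot \vec{n} + \constC_\template$ in the formula of Lemma~\ref{l:template-value}, so condition (C3) becomes the linear inequality
\[
   (\vectorB_\template + \vectorC_\template) \cdot \vec{n} + 2\gain(\Path_1) \cdot \select(\template(\vec{n})) + (\constB_\template + \constC_\template) \leq 0,
\]
using that $\select(\template(\vec{n}))$ is linear in $\vec{n}$. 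By Lemma~\ref{l:balanced-inequalities}, the balance condition is equivalent to $\vec{n}$ satisfying one of polynomially many polynomial-size linear systems $S_i$; let $S_{i^*}$ be a system satisfied by $\vec{n}$. Thus $\vec{n}$ is a non-negative integer solution to a linear system with polynomially many constraints, so a vertex-solution argument (together with clearing denominators) yields a solution $\vec{n}'$ whose support has polynomial size. Dropping the cycles $\beta_i$ with $n'[i] = 0$, concatenating the adjacent connecting paths, and re-minimalizing via Lemma~\ref{l:minimal-factorization} yields a minimal template of strictly smaller size; iterating produces a polynomial-size minimal template $\template^*$ together with a polynomially encoded $\vec{n}^*$ that the algorithm can guess. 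The path $\Path_1$ admits a polynomial-size representation by the same sparse-solution reasoning, with any pumping cycles in $\Path_1$ absorbed as an outer guess.

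\textbf{Main obstacle.} The subtlest step is the iterated prune-and-reminimalize loop: deleting cycles lengthens connecting paths, and re-minimalization can reintroduce many new distinct simple cycles, so a crude potential need not decrease. The right monotone measure is the multiset of transitions appearing in the cycles of the current template — each sparsity pass strictly shrinks this multiset (since at least one cycle is removed) — and once the pass produces only polynomially many distinct simple cycles, the connecting paths are correspondingly short. A secondary technicality is ensuring the sparse solution can be taken over $\N$ rather than merely $\Q_{\geq 0}$; this is handled by clearing denominators, which does not destroy the support bound because all constraints are linear.
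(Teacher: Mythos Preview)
Your high-level plan and the soundness direction are fine and match the paper: guess a polynomial-size witness and verify (C1)--(C3) directly. The completeness argument, however, has a real gap in the sparsity step, and a smaller one in the handling of $\Path_1$.

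For sparsity, you claim that Lemma~\ref{l:balanced-inequalities} turns the balance condition into ``one of polynomially many polynomial-size linear systems $S_i$,'' and that a vertex argument on $\{S_{i^*},\ \text{linearized (C3)},\ \vec n\ge 0\}$ then yields $\vec n'$ of polynomial support. But Lemma~\ref{l:balanced-inequalities} only promises size polynomial in $|\template|+|\vass|+|f|$, and at this point $\template$ is the minimal template produced by Lemma~\ref{l:minimal-factorization}, whose number of cycles $p$---and hence the number of equations in each $S_P$, which is exactly $p$---can be exponential in $|\vass|$. A vertex of a polyhedron cut out by $p$ equalities in $p$ variables gives no support bound better than $p$, so your Carath\'eodory step collapses. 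The paper obtains small support by a genuinely different mechanism: Lemma~\ref{l:short-vectors} exploits that the linear map $\trans_{\template}\colon \N^p\to\N^{|\delta|}$ lands in a space of \emph{polynomial} dimension, and decomposes $\vec n$ as a $\Q^+$-combination of vectors $\vectorM_i$ with $|supp(\vectorM_i)|\le |\delta|$ and $\trans(\vectorM_i)=t\cdot\trans(\vec n)$. The second property preserves balance exactly (since $\gain$ and $\select$ factor through $\trans$), and linearity of $\VAL$ under the balance constraint lets one select some $\vectorM_i$ with $\VAL_{\vec g}(\template(\vectorM_i))\le 0$. Your Carath\'eodory intuition was right, but it has to be applied to $\trans$, not to the balance system $S_i$.

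Second, once a polynomial template $\template^*$ is in hand, you still need $\Path_1$ and $\vec n^*$ \emph{simultaneously} of polynomial encoding, and (C3) couples them through the bilinear term $\gain(\Path_1)\cdot\select(\template^*(\vec n^*))$. You wave this away with ``pumping cycles absorbed as an outer guess,'' but never argue why the required $\gain(\Path_1)$ is realizable with polynomial data once $\vec n^*$ is fixed, or vice versa. The paper does not separate the two: it encodes $\Path_1$ by transition multiplicities via an Euler-style linear system $S^Q_{\gain}$, treats $\vec y=\gain(\Path_1)$ as a vector of variables, and solves the resulting single instance of integer quadratic programming (one quadratic inequality containing the bilinear term $\vec y\cdot\select(\template^*(\vec n))$, the rest linear), invoking the $\NP$ bound for IQP from~\cite{del2017mixed}.
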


\begin{proof}[Proof sketch]
	First, we show that using Lemma~\ref{l:short-vectors}, if there is a cycle $\Path$ with
	(a)~$\gain(\Path) \cdot \select(\Path) = 0$ and
	(b)~$\VAL_{\vec{g}}(\Path) \leq 0$, then there is a cycle $\Path'$ defined by a template of polynomial size
	satisfying both conditions (a) and (b).

	Assume that for some $\vec{g} \in \Z^k$ we have
	\begin{enumerate}
		\item $\VAL_{\vec{g}}(\Path) \leq 0$ and
		\item $\gain(\Path) \cdot \select(\Path) = 0$.
	\end{enumerate}
	Then, there exists a minimal template $\template$ and a vector of multiplicities $\vec{n}$ such that
	$\template(\vec{n}) = \Path$ (Lemma~\ref{l:minimal-factorization}).
	As $\template$ is linear, then we can compute in polynomial time $\vectorC_{\template} \in \Z^p, \constC_{\template} \in \Z$, which are polynomial in $|\template|+|\vass|+|f|$ such that
	\[
		2\cdot\VAL_{\vec{g}}(\template(\vec{n})) =  \big(\vectorB_{\template} +  \vectorC_{\template}\big) \cdot \vec{n} + \constB_{\template} + \constC_{\template} + 2\cdot \vec{g}\cdot \select(\template(\vec{n}))
	\]

	We apply Lemma~\ref{l:short-vectors} to vector $\vec{n}$ and get $\vec{n} = \sum_{i=1}^{\ell} r_i \vectorM_i$, with $\vectorM_i, r_i$ satisfying the statement of Lemma~\ref{l:short-vectors}.
	Observe that for some $i$,  we have $\VAL_{\vec{g}}(\template(\vectorM_i)) \leq 0$.
	As $\trans(\vectorM_i) = t \cdot \trans(\vec{n})$, we have
	\[
		\gain(\template(\vectorM_i)) \cdot \select(\template(\vectorM_i)) = t^2 \cdot \big(\gain(\template(\vec{n})) \cdot \select(\template(\vec{n})) \big) = 0
	\]
	Therefore, $\template(\vectorM_i)$ is a balanced cycle with  $\VAL_{\vec{g}}(\template(\vectorM_i)) \leq 0$.
	As $supp(\vectorM_i) \leq |\delta|$,  we remove from $\template$ cycles that correspond to $0$ components in $\vectorM_i$.
	Again, $\template$ may not be minimal so, as in the proof of Lemma~\ref{l:minimal-factorization}, we extract simple cycles and iterate the process until we get $\template'$ of polynomial size such that
	$\template'$ with some multiplicities defines a cycle $\Path'$, which satisfies $\VAL_{\vec{g}}(\Path') \leq 0$ and
	$\gain(\Path') \cdot \select(\Path') = 0$.

	Second, consider a template $\template$ of polynomial size.
	We nondeterministically pick a subset of states $Q$ and write a system of equations $S_\gain^Q$ over variables  $\vec{x}, \vec{y}$ such that
	$\vec{x}, \vec{y}$ is a solution of $S_{\gain}^Q$ if and only if there exists a path $\Path_1$ satisfying
	(a)~$\vec{x}$ are multiplicities of transitions along $\Path_1$
	(b)~$\Path_1$ is from some initial state of $\vass$ to the first state of $\template$,
	(c)~$\Path_1$ visits all states from $Q$,
	(d)~$\vec{y} = \gain(\Path_1)$.
	To write $S_{\gain}^Q$ we specify that all states from $Q$ have equal in-degree and out-degree except for the initial state of $\vass$ , in which out-degree is greater than in-degree by $1$, and
	the first state of $\template$, where in-degree is greater than out-degree by $1$.

	In summary, if all templates are linear, then there exist $\Path_1, \Path_2$ defining a regular computation of the average value at most $0$ if and only if the following holds:
	there is a subset of states $Q$ and a template $\template$ of polynomial size such that the system of inequalities consisting of
	$S^Q_{\gain}$ and  the inequality
	\begin{equation}
		\big(\vectorB_{\template,\vec{0}} +  \vectorC_{\template}\big) \cdot \vec{n}+ \constB_{\template,\vec{0}} + \constC_{\template} + 2\cdot \vec{y} \cdot \select(\template(\vec{n})) \leq 0
		\label{eq:linear-case}
	\end{equation}
	has a solution over natural numbers. Note that all the components except for $\vec{y} \cdot \select(\template(\vec{n}))$ are linear.
	Since $\vec{y}$ and $\vec{n}$ are variables, the component $\vec{y} \cdot \select(\template(\vec{n}))$ is quadratic.
	Still, $S^Q_{\gain}$ with $\eqref{eq:linear-case}$ is an instance of integer quadratic programming, which can be solved in $\NP$~\cite{del2017mixed}.

	We have assumed that $\template$ is linear. However, we cannot check that deterministically in polynomial time.
	Still, linearity of $\template$ is necessary only for the completeness of the above procedure.
	We can make sure that the answer is sound irrespectively of that assumption.
	Having a solution of the system, we can compute in polynomial time $\gain(\Path_1)$ and $\VAL_{\gain(\Path_1)}(\template(\vec{n}))$ and then verify
	$\VAL_{\gain(\Path_1)}(\template(\vec{n})) \leq 0$.
\end{proof}

\subsubsection{Summary}

We present a short summary of the non-deterministic procedure deciding whether a given $\VASS(\Z,k)$ $\vass$ has a regular computation of the value at most $0$.
We assume that all states in $\vass$ are reachable from initial states.

\begin{itemize}[leftmargin=10pt,itemindent=15pt]

	\item \emph{ Step 1}. We check whether there is a cycle $\Path$ with $\gain(\Path) \cdot \select(\Path) < 0$.
	      We can do it in non-deterministic polynomial time (Lemma~\ref{l:MatrixNegative}).
	      If the answer is YES, then the answer to the average-value problem is YES (Lemma~\ref{l:cycles}). Otherwise, we proceed.

	\item \emph{ Step 2}. We check whether there exists a negative template in $\vass$.
	      We can do it in non-deterministic polynomial time (Lemma~\ref{l:negativity-np}).
	      If the answer is YES, then the answer to the regular average-value problem is YES (Lemma~\ref{l:negative-is-good}). Otherwise, we proceed.

	\item \emph{ Step 3}. Assuming that the previous steps failed, all templates are linear.
	      Then, we can solve the regular average-value problem in non-deterministic polynomial time (Lemma~\ref{l:linear-NP}).
	      Note that Step 2, could have failed due to unfortunate non-deterministic pick.
	      However, the procedure from Lemma~\ref{l:linear-NP} is sound regardless of the linearity assumption.
\end{itemize}

In consequence, we have the main result of this section:

\begin{lemma}
	\label{l:average-in-np}
	The regular average-value problem for $\VASSes(\Z)$ with (general) cost functions is in $\NP$.
\end{lemma}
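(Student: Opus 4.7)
The plan is to assemble the three-step nondeterministic decision procedure sketched in the Summary paragraph and verify that each step runs in nondeterministic polynomial time and that together they are both sound and complete. First I would reduce the general threshold $\lambda \in \Q$ to threshold $0$ by the standard shift construction (subtract $\lambda$ uniformly from the cost function, as in mean-payoff games; this is a polynomial reduction). From now on I assume $\lambda = 0$, and I preprocess $\vass$ by removing states that are not reachable from $Q_0$ in the underlying graph (linear time), so that any template found in the VASS automatically sits on a path from some initial state.

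Next I would case-split on the structure of regular computations using Lemma~\ref{l:cycles} and Lemma~\ref{l:conditions-avg}. In Step 1, I nondeterministically look for a cycle $\Path$ with $\gain(\Path) \cdot \select(\Path) < 0$; by Lemma~\ref{l:MatrixNegative} this is in $\NP$, and if such a cycle is found then appending it as $\Path_2$ to any path $\Path_1$ from $Q_0$ to its starting state yields, by Lemma~\ref{l:cycles}(1), a regular computation of value $-\infty \le 0$, so I answer YES. Otherwise I proceed to Step 2: guess a template of polynomial size and test whether it is negative. By Lemma~\ref{l:short-negative-template}, if any negative template exists then some negative template has polynomial size; by Lemma~\ref{l:negativity-np}, checking negativity reduces to an instance of integer quadratic programming of polynomial size (combining the quadratic inequality $\vec{n}_1^T \matrixB_{\template} \vec{n}_1 < 0$ with two copies of a system from Lemma~\ref{l:balanced-inequalities} substituted with $\vec{n}_2$ and $\vec{n}_1 + \vec{n}_2$), which is in $\NP$ by~\cite{del2017mixed}. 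By Lemma~\ref{l:negative-is-good}, success here gives a YES answer.

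In Step 3, if neither of the previous guesses succeeded on the current nondeterministic branch, I simply run the procedure of Lemma~\ref{l:linear-NP}, which itself is a polynomial-size nondeterministic guess of a subset $Q \subseteq Q$, a polynomial-size template $\template$, and a solution to a polynomial-size instance of integer quadratic programming built from $S^Q_{\gain}$ and inequality~\eqref{eq:linear-case}. The completeness of this step relies on the dichotomy of Lemma~\ref{l:negative-or-linear}(1): if no negative template exists, then all templates are linear, and under that hypothesis Lemma~\ref{l:linear-NP} is complete. Crucially, Lemma~\ref{l:linear-NP} is also \emph{sound} without the linearity hypothesis (the final check of $\VAL_{\gain(\Path_1)}(\template(\vec{n})) \leq 0$ is done deterministically from the guessed solution), so executing Step 3 on every branch that survived Steps 1 and 2 does not produce false positives.

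The main obstacle in making this plan rigorous is disentangling the nondeterminism across the three steps. Concretely, I must argue that the overall procedure is correct even when Step 2 fails on a particular branch merely because of an unlucky guess (so the ``all templates linear'' hypothesis is not actually guaranteed when Step 3 runs). The arguments for soundness and completeness therefore need to be stated on the \emph{existential} level: completeness follows because on some branch either Step 1, or Step 2, or (if no negative template exists globally) Step 3 succeeds; soundness follows because each individual step, including Step 3 under Lemma~\ref{l:linear-NP}, outputs YES only when witnesses $\Path_1, \Path_2$ verifying conditions (C1)--(C3) of Lemma~\ref{l:conditions-avg} can be explicitly reconstructed in polynomial time. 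Combined with Lemma~\ref{l:MatrixNegative} handling regular computations of value $-\infty$ for the \emph{regular finite-value} problem along the way, this yields both the $\NP$ upper bound for the regular average-value problem and, together with the $\NP$-hardness shown in Section~\ref{sec:np-hardness}, the $\NP$-completeness stated in Theorem~\ref{th:IntegerMain}.
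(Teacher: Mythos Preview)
Your proposal is correct and follows essentially the same three-step nondeterministic procedure as the paper's Summary (Section~\ref{sec:regular-value}): reduce to threshold $0$, prune unreachable states, then (Step~1) test for a cycle with $\gain\cdot\select<0$ via Lemma~\ref{l:MatrixNegative}, (Step~2) test for a negative template via Lemma~\ref{l:negativity-np}, and (Step~3) run the linear-case procedure of Lemma~\ref{l:linear-NP}. You also correctly isolate the one delicate point the paper highlights, namely that Step~3 must be sound even when Step~2 fails only due to an unlucky guess, and you resolve it exactly as the paper does by invoking the unconditional soundness clause of Lemma~\ref{l:linear-NP}.
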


\subsection{Hardness}
\label{sec:np-hardness}

We show that the regular finite-value and the regular average-value problems are $\NP$-hard.
The proof is via reduction from the 3-SAT problem.
Given a 3-CNF formula $\varphi$ over $n$ variables,
we construct a VASS of dimension $2n$, where dimensions correspond to literals in $\varphi$.
Each simple cycle $\Path$ in the VASS consists of two parts:
The first part corresponds to picking a substitution $\sigma$, which is
stored in the vector $\gain(\Path)$.
The second part ensures that $\gain(\Path) \cdot \select(\Path) =0$ if $\sigma$ satisfies $\varphi$ and
it is strictly positive otherwise.
Therefore, if $\varphi$ is satisfiable, the VASS has a regular run of the average cost $0$, and otherwise all its regular runs
have infinite average cost. In  consequence, we have the following:

\begin{lemma}
	\label{l:np-hardness}
	The regular finite-value and the regular average-value problems for $\VASSes(\Z)$ with (general) cost functions are $\NP$-hard.
\end{lemma}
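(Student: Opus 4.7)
The plan is to reduce 3-SAT to both problems by a single construction. Given a 3-CNF formula $\varphi = C_1\wedge\cdots\wedge C_m$ over variables $x_1,\ldots,x_n$, I would build a $\VASS(\Z,2n)$ $\vass_\varphi$ whose $2n$ counters are indexed by the literals $x_1,\neg x_1,\ldots,x_n,\neg x_n$. The underlying graph is a single oriented loop consisting of an assignment gadget followed by a check gadget, so that every simple cycle makes exactly one pass through the loop. The assignment gadget is a sequence of $n$ binary diamonds: for each $x_i$, one branch has update $e_{\neg x_i}$ (encoding $x_i{:=}1$, so $\neg x_i$ becomes false) and the other has update $e_{x_i}$ (encoding $x_i{:=}0$). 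The check gadget is a sequence of $m$ ternary diamonds, one per clause $C_j = \ell_{j,1}\vee\ell_{j,2}\vee\ell_{j,3}$, realised by three parallel branches $r_{j-1}\to t_{j,k}\to r_j$ with zero counter updates and with labelling $l(t_{j,k}) = e_{\ell_{j,k}}$; all other states carry the zero label. The initial state is the entry of the assignment gadget.

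The key invariant I would establish is that for every cycle $\Path$ in $\vass_\varphi$ one has $\gain(\Path)\cdot\select(\Path)\ge 0$, with equality iff the choices made in $\Path$ correspond to a satisfying assignment of $\varphi$. For a single-pass cycle this is immediate: $\gain(\Path)[\ell]=1$ exactly when literal $\ell$ is false under the induced assignment, while $\select(\Path)[\ell]$ counts how many clauses picked $\ell$ as witness, so
\[
\gain(\Path)\cdot\select(\Path) \;=\; \#\{\,j : \ell_{j,k_j}\text{ is false under the chosen assignment}\,\},
\]
which vanishes iff every clause has its chosen witness true. All entries of $\gain$ and $\select$ are nonnegative, so the product is always $\ge 0$.

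Combining this invariant with Lemma~\ref{l:cycles}, in the satisfiable case the single-pass cycle built from a satisfying assignment (with witnesses chosen among its true literals) yields a regular computation of finite value, whereas in the unsatisfiable case every cycle has $\gain\cdot\select>0$ and thus $\flimavg=+\infty$; this immediately gives $\NP$-hardness of the regular finite-value problem. For the regular average-value problem I would fix threshold $\lambda=0$ and observe that, in the satisfying case, the counters associated to always-true literals remain at $0$ throughout the run, so every visited state $t_{j,k}$ has cost $l(t_{j,k})\cdot\vec{z} = 0$; since all other states also carry the zero label, the witness cycle attains $\flimavg = 0\le 0$, while the unsatisfiable case still forces $\flimavg=+\infty>0$.

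The main obstacle will be ruling out spurious balanced cycles obtained by concatenating several passes that make \emph{different} choices in different passes. Here I would argue that $\gain(\Path)\cdot\select(\Path)=0$ forces, coordinate by coordinate, that any literal picked as witness in some pass must be \emph{never} made false in any pass, so the set of literals ever used as witnesses is a consistent partial assignment (it cannot contain both $x_i$ and $\neg x_i$) that already covers every clause and therefore extends to a satisfying assignment of $\varphi$. This Hall-type argument is the only nontrivial part; once it is in place, everything else follows from Lemma~\ref{l:cycles} and the explicit cost computation above.
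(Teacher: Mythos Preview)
Your proposal is correct and is essentially the paper's own construction: a single loop with an $n$-fold binary choice that records an assignment in $2n$ nonnegative counters, followed by an $m$-fold ternary choice whose state labels pick out the ``witness'' literal for each clause, so that $\gain(\Path)\cdot\select(\Path)=0$ iff the chosen witnesses are all satisfied. The only cosmetic difference is the sign convention (you increment the counter of the \emph{false} literal and label $t_{j,k}$ by $\ell_{j,k}$, while the paper increments the counter of the \emph{true} polarity and labels by $\neg\ell_{j,k}$); your componentwise argument for the multi-pass case is a slightly more explicit version of the paper's one-line observation that nonnegativity of all $\gain$ and $\select$ entries forces $\gain(\Path)\cdot\select(\Path)>0$ for every cycle when $\varphi$ is unsatisfiable.
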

\begin{proof}
	Let $\varphi$ be a propositional formula in $3$-CNF with $n$ variables and $l$ clauses.
	We construct a $\VASS(\Z,2n)$ $\vass_{\varphi}$ such that (1)~if $\varphi$ is satisfiable, then
	$\vass_{\varphi}$ has a regular computation of value $0$, and (2)~otherwise, all regular computations of $\vass_{\varphi}$ have infinite value.

	The VASS $\vass_{\varphi}$, presented in Figure~\ref{fig:np-hardness}, consists of two parts.
	The first part consists of the states $q_0, \ldots, q_n$, where $q_0$ is initial in $\vass_{\varphi}$. It corresponds to picking a variable substitution.
	For every $i \in \{0, \ldots, n-1\}$, there are two transitions $(q_{i-1},q_{i},\OneVector{i})$ and $(q_{i-1},q_{i},\OneVector{i+n})$, i.e., transitions from $q_{i-1}$ to $q_i$ incrementing the $i$-th or $(i+n)$-th counter.
	Upon reaching $q_n$, the vector of counter values $\vec{g}$ has exactly $n$ components with value $1$ while other have value $0$.
	The vector $\vec{g}$ defines variable valuation $\sigma_{\vec{g}}$ defined as follows.
	A variable $p_j$ is valuated to true, if $g[j] = 1$, and it is false if $g[j+n] = 1$.
	The cost function $f$ in the first part is $0$ for all states.

	In the second part the values of counters do not change, but the cost function $f$ is non-zero.
	We use this part to ensure that the average value is $0$ if and only if  $\sigma_{\vec{g}}$ satisfies $\varphi$.
	This part consists of $l$ layers, each consisting of $3$ states.
	At the $j$-th layer, corresponding to the clause $c_j = l_1 \vee l_2 \vee l_3$, we define the cost function $f$ on states $q_{j,1}, q_{j,2}, q_{j,3}$ such that
	$f(q_{j,i}, \vec{g})$ returns the value of the component of $\vec{g}$ corresponding to $\neg l_i$.
	It follows that $\sigma_{\vec{g}}$ satisfies $c_j$ if and only if
	for one of these $3$ states we have $f(q_{j,i}, \vec{g}) = 0$.
	Therefore, a single cycle over $\vass$ has the average $0$ if and only if $\sigma_{\vec{g}}$ satisfies $\varphi$.

	It follows that if $\varphi$ has a satisfying valuation, then $\vass_{\varphi}$ has a regular computation of the average value $0$.
	Conversely, if  $\varphi$ is unsatisfiable, then for every simple cycle $\Path$ we have $\gain(\Path) \cdot \select(\Path) > 0$.
	Since all weights are non-negative, it follows that for all cycles $\Path$ we have $\gain(\Path) \cdot \select(\Path) > 0$ and hence
	all regular computations of $\vass_{\varphi}$ have the value $+\infty$.

	\begin{figure}
		\begin{center}
			\begin{tikzpicture}[scale=0.83]

				\tikzstyle{state}=[circle,draw, minimum size=0.6cm,inner sep=0cm,scale=0.65]
				\node[state] (Q0)  at (0,0) {$q_0$};
				\node[state] (Q1)  at (1,0) {$q_1$};
				\node[state] (Q2)  at (2,0) {};
				\node[state] (Qnn) at (3,0) {};
				\node[state] (Qn)  at (4,0) {$q_n$};

				\node at (2.5,0) {$\ldots$};

				\node[state] (c11) at (5,1) {$q_{1,1}$};
				\node[state] (c12) at (5,0) {$q_{1,2}$};
				\node[state] (c13) at (5,-1) {$q_{1,3}$};

				\node[state] (c21) at (6,1) {$q_{2,1}$};
				\node[state] (c22) at (6,0) {$q_{2,2}$};
				\node[state] (c23) at (6,-1) {$q_{2,3}$};

				\node[state] (cl1) at (7.5,1) {$q_{l,1}$};
				\node[state] (cl2) at (7.5,0) {$q_{l,2}$};
				\node[state] (cl3) at (7.5,-1) {$q_{l,3}$};

				\node at (6.75,0) {$\ldots$};

				\node[state] (final) at (8.5,0) {};

				\draw[->,bend left] (Q0) to (Q1);
				\draw[->,bend left] (Q1) to (Q2);
				\draw[->,bend left] (Qnn) to (Qn);

				\draw[->,bend right] (Q0) to (Q1);
				\draw[->,bend right] (Q1) to (Q2);
				\draw[->,bend right] (Qnn) to (Qn);

				\draw[->] (Qn) to (c11);
				\draw[->] (Qn) to (c12);
				\draw[->] (Qn) to (c13);

				\draw[->] (c11) to (c21);
				\draw[->] (c11) to (c22);
				\draw[->] (c11) to (c23);

				\draw[->] (c12) to (c21);
				\draw[->] (c12) to (c22);
				\draw[->] (c12) to (c23);

				\draw[->] (c13) to (c21);
				\draw[->] (c13) to (c22);
				\draw[->] (c13) to (c23);

				\draw[->] (cl1) to (final);
				\draw[->] (cl2) to (final);
				\draw[->] (cl3) to (final);

				\draw[->,rounded corners=0.2cm] (final) -- ++(1,0) -- ++(0,2) -- ++(-9.5,0) -- (Q0);

				\draw [decorate,decoration={brace,amplitude=10pt,mirror,raise=4pt},yshift=0pt]
				(0,-1.3) -- (4,-1.3) node [black,midway,yshift=-0.8cm] {\footnotesize
					Part I};

				\draw [decorate,decoration={brace,amplitude=10pt,mirror,raise=4pt},yshift=0pt]
				(4.7,-1.3) -- (7.8,-1.3) node [black,midway,yshift=-0.8cm] {\footnotesize
					Part II};

			\end{tikzpicture}
		\end{center}
		\caption{The structure of $\vass_{\varphi}$}
		\label{fig:np-hardness}
	\end{figure}

\end{proof}

The main results of this section (Lemma~\ref{l:MatrixNegative}, Lemma~\ref{l:average-in-np} and Lemma~\ref{l:np-hardness}) summarize to Theorem~\ref{th:IntegerMain}.
We leave the case of non-regular runs as an open question.

\begin{openquestion}
	What is the complexity of the average-value and finite-value problems for $\VASSes(\Z)$?
	\label{open:non-regular}
\end{openquestion}

\section{General cost functions and $\VASSes(\N)$}

We show that the average-value and the regular average-value problems are undecidable.

The proofs are via reduction from the halting problem for Minsky machines~\cite{Hopcroft:2006:IAT:1196416}, which are automata
with two natural-valued \emph{registers} $r_1, r_2$.
There are two main differences between Minsky machines and $\VASSes(\N)$.
First, the former can perform zero- and nonzero-tests on their registers, while the latter can take any transition as long as the counters' values remain non-negative.
Second, the halting problem for Minsky machines is qualitative, i.e.,  the answer is YES or NO.
We consider quantitative problems for VASS, where we are interested in the values assigned to computations.
We exploit the quantitative features of our problems to simulate zero- and nonzero-tests.

\begin{theorem}
	The average-value and the regular average-value problems for $\VASSes(\N)$ with (general) cost functions are undecidable.
\end{theorem}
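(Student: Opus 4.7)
The plan is to reduce from the halting problem for 2-counter Minsky machines, which is undecidable. Given a Minsky machine $M$, I will construct a $\VASS(\N,k)$ $\vass_M$ together with a cost function $f_M$ such that $M$ halts if and only if $\vass_M$ has a (regular) computation $\comp$ with $\flimavg(f_M(\comp)) \leq 0$.

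The VASS $\vass_M$ uses two counters to encode Minsky's registers $r_1,r_2$, plus a few auxiliary counters. Each Minsky instruction is translated into a group of VASS states and transitions. Increments and guaranteed-nonzero decrements translate directly. For a zero-test instruction ``if $r_i=0$ then $q_A$ else $q_B$'', I provide a nondeterministic choice: the \emph{nonzero} branch is implemented by a decrement-then-increment gadget on $r_i$ that can fire only when $r_i>0$ and is therefore always consistent with Minsky semantics; the \emph{zero} branch routes through a claim-state $q^\star_i$ whose cost function is the coefficient vector $\vec{e}_i$, so that a correct zero-claim costs $0$ whereas a cheating claim costs $r_i>0$.

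For the forward direction, a halting run of $M$ is simulated faithfully by $\vass_M$ in finitely many steps without ever taking an incorrect zero-branch; the computation then enters a dedicated halt loop whose cost is identically $0$. The $\liminf$-average of this computation is $0$, giving a regular witness for the YES-instance.

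The backward direction is the main obstacle: for a non-halting $M$, I must rule out every VASS computation from achieving $\flimavg \leq 0$. A single incorrect zero-claim contributes cost $r_i$ at one step and vanishes in the long-run average, so the one-step penalty above does not suffice. The fix is twofold. First, every non-halt state carries a base cost of $1$, obtained by reading an auxiliary counter that is pegged at $1$ by an initial increment and left untouched thereafter, so that any computation that remains in simulation states forever has $\flimavg \geq 1$. Second, the \emph{zero} branch is augmented by a gadget that drains any positive value of $r_i$ into a monotone auxiliary counter $e$ before the simulation may proceed, and the halt loop's cost is set to $e$. Any cheating computation then either records $e>0$ permanently (so the halt-loop cost is at least $1$ forever, giving $\flimavg \geq 1$) or fails to fully drain $r_i$ and hence carries $r_i>0$ into subsequent simulation, diverging from Minsky semantics and never legitimately entering the halt loop. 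A case analysis on the possible traversals of the gadget shows that the halt loop with $e=0$ can be reached only by a faithful Minsky simulation, which is ruled out when $M$ does not halt. The regular version is immediate because the YES-witness is lasso-shaped, namely a finite simulation prefix followed by the halt self-loop; the same construction and analysis therefore establish undecidability of both the average-value and the regular average-value problems.
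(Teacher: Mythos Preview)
Your backward direction has a real gap. The drain gadget in the zero branch cannot be \emph{forced} to run to completion in a $\VASS(\N)$: the computation may enter the zero branch with $r_i>0$, skip the drain loop entirely (so $e$ remains $0$), and proceed. You then assert that carrying $r_i>0$ forward means ``never legitimately entering the halt loop,'' but nothing prevents the halt loop from being entered \emph{illegitimately} with $e=0$. Concretely, let $M$ consist of $q_0\colon\text{inc }r_1;\ \text{goto }q_1$ and $q_1\colon\text{if }r_1{=}0\text{ then HALT else dec }r_1;\ \text{goto }q_0$. Then $M$ loops forever from the initial configuration, yet in your $\vass_M$ one honest step reaches $q_1$ with $r_1=1$, the zero branch is taken without draining, and the halt loop is entered immediately with $e=0$, giving $\flimavg=0$. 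The promised case analysis cannot close this hole: the halt-loop cost reads only $e$, so a residual $r_i>0$ is invisible there, and the unfaithful suffix is under no obligation to avoid the HALT state.

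The paper avoids this trap by a different mechanism. It does not try to latch cheating into a separate error counter read only at the end; instead, the cost function reads the tracked counter \emph{at every state where the simulation asserts $r_i\in\{0,1\}$}, and the simulation is repeated inside the cycle rather than followed by an absorbing loop. Since all costs lie in $\N$, a regular computation with average $0$ must have cost $0$ at every position of its cycle, which forces each zero-claim along the cycle to be honest. For the non-regular variant, two auxiliary counters bound the lengths of successive simulation blocks so that honesty is enforced infinitely often. Your one-shot-simulation-then-absorbing-loop shape makes the argument harder, not easier: only the loop contributes to the long-run average, and your loop sees only $e$, so all the dishonesty accumulated in the finite prefix is washed out.
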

\begin{proof}[Proof sketch]
	We first sketch the reduction from the halting problem for Minsky machines
	to the \emph{regular} average-value problem for $\VASSes(\N)$. To ease the presentation,
	we use \buchi{} conditions, which can be eliminated.

	Consider a Minsky machine $\M$ and let $I$ be its set of instructions.
	We construct a $\VASS(\N)$ $\vass$, which for presentation purposes has transitions labeled with the alphabet $I \cup \{ \# \}$, that satisfies the following.
	Every regular computation labeled by $v u^{\omega} \in (I \cup \{ \# \})^{\omega}$ has the average value $0$
	if and only if $u$ is of the form $u_1 \# \ldots u_k \#$
	and all words $u_1,  \ldots, u_k$ encode terminating computations of $\M$.
	First, we can construct an automaton without counters, which verifies that for every subword of the form $\# v \#$, with $v \in I^*$,
	$v$ encodes a terminating computations provided that all zero and non-zero test are correct.
	We can ensure that the cycle contains at least one transition labeled by $\#$ using a \buchi{} condition.
	This condition can be also enforced with additional counters.
	Such an automaton can be then encoded in the states of $\vass$.

	Second, to verify correctness of instructions w.r.t. the registers, we use counters $c_1, c_2$
	in $\vass$, which track the values of the registers minus $1$, i.e., the value of
	$c_i$ in $\vass$ is the maximum of $0$ and the value of the $i$-th register $r_i$ in $\M$.
	The VASS $\vass$ stores in its state whether the value of $r_i$ is $0$, $1$ or the value of $c_i$ plus $1$.
	With that information, the VASS $\vass$ can verify correctness of the instructions of $\M$.

	However, transitions of $\vass$ do not depend on the values of counters and hence it can be in a state denoting that
	the value of $r_i$ is $0$ or $1$, while the value of $c_i$ is greater than $0$.
	To ensure that this is not the case, the cost function $f$ returns the value of $c_i$ whenever $r_i$
	is supposed to be $0$ or $1$.
	This suffices to prove the correctness in the regular average-value case.

	Consider a regular computation $\comp$ of the average $0$ labeled with $v (u_1 \# \ldots u_k \#)^{\omega}$.
	The values returned by the cost function are natural and hence for the average to by $0$, the values at all positions need to be $0$.
	It follows that whenever $\vass$ stipulates that the value of $r_i$ is $0$ or $1$, then
	the value of $c_i$ is $0$. Thus, all zero- and non-zero tests in $u_1, \ldots, u_k$ are correct and hence these words encode terminating computations of $\M$. This concludes the reduction the the \emph{regular} average-value problem for $\VASSes(\N)$.

	In the average-value problem however, a computation of the value $0$ may be labeled with a word of the form $u_1 \# u_2 \# \ldots$, where the lengths of $u_i$'s grow fast.
	In that case we cannot infer that there exists $i$ such that all values of $f$ along $u_i$ are $0$. However, using two additional counters,
	we can ensure that all words $u_i$ have length bounded by $|u_0|$.
	The first counter computes the value $|u_1|-|u_2|+|u_3|- \ldots$, while the second computes $|u_2| - |u_3| + |u_4|  - \ldots$.
	Note that both counters are non-negative only if $|u_2| \leq |u_1|$, $|u_3| \leq |u_2|$, $|u_4| \leq |u_1| - |u_2| + |u_3|$.
	The first three inequalities imply $|u_4| \leq |u_1|$. We can show by induction that for all $i$ we have
	$|u_i| \leq |u_0|$.
	Therefore, the average-value is $0$ only if there exists (exists infinitely many) words $u_i$
	such that all values along this word are $0$. Then again, this word corresponds to a terminating computation of $\M$.
	Conversely, if $\M$ has a terminating computation and $u$ is its encoding, then $\vass$ has a computation of the average value $0$ that is labeled with $(\# u)^{\omega}$.
\end{proof}

Even though, the problems with an exact threshold are undecidable, we can decide existence of a regular computation of some finite value via reduction  to reachability in VASS.
\begin{theorem}
	The regular finite-value problem for $\VASSes(\N)$ with (general) cost functions is decidable.
\end{theorem}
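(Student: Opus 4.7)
The plan is to reduce the regular finite-value problem to the configuration reachability problem for $\VASSes(\N)$, which is decidable by the classical results cited in the introduction.

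First, I would characterize regular computations $\comp$ of finite value in $\VASSes(\N)$. For $\beta^\omega$ to be a valid natural computation, $\gain(\beta) \geq \vec{0}$ componentwise, since otherwise some counter would eventually drop below zero. Because the labeling $l$ has non-negative coefficients, $\select(\beta) \geq \vec{0}$ and $f(\comp[t]) \geq 0$ at every position. An argument identical in structure to Lemma~\ref{l:cycles} shows that the sum of $f$-values over the $(i+1)$-th iteration of $\beta$ grows linearly in $i$ at rate $\gain(\beta) \cdot \select(\beta)$, so $\flimavg(f(\comp))$ is finite if and only if this rate equals $0$. Since both vectors are componentwise non-negative, the equality holds iff, for each coordinate $i$, either $\gain(\beta)[i] = 0$ or $\select(\beta)[i] = 0$. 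Grouping the coordinates by which alternative holds yields a subset $J \subseteq \{1, \ldots, k\}$ such that $\beta$ only visits states in $Q_J := \{q \in Q : l(q)[j] = 0 \text{ for all } j \in J\}$ and $\gain(\beta)[i] = 0$ for every $i \notin J$.

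Next, for each of the $2^k$ choices of $J$, I would encode the existence of such $\alpha, \beta$ as a single reachability query in an auxiliary $\VASS(\N)$ $\vass'_J$ with $2k$ counters: the originals $c_1, \ldots, c_k$ and shadows $d_1, \ldots, d_k$. Its control has three phases. In the \emph{pre} phase, $\vass'_J$ simulates $\vass$ while updating each $d_i$ by the same amount as $c_i$; the invariant $c_i = d_i$ keeps the shadows non-negative whenever the originals are. Nondeterministically, from some state $q \in Q_J$, $\vass'_J$ switches to the \emph{in-cycle} phase, recording $q$ in the control; this phase simulates $\vass$ restricted to $Q_J$, updating only the $c_i$'s while freezing the $d_i$'s. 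Upon returning to the recorded $q$, $\vass'_J$ switches to the \emph{cleanup} phase, which for each $i \notin J$ decrements $c_i$ and $d_i$ in lockstep, and for each $j \in J$ first decrements $c_j$ and $d_j$ in lockstep and then $c_j$ alone. The reachability target is the cleanup-done control state with all counters equal to zero. Reaching this target is equivalent to $c_i = d_i$ at the end of \emph{in-cycle} for $i \notin J$, which is exactly $\gain(\beta)[i] = 0$, together with $c_j \geq d_j$ for $j \in J$, which is automatic from $\gain(\beta) \geq \vec{0}$.

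The main technical point is arranging the shadow-counter discipline so that the per-coordinate disjunction ``$\gain(\beta)[i] = 0$ or $\select(\beta)[i] = 0$'' becomes a single, fixed reachability target, and ensuring that all auxiliary transitions respect the non-negativity semantics of $\VASSes(\N)$. Once this construction is in place, taking the disjunction of reachability queries over the $2^k$ choices of $J$ yields the decision procedure, and decidability follows from the decidability of configuration reachability for $\VASSes(\N)$.
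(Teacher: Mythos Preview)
Your proposal is correct and follows essentially the same route as the paper: characterize finite-value regular computations via the per-coordinate condition on $\gain(\beta)$ and $\select(\beta)$, then reduce to configuration reachability in a $2k$-counter $\VASS(\N)$ using a shadow-counter construction with simulate/record/compare phases. Your treatment is in fact more complete than the paper's sketch, which only spells out the case where all counters influence the cycle (your $J=\emptyset$) and leaves the enumeration over $J$ implicit.
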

\begin{proof}[Proof sketch]
	Note that a $\VASS(\N,k)$ $\vass$ has a regular computation of some finite value if and only if there exist finite paths $\Path_1, \Path_2$  such that
	$\Path_1 (\Path_2)^{\omega}$ corresponds to a computation and for all counters which influence the average in $\Path_2$,
	the values of this counters do not change in a full iteration over $\Path_2$.

	Assume that all counters influence $\Path_2$.
	We reduce the finite-value problem to the configuration reachability problem for $\VASS(\N, 2k)$.
	Such a VASS $\vass_{2k}$ first simulates $\vass$ maintaining two copies $r_{i,1}, r_{i,2}$ of each counter $c_i$. This
	phase of the $\vass_{2k}$ computation corresponds to $\Path_1$.
	Then, $\vass_{2k}$ nondeterministically picks a position, at it stores the current state of $\vass$ and switches to the mode
	where it modifies only counters $r_{i,1}$. This phase corresponds to $\Path_2$.
	Again, $\vass_{2k}$ non-deterministically picks the end of that phase and verifies that the last state is the same as the first of this phase.
	It remains to be verified that for all $i$ we have $r_{i,1} = r_{i,2}$. This is done by allowing to decrement $r_{i,1}, r_{i,2}$ simultaneously.
	Then, the final value is $\vec{0}$ if and only if for all $i$, at the end of the second phase we have $r_{i,1} = r_{i,2}$.
\end{proof}

The above argument relies on regularity of computations. We leave the general case as an open question.

\begin{openquestion}
	Is the finite-value problem for $\VASSes(\N)$ decidable?
	\label{open:finite-value}
\end{openquestion}

\section{VASS games}

The problems we studied correspond to finding a suitable path is a configuration-graph of a VASS, which can be considered as a single-player game on a VASS.
We briefly discuss a natural extension of our problems to two-player games on VASS, which are defined in a similar way to mean-payoff games on pushdown systems~\cite{CV17a}.
We consider infinite game arenas, in which the positions are the configurations of a VASS and moves correspond to transitions between configurations.
The states of the VASS are partitioned into Player-1 and Player-2 states and it
gives rise to the partition of configurations into Player-1 and Player-2 configurations.
On such an arena we define the \emph{cost} of each position as the cost of the configuration defined using the \emph{cost function}.
Finally, we consider two-player games with mean-payoff (cost) objectives;
As consequences of our results we have the following:
\begin{enumerate}
	\item As shown in Theorem~\ref{th:decidable-average-N}, configuration reachability reduces to the average-value problem, and for $\VASS(\N)$-games configuration reachability is undecidable~\cite{BrazdilJK10}.
	      Hence the average-value problem is undecidable for $\VASS(\N)$-games.
	\item Finite-state multidimensional total-payoff games straightforwardly reduce to the average-value problem for $\VASS(\Z)$-games; basically the counters track the total payoff and the general cost function specifies the multi-dimensional requirement.
	      Finite-state multidimensional total-payoff games are undecidable~\cite{Chatterjee0RR15}, and hence the average-value problem for $\VASS(\Z)$-games is undecidable.
\end{enumerate}

\section{Conclusion}
In this work we consider VASS with long-run average properties where the cost depends linearly on
the values of the counters.
For state-dependent and state-independent linear combinations
we establish decidability and complexity results for VASS with integer-valued and natural-valued counters.
We leave open some questions (see Open questions~\ref{open:natural}, \ref{open:non-regular}, and~\ref{open:finite-value}) which are interesting directions for future work.

\section*{Acknowledgment}
We would like to thank Petr Novotný for discussion on literature on VASS and the long-run average property.

\bibliographystyle{abbrv}
\bibliography{papers}

\clearpage

\section{Full proofs of selected lemmas}

We prove Lemma~\ref{l:cycles} as a consequence of the proof of Lemma~\ref{l:conditions-avg}, and hence we show this after the proof of Lemma~\ref{l:conditions-avg}.

\ConditionsAvg*
\begin{proof}
	Consider a regular computation $\comp$ let $\Path_1 (\Path_2)^\omega$ be the path corresponding to $\comp$.
	Assume that $\flimavg(f(\comp)) \leq 0$.
	Then, there is a sequence of positions $n$ such that $\frac{1}{n} \sum_{i=1}^n f(\comp[i])$ converges to $0$.
	Without loss of generality, we assume these positions are aligned with full occurrences of $\Path_2$, i.e.,
	$n = |\Path_1| + \ell \cdot |\Path_2|$.
	To see that, consider $r \in \{0,\ldots, |\Path_2|-1\}$ such that the infinite sequence of $n$'s
	has an infinite subsequence with the same remainder $r$ modulo $|\Path_2|$. Then, we can extend $\Path_1$ to $\Path_1'$ so that $|\Path_1'|$ modulo $|\Path_2|$ is $r$
	and rotate the cycle $\Path_2$.

	Therefore, we consider the sequence of partial averages
	\[
		\frac{1}{n} \sum_{i=1}^n f(\comp[i]) = \frac{1}{n} \big(\sum_{i=1}^{|\Path_1|} f(\comp[i]) +  \sum_{i=|\Path_1|+1}^{n} f(\comp[i])\big).
	\]
	where $n = |\Path_1| + \ell \cdot |\Path_2|$.
	First, observe that  as $n$ tends to infinity, $\frac{1}{n} (\sum_{i=1}^{|\Path_1|} f(\comp[i])$ converges to $0$.
	Therefore, $\frac{1}{n} \sum_{i=|\Path_1|+1}^{n} f(\comp[i])$ converges to $0$, and we study this term.

	Let $m = |\Path_2|$ and $\Path_2 = (q_1,q_2, \vec{y}_1) \ldots (q_m,q_1, \vec{y}_m)$.
	Observe that the value of the counters at the position $|\Path_1| + \ell \cdot|\Path_2| + i$ in $\comp$ equals
	\[
		\gain(\Path_1) + \ell \cdot\gain(\Path_2) + \sum_{j=1}^{i-1} \vec{y}_i,
	\]
	and the state at this position is $q_i$.
	Let $l : Q \to \N^k$ be the labeling defining $f$. Then, we have
	\[
		\sum_{i=1}^{m} f(\comp[|\Path_1|+(\ell m)+i]) =
		\sum_{i=1}^{m} \Big(\gain(\Path_1) + \ell \gain(\Path_2) +  \sum_{j=1}^{i-1} \vec{y}_j \Big)\cdot l(q_i)
	\]
	Note that
	\[
		\sum_{i=1}^{m} l(q_i) = \select(\Path_2)
	\]
	and hence
	\begin{equation}
		\sum_{i=1}^{m} f(\comp[|\Path_1|+(\ell m)+i]) = \gain(\Path_1) \cdot \select(\Path_2) +
		\ell \cdot \gain(\Path_2) \cdot \select(\Path_2) + \sum_{i=1}^{m} \Big(\sum_{j=1}^{i-1} \vec{y}_j \Big)\cdot l(q_i)
		\label{eq:parital-sum}
	\end{equation}

	Since $\gain(\Path_2) \cdot \select(\Path_2) = 0$ (Lemma~\ref{l:cycles}) we have
	\[
		\sum_{i=1}^{\ell m} f(\comp[|\Path_1|+i]) = \ell \VAL_{\gain(\Path_1)}(\Path_2).
	\]
	Since $\sum_{i=1}^{\ell m} f(\comp[|\Path_1|+i])$ tends to $0$ as $\ell \to \infty$ we have
	$\VAL_{\gain(\Path_1)}(\Path_2) \leq 0$.

	Therefore, $\flimavg(f(\comp)) \leq 0$ implies that $\VAL_{\gain(\Path_1)}(\Path_2) \leq 0$.
	Observe that the converse implication holds as well.
\end{proof}

As the consequence of the above proof we have the following:

\CycleCharacteristics*
\begin{proof}
	This follows directly from the equation \eqref{eq:parital-sum}:
	\begin{enumerate}
		\item if $\gain(\Path_2) \cdot \select(\Path_2) < 0$, then
		      $\lim_{\ell \to \infty} \sum_{i=1}^{m} f(\comp[|\Path_1|+(\ell m)+i]) = -\infty$ and hence
		      $\flimavg(f(\comp)) = -\infty$,
		\item if $\gain(\Path_2) \cdot \select(\Path_2) = 0$, then $\sum_{i=1}^{m} f(\comp[|\Path_1|+(\ell m)+i])$ is independent of $\ell$, and hence
		      $\flimavg(f(\comp)) = \sum_{i=1}^{m} f(\comp[|\Path_1|+i])$, which is finite, and
		\item if $\gain(\Path_2) \cdot \select(\Path_2) > 0$
		      $\lim_{\ell \to \infty} \sum_{i=1}^{m} f(\comp[|\Path_1|+(\ell m)+i]) = \infty$ and hence
		      $\flimavg(f(\comp)) = \infty$.
	\end{enumerate}
\end{proof}

\TemplateValue*
\begin{proof}
	First, observe that for all $\vec{g},\vec{h} \in \Z^k$ we have
	\begin{equation*}
		\begin{split}
			\VAL_{\vec{g}}(\PathTemp_1 \PathTemp_2) &=  \VAL_{\vec{g}}(\PathTemp_1)  + \VAL_{\vec{g}+\gain(\PathTemp_1)}(\PathTemp_2) \\
			\VAL_{\vec{g} + \vec{h}}(\PathTemp) &= \vec{g} \cdot \select(\PathTemp) + \VAL_{ \vec{h}}(\PathTemp).
		\end{split}
	\end{equation*}
	Therefore,
	\begin{equation*}
		\VAL_{\vec{0}}(\PathTemp_1 \PathTemp_2) = \VAL_{\vec{0}}(\PathTemp_1) + \VAL_{\vec{0}}(\PathTemp_2) + \gain(\Path_1) \cdot \select(\Path_2)
		\label{cycle-concat}
		\tag{$**$}
	\end{equation*}

	We repeatedly apply~\eqref{cycle-concat} to $\VAL_{\vec{0}}(\template(\vec{n}))$, where
	\[
		\template(\vec{n}) = \connect_0 \cycle_1^{n[1]} \connect_1 \cycle_2^{n[2]} \ldots \cycle_p^{n[p]} \connect_p
	\]
	and get
	\begin{equation}
		\begin{split}
			\VAL_{\vec{0}}&(\template(\vec{n})) = \VAL_{\vec{0}}(\connect_0) + \VAL_{\vec{0}}(\cycle_1^{n[1]}) + \ldots + \VAL_{\vec{0}}(\cycle_p^{n[p]})  + \VAL_{\vec{0}}(\connect_p) + \\
			+ \gain&(\connect_0)\cdot\select(\cycle_1^{n[1]}) + \gain(\connect_0 \cycle_1^{n[1]})\cdot\select(\connect_1) +\ldots +  \gain(\connect_0 \cycle_1^{n[1]} \ldots \cycle_p^{n[p]})\cdot\select(\connect_p)
		\end{split}
		\label{eq:first-template-identity}
	\end{equation}

	Observe that
	\begin{equation}
		\begin{split}
			\gain(\beta_i^{n[i]}) = n[i]&  \gain(\beta_i) \\
			\select(\beta_i^{n[i]}) = n[i]&  \select(\beta_i).
		\end{split}
		\label{eq:IdentitesTemplates}
	\end{equation}

	Therefore, by applying~\eqref{eq:IdentitesTemplates}
	to~\eqref{eq:first-template-identity} we get:
	\begin{equation}
		\begin{split}
			\label{eq:template-sum}
			2\VAL_{\vec{0}}(\beta_i^{n[i]}) = 2n[i] \VAL_{\vec{0}}(\beta_i) + n[i](n[i]-1) \gain(\beta_i) \cdot \select(\beta_i)
		\end{split}
	\end{equation}
	\begin{equation}
		\label{eq:template-connect}
		\begin{split}
			\gain(\connect_0 \cycle_1^{n[1]} \ldots \connect_{i-1}\cycle_i^{n[i]})\cdot\select(\connect_i) =
			\sum_{j=0}^{i-1} \gain(\connect_j)\select(\connect_i) + \sum_{j=1}^{i} n[j]\gain(\cycle_j)\select(\connect_i)
		\end{split}
	\end{equation}
	\begin{equation}
		\label{eq:template-cycle}
		\begin{split}
			\gain(\connect_0 \cycle_1^{n[1]} \ldots& \connect_{i})\cdot\select(\cycle_{i+1}^{n[i+1]}) = \\
			\sum_{j=0}^{i} n[i+1]\gain(\connect_j)\select(\cycle_{i+1})& +
			\sum_{j=1}^{i} n[j]n[i+1]\gain(\cycle_j)\select(\cycle_{i+1})
		\end{split}
	\end{equation}
	It follows that $2\VAL_{\vec{0}}(\template(\vec{n}))$ can be presented as a quadratic function of the form
	\begin{equation}
		2\cdot \VAL_{\vec{0}}(\template(\vec{n})) = \vec{n}^T \matrixB_{\template} \vec{n} + \vectorB_{\template} \vec{n} + \constB_{\template}
	\end{equation}
	Finally, the quadratic terms occur in \eqref{eq:template-sum} and \eqref{eq:template-cycle}.
	All quadratic terms in \eqref{eq:template-sum} are of the form
	$n[i]^2 \gain(\beta_i) \cdot \select(\beta_i)$ and all quadratic terms in \eqref{eq:template-cycle}
	are of the form
	$n[i]n[j] \gain(\beta_{j}) \cdot \select(\beta_{i})$, where $j<i$. Therefore, the matrix
	$\matrixB_{\template}$ is as in the statement.
\end{proof}

\Dichotomy*
\begin{proof}
	Consider a template $\template$ of the form
	\[
		\template = (\connect_0, \cycle_1, \ldots, \cycle_p, \connect_p)
	\]
	and a vector $\vec{n}$ such that
	$\template(\vec{n})$ is balanced ($\gain(\template(\vec{n})) \cdot \select(\template(\vec{n})) = 0$.)
	We define $\cycle_{p+1}$ as
	\[
		\cycle_{p+1}  = \template(\vec{0}) = \connect_0 \connect_1 \ldots \connect_p
	\]
	and we define $\vec{m} \in \Z^{p+1}$  as ${n}[i] = {m}[i]$ for $i \in \{1, \ldots p\}$ and $m[p+1] = 1$.
	We define an extended template $\template_1$ such that
	\[
		\template_1 = (\connect_0, \cycle_1, \ldots, \cycle_p, \connect_p,\cycle_{p+1},\epsilon).
	\]
	We distinguish three cases.
	\medskip

	\Paragraph{The case $\vec{m}^T  \matrixB_{\template_1} \vec{m} < 0$}.
	We show that $\template_1$ is negative.
	Consider vectors $\vec{n}_2, \vec{n}_2$ such that
	\[
		\begin{split}
			\vec{n}_1 &= (n[1], \ldots, n[p],1) = \vec{m}\\
			\vec{n}_2 &= (n[1], \ldots, n[p],0).
		\end{split}
	\]
	Then,  $\vec{n}_1^T  \matrixB_{\template_1} \vec{n_1} < 0$ and
	$\template_1(t\vec{n}_1 + \vec{n}_2)$ has the same multiset of edges as $\template(\vec{n})$ repeated $t+1$ times.
	Therefore,
	\[
		\begin{split}
			\gain(\template_1(t\vec{n}_1 + \vec{n}_2)) \cdot \select(\template_1(t\vec{n}+1 + \vec{n}_2)) = (t+1) \cdot \gain(\template(\vec{n})) \cdot \select(\template(\vec{n})) =0
		\end{split}
	\]
	and hence
	$\template_1$ is negative.
	\medskip

	\Paragraph{The case $\vec{m}^T  \matrixB_{\template_1} \vec{m} > 0$}.
	Similarly to the previous case, we can show that $\template_1$ is positive. However, we show that there exists a template $\template_2$, which is negative.
	To show that, we exhibit the connection between the equation $\gain(\template(\vec{n})) \cdot \select(\template(\vec{n})) = 0$ and $\vec{n}^T \matrixB_{\template_1} \vec{n}$.

	Since $\cycle_{p+1}  = \template(\vec{0})$ and $m[p+1] = 1$ we have
	\[
		\begin{split}
			\gain(\template(\vec{n})) &= \sum_{i=1}^{p+1} m[i] \cdot \gain(\cycle_i)\\
			\select(\template(\vec{n})) &= \sum_{i=1}^{p+1} m[i] \cdot \select(\cycle_i)
		\end{split}
	\]
	Therefore,
	\[
		\begin{split}
			\gain(\template(\vec{n})) \cdot \select(\template(\vec{n})) =
			\sum_{1\leq i,j \leq p+1} m[i][j] \gain(\cycle_i) \cdot \select(\cycle_j)
		\end{split}
	\]
	Observe that the last equation can be presented as a~quadratic form, i.e.,
	consider a matrix $\matrixC \in \Z^{(p+1) \times (p+1)}$ defined as for $i,j \in \{1,\ldots, p+1\}$ as
	\[
		\matrixC[i,j] = \gain(\cycle_i)\cdot\select(\cycle_j) + \gain(\cycle_j)\cdot\select(\cycle_i).
	\]
	Then,
	\[
		\gain(\template(\vec{n})) \cdot \select(\template(\vec{n})) = \vec{m}^T \matrixC \vec{m}  = 0.
	\]
	Recall, that for all $1 \leq i,j \leq p+1$ we have
	\[
		\matrixB_{\template_1}[i,j] =  \gain(\cycle_{min(i,j)}) \cdot \select(\cycle_{max(i,j)})
	\]

	Consider any template $\template_2$ with the same cycles as $\template_1$ but in the reverse order.
	Then, for all $1 \leq i,j \leq p+1$ we have
	\[
		\begin{split}
			\matrixB_{\template_2}[i,j] =  &\gain(\cycle_{max((p+2)-i,(p+2)- j)}) \cdot \\
			&\select(\cycle_{min((p+2)-i,(p+2)- j)})
		\end{split}
	\]
	and hence
	\[
		\matrixB_{\template_2}[(p+2)-i,(p+2)-j] = \gain(\cycle_{max(i,j)}) \cdot \select(\cycle_{min(i,j)}).
	\]
	It follows that $\matrixB_{\template_2}$ equals $\matrixC - \matrixB_{\template_1}$ transposed along anti-diagonal, i.e.,
	for all $1 \leq i,j \leq p+1$ we have
	\[
		(\matrixC - \matrixB_{\template_1})[i,j] = \matrixB_{\template_2}[(p+2)-i,(p+2)-j]
	\]

	Therefore, $\vec{m}^T  (\matrixC - \matrixB_{\template_1})  \vec{m}$, which is less than $0$, equals $\vec{m}_R^T  \matrixB_{\template_2} \vec{m}_R$, where
	$\vec{m}_R$ is the reversed vector $\vec{m}$.

	It remains to show that there exists such $\template_2$ of the size bounded by $|\template_1|^2$.
	We define $\template_2$ as $\cycle_{p+1}$ followed by $p$ iterations of $\template(\vec{0}) = \connect_0 \ldots \connect_p$ each with a signle occrrence of a cycle $\cycle_i$ in the order
	$\cycle_{p}, \cycle_{p}, \ldots, \cycle_1$. Formally, for $1 \leq i \leq j \leq p+1$ we define
	\[
		\connect[i,j] = \connect_i \connect_{i+1} \ldots \connect_{j}
	\]
	We define $\template_2$ as follows:
	\[
		\begin{split}
			\template_2 := (\epsilon,\cycle_{p+1},\connect[0,p-1],\cycle_{p},\connect[p]\connect[0,p-2] ,\ldots,\connect[2,p+1]\connect[1],\cycle_1,\connect[1,p+1]),
		\end{split}
	\]
	Observe that the path $\template_2(\vec{0}) = \cycle_{p+1}^{p}$, i.e., it is the same as $\template_1(\vec{0})^{p} = \template(\vec{0})^{p}$.
	Therefore, the size of $\template_2$ is bounded by $|\template_1|^2$.
	The template $\template_2$ is not minimal, but it has not been a requirement.

	The vectors
	\[
		\begin{split}
			\vec{k}_1 = p\vec{m}_R =  &(p,pn[p], \ldots, pn[1]) , \textrm{ and} \\
			\vec{k}_2 =  &(0,pn[p], \ldots, pn[1])
		\end{split}
	\]
	witness $\template_2$ being negative.
	Indeed, first
	\[
		(\vec{k}_1)^T  \matrixB_{\template_1} (\vec{k}_1)  = p^2 \vec{m}_R^T  \matrixB_{\template_2} \vec{m}_R < 0.
	\]
	Second, $\template_2(t\vec{k}_1 + \vec{k}_2)$ has the same multiset of edges as $\template(\vec{n})$ repeated
	$p(t+1)$ times.
	Therefore, as in the previous case, we have
	\[
		\gain(\template_2(t\vec{k}_1 + \vec{k}_2)) \cdot \select(\template_2(t\vec{k}_1 + \vec{k}_2)) = 0.
	\]
	It follows that $\template_2$ is negative.
	\medskip

	\Paragraph{The case $\vec{m}^T  \matrixB_{\template_1} \vec{m} = 0$}.
	We show that $\template$ is linear.
	Observe that $\matrixB_{\template}$ is obtained from $\matrixB_{\template_1}$ by deletion of the $(p+1)$-th row and the $(p+1)$-th column.
	Therefore,
	\[
		\begin{split}
			\vec{n}^T \matrixB_{\template} \vec{n} = \vec{m}^T  \matrixB_{\template_1} \vec{m}
			- \sum_{i=1}^p \gain(\cycle_i) \cdot \select(\cycle_{p+1}) -\\
			\sum_{i=1}^p \gain(\cycle_{p+1}) \cdot \select(\cycle_{i})
			- \gain(\cycle_{p+1}) \cdot \select(\cycle_{p+1}).
		\end{split}
	\]
	Therefore, we define $\vectorC_{\template}$ for $1 \leq i \leq p$ as
	\[
		\vectorC_{\template}[i] = - \big(\gain(\cycle_i) \cdot \select(\cycle_{p+1}) + \gain(\cycle_{p+1}) \cdot \select(\cycle_{i}) \big),
	\]
	and we define $\constC_{\template} = - \gain(\cycle_{p+1}) \cdot \select(\cycle_{p+1})$.
	Since $\vec{m}^T  \matrixB_{\template_1} \vec{m} = 0$, we get
	\[
		\vec{n}^T \matrixB_{\template} \vec{n} = \vectorC_{\template} \cdot \vec{n} + \constC_{\template}.
	\]
	Therefore, $\template$ is linear.

	Finally, if no template is negative, then all templates are linear.
\end{proof}

The following lemma is a vital part of the proof of Lemma~\ref{l:short-vectors}.
Recall that for a given template $\template$,
we define $\trans_{\template}(\vec{n}) \in \N^m$, where $m  = |\delta|$,
as the vector of multiplicities of transitions in $\template(\vec{n})$.
The function $\trans_{\template}$ is linear.
We write $\trans(\vec{n})$ if $\template$ is clear from the context.

\begin{lemma}
	\label{l:subvector}
	Let $\template$ be a template.
	For every vector $\vec{n} \in \N^p$, there exists $\vec{n}_0 \in \N^{p}$ such that
	\begin{enumerate}[label=(\arabic*)]
		\item $supp(\vec{n}_0) \leq m$,
		\item $supp(\vec{n}_0) \subseteq supp(\vec{n})$, and
		\item there exists $t \in \N^+$ such that $\trans(\vec{n}_0) = t \trans(\vec{n})$.
	\end{enumerate}
\end{lemma}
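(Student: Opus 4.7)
The plan is to apply Carath\'eodory's theorem for convex cones. First I observe that $\trans = \trans_{\template}$ is a linear function from $\N^p$ to $\N^m$, so letting $v_i = \trans(\vec{e}_i)$ with $\vec{e}_i$ the $i$-th standard basis vector, I can write
\[
\trans(\vec{n}) = \sum_{i \in supp(\vec{n})} n[i] \cdot v_i.
\]
Each $v_i$ is a non-negative integer vector in $\N^m$, so $\trans(\vec{n})$ lies in the conic hull (over $\Q^+$) of the finite set of vectors $\{v_i : i \in supp(\vec{n})\} \subseteq \Q^m$, sitting inside the ambient space of dimension $m$.

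Next I invoke Carath\'eodory's theorem for convex cones, which says that any element of a conic hull of vectors in $\Q^m$ can be written as a non-negative rational combination of at most $m$ of those vectors (indeed, of a linearly independent subset). Applying this to $\trans(\vec{n})$, I obtain a set $S \subseteq supp(\vec{n})$ with $|S| \leq m$ together with non-negative rationals $c_i \in \Q_{\geq 0}$ for $i \in S$ such that
\[
\trans(\vec{n}) = \sum_{i \in S} c_i \, v_i.
\]

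Finally I clear denominators to enforce integrality. Let $t \in \N^+$ be any common denominator of the $c_i$ (for $\vec{n} = \vec{0}$ take $\vec{n}_0 = \vec{0}$, $t = 1$). Set $n_0[i] = t \cdot c_i \in \N$ for $i \in S$ and $n_0[i] = 0$ for $i \notin S$. Then $\vec{n}_0 \in \N^p$ satisfies $supp(\vec{n}_0) \subseteq S \subseteq supp(\vec{n})$, so both (1) and (2) hold since $|S| \leq m$, and by linearity of $\trans$ we have $\trans(\vec{n}_0) = \sum_{i \in S} n_0[i] \, v_i = t \cdot \trans(\vec{n})$, establishing (3).

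There is no serious obstacle in this argument: it is a straightforward application of a classical fact, and the only care needed is the denominator-clearing step to produce an integer witness $t$ and an integer vector $\vec{n}_0$. The lemma is essentially a statement that the rational conic geometry of $\trans$ can be realized by a sparse integer preimage up to a positive integer scaling.
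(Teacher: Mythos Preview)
Your proof is correct and is essentially the same approach as the paper's: the paper carries out the conic Carath\'eodory reduction by hand (iteratively using a linear dependence among $m+1$ of the $\trans(\vec{e}_i)$ to zero out one coordinate while preserving $\trans^*$), whereas you simply invoke the theorem by name. The denominator-clearing step is identical in both proofs.
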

\begin{proof}
	We define $\trans^* \colon \Q^p \to \Q^m$ as the unique linear extension of $\trans$.
	Assume that $supp(\vec{n}) > m$. Then, there are $m+1$ components $i_1, \ldots, i_{m+1}$ of $\vec{n}$
	such that  $n[i_1],\ldots,n[i_{m+1}]$ are non-zero and
	$\trans^*(\OneVector{i_1}), \ldots, \trans^*(\OneVector{i_{m+1}})$ are linearly dependent.
	It follows that  $\{i_{1}, \ldots, i_{m+1} \}$ can be partitioned into $I_1, I_2$ and there exist
	non-zero vectors $\vec{y}_1,\vec{y}_2 \in \Q^p$ with the support included in $I_1$ and $I_2$ respectively.
	such that $\trans^*(\vec{y}_1) = \trans^*(\vec{y}_2)$.
	Consider the maximal $c \in \Q^+$ such that all components of $\vec{x} = \vec{n} - c \vec{y}_1 + c \vec{y}_2$ are
	non-negative, i.e., $\vec{x} \in (\Q^{\geq 0})^{p}$.
	Observe that $\trans^{*}(\vec{x}) = \trans^{*}(\vec{n})$, $supp(\vec{x}) \subseteq supp(\vec{n})$ and
	at least one component of $\vec{n}$ becomes $0$ in $\vec{x}$.

	We iterate this process until we get a vector $\vec{x}_0 \in (\Q^{\geq 0})^{p}$
	such that
	\begin{enumerate}[label=(\arabic*)]
		\item $supp(\vec{x}_0) \leq m$,
		\item $supp(\vec{x}_0) \subseteq supp(\vec{n})$, and
		\item $\trans^{*}(\vec{x}_0) = \trans^{*}(\vec{n})$.
	\end{enumerate}
	Finally, let $t \in \N$ be such that $t \vec{x}_0 \in \N^p$.
	Observe that $\vec{n}_0 = t \vec{x}_0$ satisfies the statement.
\end{proof}

\ShortVectors*
\begin{proof}
	We show the lemma by induction on $|supp(\vec{n})|$.
	Clearly, if $|supp(\vec{n})| \leq m$ the statement holds.

	Consider $\vec{n} \in \N^p$ and assume that for all $\vec{n}' \in \N^p$ with $|supp(\vec{n}')| < |supp(\vec{n})|$ the lemma holds.
	Let $\vec{n}_0$ be the vector from Lemma~\ref{l:subvector}.
	Consider the maximal $r \in \Q^+$ such that $\vec{n} - r \vec{n}_0$ has all the components non-negative.
	Let $r = \frac{p}{q}$ and consider $\vec{k} = q\vec{n} - p \vec{n}_0$.
	Observe that $\vec{k} \in \N^p$ and $|supp(\vec{k})| < |supp(\vec{n})|$ and hence
	there exist $s_1, \ldots, s_{\ell} \in \Q^+$ and $\vec{y}_1, \ldots, \vec{y}_{\ell} \in \N^p$
	such that for all $1 \leq i \leq \ell$
	\begin{enumerate}[label=(\arabic*)]
		\item $supp(\vec{y}_i) \leq m$ (the number of transitions of $|\vass|$),
		\item there exist $t \in \N^+$ such that $\trans(\vec{y}_i) = t \cdot \trans(\vec{k})$, and
		\item $\vec{k} = \sum_{j=1}^{\ell} s_j \vec{y}_j$.
	\end{enumerate}

	Recall that there is $t_0 \in \N^+$ such that $\trans(\vec{n}_0) = t_0 \trans(\vec{n})$. It follows that
	\[
		\trans(\vec{k}) = (q - t_0) \trans(\vec{n})
	\]
	Therefore,
	\[
		r_1 = \frac{1}{q}s_1, \ldots, r_{\ell} = \frac{1}{q}s_{\ell}, r_{\ell+1} = \frac{p}{q}
	\]
	and
	\[
		\vectorM_1 = \vec{y}_1, \ldots, \vectorM_{\ell} = \vec{y}_{\ell}, \vec{n}_0
	\]
	satisfy the statement of this lemma.
\end{proof}

\ShortNegativeTemplate*
\begin{proof}
	Assume that a template $\template_1$ is positive (resp., negative).
	First, we show that positivity (resp., negativity) of $\template_1$ is witnessed by vectors $\vec{k}_1, \vec{k}_2$ with at most $m$ non-zero components.

	Let $\vec{n}_1, \vec{n}_2 \in \N^p$ be the vectors witnessing positivity (resp., negativity).
	Let $r_1, \ldots, r_{\ell} \in \Q^+$ and $\vectorM_1, \ldots, \vectorM_{\ell}$ be coefficients and vectors for $\vec{n}_1$ satisfying
	Lemma~\ref{l:short-vectors}. Then, either for some $i$ we have $\vectorM_i \matrixB_{\template} \vectorM_i \neq 0$,
	and we define $\vec{p}_1$ as $\vectorM_1^i$ or for some pair $i \neq j$
	\[
		(r_i\vectorM_i + r_j\vectorM_j) \matrixB_{\template} (r_i\vectorM_i + r_j\vectorM_j) \neq 0
	\] and we define
	$\vec{p}_1 = a(r_i\vectorM_i + r_j\vectorM_j)$, where $a$ is a natural number such that $ar_i, ar_j \in \N$.
	Note that one of these conditions holds; if they both fail, then
	due to $\vec{n}_1 = \sum_{i=1}^{\ell} r_i \vectorM_i$, we have
	$\vec{n}_1 \matrixB_{\template} \vec{n}_1 = 0$ contrary to the assumptions.

	Let $\vec{p}_2$ be a vector with $supp(\vec{m}_2) \leq |\delta|$ and
	$\trans(\vec{p}_2) = t\trans(\vec{n}_2)$ (which exists by Lemma~\ref{l:short-vectors}).
	Finally, we define  $\vec{k}_1 = a \vec{p}_1$
	and $\vec{k}_1 = b \vec{p}_2$ for $a,b \in \N^+$ such that for some $t \in \N$ we have
	\begin{align*}
		\trans(\vec{k}_1) & = t\trans(\vec{n}_1) \\
		\trans(\vec{k}_2) & = t\trans(\vec{n}_2)
	\end{align*}
	Observe that $\vec{k}_1, \vec{k}_2$ also witness template $\template_1$ being positive or negative, but
	$|supp(\vec{k}_1)|,|supp(\vec{k}_2)| \leq m$.

	We remove from $\template_1$ cycles corresponding to coefficient $0$ in both $\vec{k}_1$ and $\vec{k}_2$.
	We get a template with polynomially many cycles. If all connecting paths are still bounded by $|Q|$ we terminate with $\template_1$, of polynomial size.
	However, as we remove some cycles, some connecting path are concatenated and in the result we get connecting paths longer than $|Q|$.
	For such connecting path, we extract simple cycles and group them (as in Lemma~\ref{l:minimal-factorization}).
	We get another minimal template $\template_2$ shorter than $\template_1$, which is positive or negative.
	By iterating this process, we get a polynomial-size template $\template^{\neq}$, which is positive or negative.
	Then, by Lemma~\ref{l:negative-or-linear} that there exists a negative template of polynomial size.
\end{proof}

\BalancedLinearInequalities*
\begin{proof}
	Recall the construction of $\template_1$ from the proof of Lemma~\ref{l:negative-or-linear}.
	We repeat this construction and define a symmetric matrix $\matrixA \in \Z^{(p+1) \times (p+1)}$
	such that for all $\vec{n} \in \N^p$ we have
	\[
		2 \gain(\template(\vec{n})) \cdot \select(\template(\vec{n})) = \vec{n}_1^T \matrixA \vec{n}_1
	\]
	where $\vec{n}_1 = (n[1], n[2], \ldots, n[p],1)$.
	We define $\cycle_{p+1} = \connect_0 \connect_1 \ldots \connect_p$ and
	the $\matrixA$ is defined for all $1 \leq i,j \leq p+1$ as
	\[
		\matrixA[i,j] = \gain(\cycle_{i})\cdot\select(\cycle_j) + \gain(\cycle_{j})\cdot\select(\cycle_i).
	\]

	We have assumed that $\vass$ does not have a regular run of the value $-\infty$ and hence
	for all $\vec{m} \in \N^{p+1}$ we have $\vec{m}^T \matrixA \vec{m} \geq 0$.
	We eliminate variables using the standard quadratic formula.
	We start with $m[1]$ and observe that
	\[
		\vec{m}^T \matrixA \vec{m} = A_1 (m[1])^2 + B_1 m[1] + C_1
	\]
	where $A_1 = \matrixA[1,1]$,
	$B_1 = 2\sum_{i=2}^{p+1} \matrixA[1,i]$, and
	$C_1 = \vec{n}_0^T \matrixA \vec{n}_0$ where $\vec{m}_0 = (0,m[2], m[3], \ldots, m[p+1])$.

	We compute $\Delta = (B_1)^2 - 4A_1C_1$ and assume that $\Delta \geq 0$.
	Observe that there are four cases:

	\begin{itemize}[leftmargin=10pt,itemindent=15pt]
		\item \textbf{The case: $A_1 = B_1 = 0$}. We observe that this is possible only if for all $i$ we have $\matrixA[1,i] = \matrixA[i,1] = 0$.

		      Suppose that there is $i$ such that $\matrixA[1,i] \neq 0$. Then, since $B_1 = 0$ there is $j$ such that
		      $\matrixA[1,j] \leq -1$. Consider $\vec{m}_0 \in \N^{p+1}$ such that $m_0[1] = \matrixA[1,j]^2$, $m_0[j] = 1$ and all the other components are $0$.
		      Then,
		      $\vec{m}_0^T \matrixA \vec{m}_0 < 0$, which contradicts the assumptions.

		\item \textbf{The case: $A_1 = 0$ and $B_1 \neq 0$}. In this case, we argue that $m[1] = 0$.
		      Indeed, we have $C_1 \geq 0$. Now, if $B_1 < 0$, then for $\vec{m}_t = (t,m[2],\ldots, m[p+1])$
		      and $t\in\N^+$ big enough, we get $\vec{m}_t^T \matrixA \vec{m}_t < 0$ contrary to the assumptions.
		      Therefore, $B_1 > 0$. It follows that $ B_1 m[1] + C_1 = 0$ if and only if $m[1] = 0$ and $C_1 = 0$.

		\item \textbf{The case $A_1 \neq 0$ and $\Delta > 0$}. Then, we argue that $m[1] = 0$.
		      Indeed, if $\vec{x} \in (\R^{\geq 0})^{p+1}$ is such that $x[1] > 0$, then for
		      some small $\epsilon \in (-\frac{x[1]}{2},\frac{x[1]}{2})$ and the vector $\vec{x}_1 = (m[1]+\epsilon, m[2], \ldots, m[p+1])$ we have $\vec{x}_1^T \matrixA \vec{x}_1 < 0$.
		      It follows that there exists $\vec{y} \in (\Q^{\geq 0})^{p+1}$ close to $\vec{x}_1$ such that
		      $\vec{y}^T \matrixA \vec{y} < 0$. Finally, for some $t \in \N$ we have $t \vec{y} \in \N^{p+1}$
		      and
		      \[
			      (t\vec{y})^T \matrixA (t\vec{y}) = t^2 \big(\vec{y}^T \matrixA \vec{y} \big) < 0
		      \]
		      which contradicts the assumption $\vec{m}^T \matrixA \vec{m} \geq 0$.

		\item \textbf{The case: $A \neq 0$ and $\Delta = 0$}. Then, $m[1] = \frac{-B_1}{2A_1}$.
	\end{itemize}

	We have the same identities for all components of $\vec{m}$. Now, we construct
	linear inequalities $S_i$ for $\vec{n}$.

	First, without loss of generality we assume that $\matrixA$ does not have $\vec{0}$ rows or columns. Such rows and columns in the symmetric $\matrixA$ correspond to variables, which occur only with the $0$ coefficient.
	Second, recall that we assumed that for $1 \leq i \leq p$ we have $m[i] = n[i]$ and $m[p+1] = 1$. We substitute $m[p+1]$ with $1$ in all the equations on $\vec{n}$.
	Third, for every subset $P \subseteq \{1,\ldots, p\}$, we define $S_P$ such that $P = \{ i \colon n[i] = 0\}$.

	Consider $P \subseteq \{1,\ldots, p\}$. We define $S_P$ as
	\[
		\begin{split}
			&\{ n[i] = 0 \colon i \in P\} \cup \\
			&\{ \matrixA[i,i]n[i] = -\big(\sum_{j\neq i} \matrixA[i,j]n[j]\big) - \matrixA[i,p+1] \colon i \notin P\}
		\end{split}
	\]
	We assume that $i,j \in \{1, \ldots, p\}$ which is omitted for readability.

	First, note that for every solution $\vec{n} \in \N$ to $S_P$ we have
	$\vec{n}_1 = (n[1], n[2], \ldots, n[p],1)$ satisfies $\vec{n}_1^T \matrixA \vec{n}_1 = 0$.
	We can remove all rows and columns with indexes from $P$ as their contribution to $\vec{n}_1^T \matrixA \vec{n}_1$ is $0$.
	Let $\matrixA'$ be the matrix and $\vec{k}$ be the vector resulting from removal of components from  $P$.
	Then, we observe that $\matrixA' \vec{k}$ is the $\vec{0}$ vector. Indeed, for every $i$ the equation
	\[
		\matrixA[i,i]n[i] = \sum_{j\neq i} \matrixA[i,j]n[j] + \matrixA[i,p+1]
	\]
	implies that the $i$-th component of $\matrixA \vec{n}_1$ is $0$. Thus, all components of the minor $\matrixA'$ with $\vec{k}$ are $0$.

	Finally, we argue that for every $\vec{m}$ satisfying $\vec{m}^T \matrixA \vec{m} = 0$
	with $m[p+1] = 1$, the vector $\vec{n} = (m[1], \ldots, m[p])$ solves some $S_P$.
	Indeed, we consider $P = \{ i \colon n[i] = 0\}$ and observe that due to the above considerations for all
	$i \in \{1,\ldots,p\} \setminus P$ we need to have $2A_i n[i] = {-B_i}$ as it is stated in $S_P$.
\end{proof}

\end{document}